\definecolor{darkgreen}{rgb}{0.0,0,0.9}
\renewcommand{\P}[1]{{\mathbb{P}}\left[#1\right]}
\newcommand{\E}[1]{{\mathbb{E}}\left[#1\right]}
\newcommand{\st}{\mbox{\rm s.t. }}
\providecommand{\norm}[1]{\left\lVert#1\right\rVert}
\def\equationautorefname~#1\null{%
  equation~(#1)\null
}
\newcommand{\threeops}[6]{
	\mathop{#1\vphantom{#3}\vphantom{#5}}\limits_{#2\vphantom{#4}\vphantom{#6}}
	\mathop{#3\vphantom{#5}\vphantom{#1}}\limits_{#4\vphantom{#6}\vphantom{#2}}
	\mathop{#5\vphantom{#1}\vphantom{#3}}\limits_{#6\vphantom{#2}\vphantom{#4}}
}
\declaretheorem[numberwithin=section]{theorem}
\declaretheorem[sibling=theorem]{lemma}
\declaretheorem[sibling=theorem]{claim}
\declaretheorem[sibling=theorem]{corollary}
\declaretheorem[sibling=theorem]{fact}
\declaretheorem[sibling=theorem]{definition}
\newenvironment{proofof}[1]{{\medbreak\noindent \em Proof of #1.  }}{\hfill\qed\medbreak}
\def\bone{{\bf 1}}
\def\eps{{\epsilon}}
\def\R{\mathbb{R}}
\def\C{\mathbb{C}}
\def\Z{\mathbb{Z}}
\def\bx{{\bf x}}
\def\cB{{\cal B}}
\def\cF{{\cal F}}
\def\cM{{\cal M}}
\def\cB{{\cal B}}
\def\by{{{\bf{y}}}}
\def\bx{{{\bf{x}}}}
\def\ba{{{\bf{a}}}}
\def\bb{{{\bf{b}}}}
\def\bz{{{\bf{z}}}}
\def\bc{{{\bf{c}}}}
\def\bw{{{\bf{w}}}}
\def\balpha{{{\bm{\alpha}}}}
\def\bbeta{{{\bm{\beta}}}}
\def\bdelta{{{\bm{\delta}}}}
\def\btalpha{{\bm{\tilde{\alpha}}}}
\def\bkappa{{{\bm{\kappa}}}}
\def\blambda{{{\bm{\lambda}}}}
\def\bty{{\bf{\tilde{y}}}}
\def\btz{{\bf{\tilde{z}}}}
\def\tz{\tilde{z}}
\newcommand{\complexity}[1]{\langle #1\rangle}
\DeclareMathOperator{\rank}{rank}
\DeclareMathOperator{\poly}{poly}
\DeclareMathOperator{\argmax}{argmax}
\DeclareMathOperator{\image}{Im}
\DeclareMathOperator{\per}{per}
\DeclareMathOperator{\conv}{conv}
\DeclareMathOperator{\OPT}{OPT}
\DeclareMathOperator{\newt}{newt}
\DeclareMathOperator{\supp}{supp}
\begin{document}

\title{A Generalization of Permanent Inequalities and\\ Applications in Counting and Optimization}

\author[1]{Nima Anari}
\author[2]{Shayan Oveis Gharan\thanks{This material is based on work supported by the National Science Foundation Career award.}}

\affil[1]{\small Stanford University\\ \texttt{anari@stanford.edu}}
\affil[2]{\small University of Washington\\ \texttt{shayan@cs.washington.edu}}

\date{}

\maketitle
\begin{abstract}
A polynomial $p\in\R[z_1,\dots,z_n]$ is real stable if it has no roots in the upper-half complex plane.	Gurvits's permanent inequality gives a lower bound on the coefficient of the $z_1z_2\dots z_n$ monomial of a real stable polynomial $p$ with nonnegative coefficients. This fundamental inequality has been used to attack several counting and optimization problems. 

	Here, we study a more general question: Given a stable multilinear polynomial $p$ with nonnegative coefficients and a set of monomials $S$, 
		we show that if the polynomial obtained by summing up all monomials in $S$ is real stable, 
	 then we can  lowerbound the sum of coefficients of monomials of $p$ that are in $S$. We also prove generalizations of this theorem to (real stable) polynomials that are not multilinear. We use our theorem to give a new proof of Schrijver's  inequality on the number of perfect matchings of a regular bipartite graph, generalize a recent result of Nikolov and Singh \cite{NS16}, and give deterministic polynomial time approximation algorithms for several counting problems. 
\end{abstract}
\thispagestyle{empty}
\newpage
\setcounter{page}{1}
\section{Introduction}
Suppose we are given a polynomial $p\in \R[z_1,\dots,z_n]$ with nonnegative coefficients.
For a vector $\bkappa\in \Z_+^n$ we write $C_p(\bkappa)$ to denote the coefficient of the monomial $\prod_{i=1}^n \bz^\bkappa$ in $p$.
Given a set $S$ of $\bkappa$'s, we want to estimate
$$ \sum_{\bkappa \in S} C_p(\bkappa).	$$
Let $\partial_{z_i}$ be the operator that performs partial differentiation  in $z_i$, i.e., $\partial/\partial z_i$.
We can rewrite the above quantity using a differential operator. 
Let 
$$ q(\partial\bz)=\sum_{\bkappa\in S} \prod_{i=1}^n \partial^{\kappa_i}_{z_i}$$
Then, 
$$\sum_{\kappa \in S} \bkappa! C_p(\bkappa)=q(\partial\bz)p(\bz)|_{\bz=0}.	
$$
More generally, given two polynomials $p,q\in \R[z_1,\dots,z_n]$ with nonnegative coefficients we analytically and algorithmically study the quantity
\begin{equation}\label{eq:pqcorrelation}
q(\partial\bz)p(\bz)|_{\bz=0} = p(\partial\bz)q(\bz)|_{\bz=0} = \sum_{\bkappa\in\Z_+^{n}} \bkappa!C_p(\bkappa)C_q(\bkappa).	
\end{equation}
for a large class of polynomials known as real stable polynomials.

Before stating our results, 
let us motivate this question. 

\paragraph{Permanent:} For a matrix $A\in\R_+^{n\times n}$ recall that the permanent of $A$ is
$$ \per(A)=\sum_{\sigma\in\mathbb{S}_n} \prod_{i=1}^n A_{i,\sigma(i)}.$$ 
Then, $\per(A)$ is $C_p(\bone)$ of the polynomial $p$,
$$ p(z_1,\dots,z_n)=\prod_{i=1}^n \sum_{j=1}^n A_{i,j} z_j.$$
\paragraph{Matroid Intersection:} Let $\cM_1([n],I_1)$ and $\cM_2([n],I_2)$ be two matroids on elements $[n]=\{1,2,\dots,n\}$ with set of bases $\cB_1,\cB_2$ respectively. Let
\begin{eqnarray*} 
p(\bz)&=&\sum_{B\in\cB_1} \prod_{i\in B} z_i,\\
q(\partial\bz)&=& \sum_{B\in\cB_2} \prod_{i\in B}\partial_{z_i}.
\end{eqnarray*}
Then, $p(\bz)q(\partial\bz)|_{\bz=0}$ is the number of bases in the intersection of $\cM_1$ and $\cM_2$.

We say a polynomial $p(z_1,\dots,z_n)\in\R[z_1,\dots,z_n]$ is {\em real stable} if it has no roots in the upper half complex plane.
Theory of stable polynomials recently had many applications in multiple areas of mathematics and computer science \cite{Gur06,MSS13,AO14,AO15}. Gurvits used this theory to give a new proof of the van der Waerden conjecture \cite{Gur06}. Most notably he proved that for any $n$-homogeneous real stable polynomial $p(z_1,\dots,z_n)$ with nonnegative coefficients and $q=\partial z_1\dots \partial z_n$,
\begin{equation}
\label{eq:gurvits}
e^{-n} \inf_{\bz>0} \frac{p(z_1,\dots,z_n)}{z_1\dots z_n}\leq q(\partial\bz)p(\bz)|_{\bz=0}\leq \inf_{\bz>0} \frac{p(z_1,\dots,z_n)}{z_1\dots z_n}.
\end{equation}
In other words, the coefficient of the monomial $z_1\dots z_n$ is at least $e^{-n} \inf_{z>0} \frac{p(z_1,\dots,z_n)}{z_1\dots z_n}$.
One can use this inequality to give a deterministic $e^{-n}$ approximation algorithm for the permanent of a nonnegative matrix.

In this paper we prove bounds analogous to \eqref{eq:gurvits} for the quantity $p(\bz)q(\partial\bz)$ when $p,q$ are real stable polynomials.
We discuss several applications in counting and optimization. We expect to see many more applications in the future.

\subsection{Our Contributions}

For two vectors $\bx,\by\in\R^n$ we write $\bx\by$ to denote the $n$ dimensional vector where the $i$-th coordinate is $x_iy_i$. For vectors $\bx,\by\in \R^n$ we write
$\bx^\by:=\prod_{i=1}^d x_i^{y_i}.$
We say a polynomial $p\in\R[z_1,\dots,z_n]$ is multilinear if $C_p(\bkappa)= 0$ for $\bkappa\notin\{0,1\}^n$. For a vector $\balpha\in \R^n$, we write $\balpha\geq 0$ to denote $\balpha\in \R_{+}^n$.


\begin{theorem}\label{thm:main}
	For any two real stable polynomials $p,q\in\R[z_1,\dots,z_n]$ with nonnegative coefficients,
	\begin{equation}\label{eq:main}
		\adjustlimits\sup_{\balpha\geq 0}  \inf_{\by,\bz>0} e^{-\balpha} \frac{p(\by)q(\bz)}{(\by\bz/\balpha)^\balpha} \leq q(\partial\bz)p(\bz)|_{\bz=0} \leq \adjustlimits\sup_{\balpha\geq 0}\inf_{\by,\bz>0}\frac{p(\by)q(\bz)}{(\by\bz/\balpha)^\balpha}.
	\end{equation}
	Furthermore, if $p,q$ are multilinear then 
	\begin{equation}\label{eq:multilinearmain}   \adjustlimits\sup_{\balpha\geq 0}   \inf_{\by,\bz>0} (1-\balpha)^{1-\balpha} \frac{p(\by)q(\bz)}{(\by\bz/\balpha)^\balpha} \leq q(\partial\bz)p(\bz)|_{\bz=0}.
	\end{equation}
\end{theorem}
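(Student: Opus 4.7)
My plan starts from the symmetric bilinear identity
\[
q(\partial\bz)p(\bz)|_{\bz=0}\;=\;\sum_\bkappa \bkappa!\,C_p(\bkappa)\,C_q(\bkappa),
\]
together with the observation that the auxiliary polynomial $P(\bz,\bw):=p(\bz)q(\bw)\in\R[z_1,\dots,z_n,w_1,\dots,w_n]$ is real stable in $2n$ disjoint variables with nonnegative coefficients, and its joint capacity factors as $\inf_{\by,\bz>0}\frac{p(\by)q(\bz)}{(\by\bz)^\balpha}=\mathrm{Cap}_\balpha(p)\cdot\mathrm{Cap}_\balpha(q)$. Thus the three bounds in the theorem are precisely $\balpha^\balpha\cdot\mathrm{Cap}_{(\balpha,\balpha)}(P)$ decorated with Stirling-type factors $1$, $e^{-|\balpha|}$, and $(1-\balpha)^{1-\balpha}$ respectively, and the task is to control $L:=q(\partial\bz)p(\bz)|_{\bz=0}$ against them.

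For the \emph{upper bound} I plan to select $\balpha^\star:=\sum_\bkappa \bkappa\,\mu(\bkappa)$ with respect to the natural probability measure $\mu(\bkappa):=\bkappa!\,C_p(\bkappa)C_q(\bkappa)/L$ on multi-indices. Starting from the trivial estimates $C_p(\bkappa)\le p(\by)/\by^\bkappa$ and $C_q(\bkappa)\le q(\bz)/\bz^\bkappa$ (from nonnegativity of coefficients), taking logs and averaging against $\mu$, and handling the $\bkappa!$ factor via Jensen's inequality on the convex map $x\mapsto x\log x$ together with a Stirling-style entropy correction, I expect to deduce $L\le (\balpha^\star)^{\balpha^\star}\mathrm{Cap}_{\balpha^\star}(p)\mathrm{Cap}_{\balpha^\star}(q)$, which is exactly the stated $\sup_\balpha\inf_{\by,\bz}$ bound evaluated at $\balpha=\balpha^\star$.

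For the \emph{lower bound} I follow a Gurvits-style capacity-preservation argument applied to $P$. In the multilinear case $L=\sum_{\bkappa\in\{0,1\}^n}C_P(\bkappa,\bkappa)$, which is extracted by applying the differential operator $\prod_{i=1}^n(1+\partial_{z_i}\partial_{w_i})$ to $P$ and evaluating at $\bz=\bw=0$. Processing one coordinate pair $(z_i,w_i)$ at a time via an Asano/Lieb--Sokal-type contraction (if the bilinear form $A+Bz_i+Cw_i+Dz_iw_i$ is real stable in $(z_i,w_i)$ then so is $A+Dt$ in $t$), I will argue that each such step preserves real stability of the remaining polynomial and decreases $\mathrm{Cap}_{(\balpha,\balpha)}$ by at most the Nikolov--Singh factor $\alpha_i^{\alpha_i}(1-\alpha_i)^{1-\alpha_i}$; telescoping over $i=1,\dots,n$ then yields the multilinear bound \eqref{eq:multilinearmain}. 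For the general (non-multilinear) case, I first polarize $p$ and $q$ into multi-affine real stable polynomials on enlarged variable sets (polarization preserves stability), apply the multilinear bound there, and descend back; the Stirling factor $e^{-|\balpha|}$ in place of $(1-\balpha)^{1-\balpha}$ arises from the ratio $\bkappa!/\bkappa^\bkappa$ introduced by depolarization. The main obstacle I anticipate is the quantitative capacity-decrement inequality at each Asano step: one must solve a two-variable convex optimization for the form $A+Bz_i+Cw_i+Dz_iw_i$ (real stable iff $AD\le BC$) and verify that the replacement by $A+D$ costs exactly $\alpha_i^{\alpha_i}(1-\alpha_i)^{1-\alpha_i}$ of the capacity; once this building block is established the rest is routine telescoping.
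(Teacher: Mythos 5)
Your lower-bound plan has essentially the right architecture (extract the diagonal with $\prod_i(1+\partial_{z_i}\partial_{w_i})$, process one coordinate pair at a time, do a two-variable optimization, handle the non-multilinear case by polarization), and it is in fact close to the paper's argument. But there is an essential error at its core. The operator $1+\partial_{z_i}\partial_{w_i}$ does \emph{not} preserve real stability: its Borcea--Br\"and\'en symbol is $1+z_iw_i$, which vanishes at $z_i=w_i=i$, so it is not a stability preserver. What it does preserve is the weaker \emph{bistability} of $P(\bz,\bw)=p(\bz)q(\bw)$, namely stability of $P(\bz,-\bw)$, and this is exactly what the paper propagates through the induction (its Lemma~\ref{lem:1+xy}, which relies on the identity relating $(1+\partial_y\partial_z)$ on $p(\by,\bz)$ to $(1-\partial_y\partial_z)$ on $p(\by,-\bz)$). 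The sign matters downstream: for the multilinear form $A+Bz+Cw+Dzw$, plain stability gives $AD\le BC$ (this is the condition you wrote), whereas bistability gives $AD\ge BC$, i.e.\ $bc\le ad$ in the paper's notation. The factorization step---raise $B,C$ until $BC=AD$, write $A+Bz+Cw+Dzw$ as a product of affine factors, apply weighted AM--GM---only works in the bistable direction $BC\le AD$, since one raises $B,C$ (which only increases the capacity and hence only strengthens the inequality one is proving). With $AD\le BC$ one would have to \emph{lower} $B,C$, which pushes the inequality the wrong way. For the very first contraction the product structure makes $AD=BC$ and the issue is invisible, but after one step the product structure and plain stability are both gone, and only bistability with $AD\ge BC$ survives. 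Your non-multilinear reduction by polarization is the same device the paper uses, so that part is fine once bistability replaces stability throughout.

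Your upper-bound plan takes a genuinely different route, but as stated it does not close. The paper proves the upper bound by a minimax step (Sion's theorem applied to the concave-in-$\balpha$, convex-in-$(\log\by,\log\bz)$ objective), which converts $\sup_\balpha\inf_{\by,\bz}$ into $\inf_\by\sup_\balpha\inf_\bz$; having fixed $\by$, one can then take $\balpha=\bkappa$ inside the sum over $\bkappa$, and the $\bkappa^\bkappa$ factors appear term-by-term. You instead want to exhibit a single $\balpha^\star=\sum_\bkappa\bkappa\,\mu(\bkappa)$ with $\mu(\bkappa)=\bkappa!\,C_p(\bkappa)C_q(\bkappa)/L$ and verify the bound for it directly. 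Chasing your chain of inequalities (bound $C_p(\bkappa)\le p(\by)/\by^\bkappa$, $C_q(\bkappa)\le q(\bz)/\bz^\bkappa$, average $\log$'s against $\mu$, then try to absorb $\bkappa!$ and the entropy term into $(\balpha^\star)^{\balpha^\star}$), the step you actually need is
$H(\mu)+\sum_\bkappa\mu(\bkappa)\log\bkappa!\le\langle\balpha^\star,\log\balpha^\star\rangle$,
and this already fails for $p(y)=q(y)=1+y$: there $\mu$ is uniform on $\{0,1\}$, $\balpha^\star=1/2$, the left side is $\log 2$, and the right side is $-\tfrac12\log 2$. The final inequality is of course still true there (one checks $L=2$ while $\inf_{y,z>0}\tfrac{(1/2)^{1/2}(1+y)(1+z)}{(yz)^{1/2}}=2\sqrt 2$), but the route via term-by-term coefficient bounds plus Jensen is too lossy to prove it. You would need to either bring in the minimax step or find some other argument that actually uses real stability (not merely nonnegativity of coefficients, which is all your upper-bound sketch invokes).
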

Note that when $\|\balpha\|_\infty>1$, the term $(1-\balpha)^{1-\balpha}$ is ill-defined in the above theorem. However in that case we can replace it by any number, since by \autoref{fact:newton}, the LHS will be zero.

Before describing our main applications of the above theorem let us derive Gurvits's inequality \eqref{eq:gurvits}.
Let $q(\bz)=z_1z_2\dots z_n$. Then, it follows that the supremum is obtained by $\balpha=\bone$. For that $\balpha$ we have
$$ \inf_{\by,\bz>0} \frac{p(\by)q(\bz)}{(\by\bz/\balpha)^\balpha}=\inf_{\by>0} \frac{p(\by)}{y_1\dots y_n}.$$
In \autoref{sec:applications} we discuss several applications of the above theorem. The following two theorems are the main consequences that we  prove. 
Firstly, we give an algorithm  to approximate $q(\partial\bz)p(\bz)|_{\bz=0}$ for two given stable polynomials $p,q$ with nonnegative coefficients. We use this to give deterministic approximate counting algorithms.
In our results, we use convex programs involving certain polynomials. We make sure these convex programs can be approximately solved in time polynomial in the complexity of the involved polynomials. For a polynomial $p\in \R[z_1,\dots, z_n]$ with nonnegative coefficients, we define its complexity as $n+\deg p+|\log \min_{\bkappa:C_p(\bkappa)\neq 0} C_p(\bkappa)|+|\log \max_{{\bkappa:C_p(\bkappa)\neq 0}} C_p(\bkappa)|$, and we denote this by $\complexity{p}$.

\begin{restatable}{theorem}{counting}
	\label{thm:counting}
Given two  real stable polynomials $p,q\in\R[z_1,\dots,z_n]$ with nonnegative coefficients	
and an oracle that for any $\bz\in \R^n$, returns $p(\bz),q(\bz)$,
there is a polynomial time algorithm that  returns an $e^{\min\{\deg p,\deg q\}+\epsilon}$ approximation of  $q(\partial\bz)p(\bz)|_{\bz=0}$ that runs in time $\poly(\complexity{p}+\complexity{q}+\log(1/\epsilon))$.
\end{restatable}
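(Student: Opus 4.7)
\medskip\noindent\emph{Proof plan.}\enspace
The plan is to reduce $q(\partial\bz)p(\bz)|_{\bz=0}$ to a concave maximization via \autoref{thm:main}, and then solve that maximization via the ellipsoid method using the evaluation oracle.

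Let $G(\balpha) := \inf_{\by,\bz > 0} p(\by)q(\bz)/(\by\bz/\balpha)^\balpha$. By \eqref{eq:main}, for every $\balpha\ge 0$ we have $e^{-\|\balpha\|_1}G(\balpha)\le q(\partial\bz)p(\bz)|_{\bz=0}\le\sup_{\balpha'\ge 0}G(\balpha')$. If $\balpha^*$ attains the outer supremum and $M:=G(\balpha^*)$, then $M$ is an $e^{\|\balpha^*\|_1}$-approximation of $q(\partial\bz)p(\bz)|_{\bz=0}$. By \autoref{fact:newton} we may restrict $\balpha$ to $\newt(p)\cap\newt(q)$, on which $\|\balpha\|_1\le\min\{\deg p,\deg q\}$; this yields the $e^{\min\{\deg p,\deg q\}}$ factor, with an extra $e^\epsilon$ arising from solving the program only to relative accuracy~$\epsilon$.

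Substituting $y_i = e^{u_i}$ and $z_i = e^{v_i}$ inside the inner infimum and taking logs, we rewrite
\[
\log G(\balpha) \;=\; \inf_{u}\bigl[\log p(e^u)-\balpha\cdot u\bigr] \;+\; \inf_{v}\bigl[\log q(e^v)-\balpha\cdot v\bigr] \;+\; \sum_i \alpha_i \log\alpha_i.
\]
Each inner infimum is a convex minimization, since $\log p(e^u)$ is a log-sum-exp and hence convex in $u$, and can be computed to relative accuracy via the evaluation oracle and the ellipsoid method, with subgradients supplied by finite differences. The crucial structural property for the outer maximization is that $\log G(\balpha)$ is concave in $\balpha$ on $\newt(p)\cap\newt(q)$. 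Each inner infimum is concave in $\balpha$ by Legendre duality, but the entropy term $\sum_i \alpha_i \log \alpha_i$ is convex, so this concavity requires the real stability of $p$ and $q$ to supply enough ``extra concavity'' in the capacity terms to absorb the entropy; it fails without stability, as one checks directly for $p(z)=1+z^2$.

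Granting concavity of $\log G$, a second (outer) ellipsoid iteration over $\balpha\in\newt(p)\cap\newt(q)$ computes $M$ to relative accuracy $\epsilon$, with outer subgradients supplied by the envelope theorem applied to the two inner programs. Bounding the magnitudes of all iterates in terms of the coefficient bit-complexities $\langle p\rangle$ and $\langle q\rangle$ is routine, and gives total running time $\poly(\langle p\rangle+\langle q\rangle+\log(1/\epsilon))$. The main obstacle is precisely establishing the concavity of $\log G$ on the Newton polytope: this is where real stability enters essentially, and is the conceptual heart of the algorithm beyond the sandwich provided by \autoref{thm:main}.
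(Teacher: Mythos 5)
Your overall plan matches the paper's: sandwich $q(\partial\bz)p(\bz)|_{\bz=0}$ between $e^{-\min\{\deg p,\deg q\}}\sup_\balpha G(\balpha)$ and $\sup_\balpha G(\balpha)$ using \autoref{thm:main} and \autoref{fact:newton}, then compute $\sup_\balpha G(\balpha)$ by two nested ellipsoid methods. Your decomposition $\log G(\balpha)=\inf_u[\log p(e^u)-\balpha\cdot u]+\inf_v[\log q(e^v)-\balpha\cdot v]+\sum_i\alpha_i\log\alpha_i$ is also correct. But you then write ``Granting concavity of $\log G$,'' and this is precisely the step you leave unproved. Your framing --- two Legendre conjugates (concave in $\balpha$) fighting a convex entropy term, with stability needed to ``absorb'' the entropy --- is evocative, but it is not an argument. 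The paper's argument is clean and you are missing it: by \autoref{lem:alphanum} one can push $\balpha^\balpha$ into one factor, writing $G(\balpha)=\inf_{\by,\bz>0}\tfrac{p(\by)q(\balpha\bz)}{(\by\bz)^\balpha}$. After the substitution $\by\leftrightarrow\exp(\by)$, $\bz\leftrightarrow\exp(\bz)$, the integrand $\log p(\exp\by)+\log q(\balpha\exp\bz)-\balpha\cdot(\by+\bz)$ is concave in $\balpha$ for each fixed $\by,\bz$ by \autoref{cor:guler} (log-concavity of stable polynomials with nonnegative coefficients on the positive orthant); an infimum of concave functions is concave. In the language of your decomposition, $-\phi_q^*(\balpha)+\sum_i\alpha_i\log\alpha_i=\log\inf_{\bz>0}\tfrac{q(\balpha\bz)}{\bz^\balpha}$ is concave by Guler, and $-\phi_p^*(\balpha)$ is concave for free, so the entropy is absorbed wholesale into one conjugate rather than balanced between both. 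Without this rearrangement your proposal does not establish the concavity it relies on.

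A secondary issue: you call the iterate-bounding and oracle constructions ``routine.'' In the paper this is a genuine piece of work (\autoref{sec:convex}): the inner infimum is over an unbounded domain, and bounding it to a box of radius $\poly(\complexity p+\log(1/\eps))$ while losing only a $(1+\eps)$ factor uses the jump-system structure of $\supp(p)$ (\autoref{fact:jump-system}, \autoref{lem:bounded-domain}); the separation oracle for $\newt(p)\cap\newt(q)$ goes through bisubmodular function minimization (\autoref{fact:bisubmodular-separation}). You should not present these as book-keeping.
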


Our next theorem gives a polynomial time algorithm to approximate $\max_{\bkappa\in\{0,1\}^n} C_p(\bkappa)C_q(\bkappa)$ for two given multilinear stable polynomials $p,q$ with nonnegative coefficients. We use this to generalize a recent results of Nikolov and Singh \cite{NS16}.
\begin{restatable}{theorem}{optimization}
\label{thm:optimization}
Given two real stable multilinear $D$-homogeneous polynomials $p,q\in\R[z_1,\dots,z_n]$ with nonnegative coefficients and an oracle that for any $\bz\in\R^n$ returns $p(\bz),q(\bz)$,
there is a polynomial time algorithm that  outputs an $e^{2D+\epsilon}$ approximation to 
$$ \max_{\bkappa\in\{0,1\}^n} C_p(\bkappa)C_q(\bkappa). $$
that runs in time $\poly(\complexity{p}+\complexity{q}+\log(1/\epsilon))$.
\end{restatable}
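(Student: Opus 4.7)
The plan is to design a convex program whose optimum sandwiches $M^\star := \max_{\bkappa \in \{0,1\}^n} C_p(\bkappa) C_q(\bkappa)$ within a multiplicative factor of $e^{2D}$, and to evaluate it approximately in polynomial time. The natural candidate is the min-max program already appearing in \autoref{thm:main},
$$V^\star := \sup_{\balpha \geq 0} V^\star(\balpha), \qquad V^\star(\balpha) := \inf_{\by,\bz > 0} \frac{p(\by)q(\bz)}{(\by\bz/\balpha)^\balpha};$$
by $D$-homogeneity, the outer supremum is effectively over the base polytope $B := \{\balpha \in [0,1]^n : \sum_i \alpha_i = D\}$ of the uniform matroid $U_{D,n}$. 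Evaluating $V^\star$ at an integer vertex $\mathbf{1}_S$ of $B$, the denominator reduces to $\prod_{i \in S} y_i z_i$; combining non-negativity of the coefficients (giving $\geq C_p(S)C_q(S)$) with the substitution $y_i, z_i \to 0$ for $i \notin S$ (giving the matching $\leq C_p(S)C_q(S)$) yields $V^\star(\mathbf{1}_S) = C_p(S) C_q(S)$, so $V^\star \geq \max_S V^\star(\mathbf{1}_S) = M^\star$. The algorithm is then to compute $V^\star$ to multiplicative accuracy $e^\epsilon$ in time $\poly(\complexity{p}+\complexity{q}+\log(1/\epsilon))$ by the same convex-optimization machinery used in \autoref{thm:counting}, and to output this value.

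The technical heart of the proof is the reverse inequality $V^\star \leq e^{2D} M^\star$. The plan is to take a near-optimal $\balpha^\star \in B$ and iteratively round it to an integer vertex of $B$ via deletion/contraction, following the strategy of \cite{NS16} for the single-polynomial case. At each step, pick a coordinate $i$ with $0 < \alpha_i^\star < 1$ and, \emph{consistently for both polynomials}, pass either to the deletion $(p|_{y_i=0},\, q|_{z_i=0})$ (corresponding to $i \notin S$) or to the contraction $(\partial_{y_i}p|_{y_i=0},\, \partial_{z_i}q|_{z_i=0})$ (corresponding to $i \in S$). Both operations preserve real stability and multilinearity; deletion does not increase the value of the associated convex program, while contraction incurs at most one factor of $e$ on the $p$-side and one on the $q$-side via the multilinear bound \eqref{eq:multilinearmain}. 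Since exactly $D$ contraction steps are needed to build up an integer $S$ with $|S|=D$, the total accumulated loss is at most $e^{2D}$, giving $V^\star(\balpha^\star) \leq e^{2D} C_p(S) C_q(S) \leq e^{2D} M^\star$.

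The main obstacle I expect is the per-step $e^2$ loss bound. In the joint setting the deletion-vs-contraction choice must be made consistently for $p$ and $q$ (same $i$, same in/out) rather than independently, so one cannot simply apply a single-polynomial capacity bound twice. Making this rigorous requires identifying $\alpha_i^\star$ via the KKT conditions of the joint program as a \emph{common} marginal $y_i^\star \partial_{y_i}p(\by^\star)/p(\by^\star) = z_i^\star \partial_{z_i}q(\bz^\star)/q(\bz^\star) = \alpha_i^\star$ for $p$ and $q$ at the joint optimum, so that both polynomials agree on which coordinate is heavy and a single consistent branching respects the capacity bounds of both polynomials simultaneously, incurring loss only $e^2$ per contraction step. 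Granted this analysis, outputting the approximation of $V^\star$ yields the claimed $e^{2D+\epsilon}$-factor approximation.
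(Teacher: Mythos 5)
Your proposal has a genuine gap at its foundation: the convex program you propose, $V^\star := \sup_{\balpha\geq 0}\inf_{\by,\bz>0}\frac{p(\by)q(\bz)}{(\by\bz/\balpha)^\balpha}$, is \emph{not} within a factor $e^{2D}$ of $M^\star = \max_\bkappa C_p(\bkappa)C_q(\bkappa)$, so the claimed upper bound $V^\star\leq e^{2D}M^\star$ is false regardless of how the rounding is done. Take $p(\by)=q(\by)=y_1+\dots+y_n$ (so $D=1$ and $M^\star=1$) and $\balpha = \tfrac1n\bone$. Then by AM--GM applied to the numerator twice,
\begin{equation*}
\inf_{\by,\bz>0}\frac{(\textstyle\sum_i y_i)(\textstyle\sum_i z_i)}{\prod_i(n\,y_iz_i)^{1/n}}
=\frac{n\cdot n}{n}=n,
\end{equation*}
so $V^\star\geq n$, which exceeds $e^{2D}=e^2$ whenever $n\geq 8$. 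The reason your argument for $V^\star\geq M^\star$ is fine (evaluating at $\balpha=\bone_S$ does give $C_p(S)C_q(S)$) but the reverse direction fails: the interior maximum picks up an extra $\balpha^\balpha$ (entropy) factor that scales like $\binom{n}{D}$, not $e^{O(D)}$.

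The paper's program crucially includes an extra ``tilting'' variable $\blambda$ on the simplex $\{\blambda\geq 0:\sum_i\lambda_i=D\}$ and optimizes $\sup_{\balpha,\blambda}\inf_{\by,\bz}\frac{p(\blambda\by)q(\bz)}{(\by\bz/\balpha)^\balpha}$; the $\blambda$ kills exactly this entropy excess (in the $e_1$ example the optimum forces $\blambda=\balpha$ and the program evaluates to $1$). The paper's rounding step is also different from the Nikolov--Singh deletion/contraction route you sketch: setting $\tilde p(\bz)=p(\blambda^*\bz)$, one applies \autoref{thm:main} to control $\sum_\bkappa C_{\tilde p}(\bkappa)C_q(\bkappa)$ within $e^D$, and then uses a \emph{probabilistic} argument --- sample $D$ coordinates with replacement, coordinate $j$ with probability $\lambda_j^*/D$, forming a random $\bkappa\in\{0,1\}^n$ --- to show $\E{C_p(\bkappa)C_q(\bkappa)}\geq e^{-D}\sum_\bkappa C_{\tilde p}(\bkappa)C_q(\bkappa)$, so that the max is at least $e^{-2D}$ times the program value. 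Your deletion/contraction bookkeeping is also pointing the wrong way: for the upper bound you need each step to not \emph{decrease} the program value by an unbounded factor, but the counterexample above shows deletions alone can shrink it from $n$ to $1$.
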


Independent of our work, Straszak and Vishnoi \cite{SV16} studied several variants of our problem. They show that if $p,q\in \R[z_1,\dots,z_n]$ are multilinear real stable polynomials with nonnegative coefficients, there is a convex relaxation to the quantity $q(\partial \bz) p(\bz)$ with an integrality gap  at most~$e^n$. In contrast, our approximation factor depends only on $\min\{\deg p, \deg q\}$ which can be significantly smaller than $n$.

\paragraph{Structure of the paper.}
In \autoref{sec:prelim} we describe many properties of real stable polynomials that we use throughout the paper. 
In \autoref{sec:applications} we describe several applications of \autoref{thm:main} in counting and optimization.
Finally, in sections \ref{sec:lowerbound} and \ref{sec:upperbound} we prove \autoref{thm:main}.
In particular, in \autoref{sec:lowerbound} we  lower bound $q(\partial\bz)p(\bz)|_{\bz=0}$, i.e., we prove the \eqref{eq:multilinearmain} and the LHS of \eqref{eq:main}.
And, in \autoref{sec:upperbound} we prove upper bound this quantity, i.e., the RHS of \eqref{eq:main}.

\section{Preliminaries}\label{sec:prelim}
Throughout the paper, we use bold letters to denote vectors.
For a vector $\by$, we write $\by\leq 1$ to denote that all coordinates of $\by$ are at most $1$.
For two vectors $\bx,\by\in \R^n$ we define $\bx\by=(x_1y_1,\dots,x_ny_n)$. Similarly, we define $\bx/\by=(x_1/y_1,\dots,x_n/y_n)$.
For a vector $\by\in\R^n$,  we define $\exp(\bx):=(e^{x_1},\dots,e^{x_n})$.

For vectors $\bx,\by\in \R^n$ we write 
$$\bx^\by:=\prod_{i=1}^d x_i^{y_i}.$$
As a special case, for a real number $c\in\R$ we write $c^x$ to denote $\prod_{i=1}^n c^{x_i}$. 

We use $\R_{++}=\{x: x>0\}$ to denote the set of positive real numbers. For an integer $n\geq 1$ we use $[n]$ to denote the set of numbers $\{1,2,\dots,n\}$. For any $m,n$, we let $\binom{[m]}{n}$ denote the collection of subsets of $[m]$ of size $n$.
Throughout the paper all logs are in base $e$.

For a vector $\bx\in\R^n$ and an integer $1\leq i\leq n$ we use $\bx_{-i}$ to denote the $m-1$ dimensional vector obtained from $\bx$ by dropping the $i$-th coordinate of $\bx$.
We use $\Z_+$ to denote the set of nonnegative integers.

We say a matrix $A\in\R^{n\times n}$ is {\em doubly stochastic} if all entries of $A$ are nonnegative and all row sums and columns sums are equal to 1,
\begin{eqnarray*} 
\sum_{j} A_{i,j} &=&1\ \forall i,\\
\sum_i A_{i,j} &=&1\  \forall j.
\end{eqnarray*}
\subsection{Stable Polynomials}\label{sec:stablepolynomials}
Stable polynomials are natural multivariate generalizations
of real-rooted univariate polynomials. For a complex number $z$, let
$\image(z)$ denote the imaginary part of $z$.
We say a polynomial $p(z_1,\dots,z_n)\in\C[z_1,\dots,z_n]$ is {\em stable}
if whenever $\image(z_i)>0$ for all $1\leq i\leq m$, $p(z_1,\dots,z_n)\neq 0$. As the only exception, we also call the zero polynomial stable. We say $p(.)$ is real stable, if it is stable and all of its coefficients are real. It is easy to see that any univariate polynomial is real stable  if and only if it is real rooted. 

We say a polynomial $p(z_1,\dots,z_n)$ is degree $k$-homogeneous, or $k$-homogeneous, if every monomial of $p$ has total degree exactly $k$. Equivalently, $p$ is $k$-homogeneous if for all $a\in\R$, we have
$$ p(a\cdot z_1,\dots,a\cdot z_n)=a^k p(z_1,\dots,z_n).$$
We say a monomial $z_1^{\alpha_1}\dots z_n^{\alpha_n}$ is {\em square-free} if $\alpha_1,\dots,\alpha_n\in\{0,1\}$. 
We say a polynomial is multilinear if all of its monomials are square-free.
For a polynomial $p$, we write $\deg p$ to denote the maximum degree of all monomials of $p$.
One of the famous examples of real stable polynomials is the set of elementary symmetric polynomials.
For an integer $k\geq 0$, the $k$-th elementary symmetric polynomial in $z_1,\dots,z_n$, $e_k(z_1,\dots,z_n)$ is defined as
$$ e_k(z_1,\dots,z_n) := \sum_{S\in \binom{[n]}{k}} \prod_{\ell \in S} z_i.$$
Note that the above polynomial is $k$-homogeneous and multilinear.

Let $\mu$ be a probability distribution on subsets of $[n]$. The {\em generating polynomial} of $\mu$ is defined as follows:
$$ g_\mu(z_1,\dots,z_n)=\sum_S \mu(S) \prod_{i\in S} z_i.$$
We say $\mu$ is a {\em strongly Rayleigh} (SR) distribution if $g_\mu$ is a real stable polynomial. Borcea, Br\"and\'en, and Liggett \cite{BBL09} defined (SR) distributions and proved numerous properties of them. 
Many interesting family of probability distributions are special cases of SR including product distributions, random spanning trees, and determinantal point processes. 
Note that $\mu(S)\geq 0$ for all $S\subseteq[n]$; so $g_\mu(S)$ has nonnegative coefficients. Many of theorems that we prove here have natural probabilistic interpretations if the underlying real stable polynomial is a generating polynomial of a SR distribution. 

We say a polynomial $p(y_1,\dots,y_n,z_1,\dots,z_n)$ is {\em bistable} if $p(y_1,\dots,y_n,-z_1,\dots,-z_n)$ is a real stable polynomial. 
\begin{fact}\label{fact:detstable}
	For any set of PSD matrices $A_1,A_2,\dots,A_n\in\R^{d\times d}$ and a symmetric matrix $B\in\R^{d\times d}$
	$$ p(z_1,\dots,z_n)\det(B+z_1A_1+\dots+z_nA_n)$$
	is real stable.
\end{fact}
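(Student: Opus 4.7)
The claimed fact (reading the displayed expression as defining $p(z_1,\ldots,z_n) := \det(B+z_1A_1+\cdots+z_nA_n)$) is the classical determinantal stability result. The plan is to verify two things separately: that the coefficients are real, and that the polynomial has no zero in the product of open upper half-planes. Real-valuedness is immediate: at any real input, $B+\sum_i z_i A_i$ is a real symmetric matrix, so its determinant is real, and a polynomial taking real values on all of $\R^n$ has real coefficients.

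For stability, I would first handle the case where each $A_i$ is strictly positive definite by a direct quadratic-form argument. Fix $(z_1,\ldots,z_n)$ with $\image(z_i) > 0$ for every $i$ and decompose $M := B+\sum_i z_i A_i = X + iY$, where $X := B + \sum_i \mathrm{Re}(z_i) A_i$ is real symmetric and $Y := \sum_i \image(z_i) A_i$ is positive definite (a strictly positive combination of PD matrices). For any nonzero complex vector $v = a+ib$ with $a,b \in \R^d$, one has $v^* M v = v^* X v + i\, v^* Y v$; the first term lies in $\R$ because $X$ is real symmetric, while symmetry of $Y$ makes the cross terms cancel, yielding $v^* Y v = a^T Y a + b^T Y b > 0$. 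Hence $\image(v^* M v) > 0$, so $M v \neq 0$ and $\det(M) \neq 0$.

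The main obstacle is the reduction from the PSD case to the PD case: when some $A_i$ is singular, the imaginary part $Y$ can degenerate in a direction and the quadratic-form argument breaks. I would resolve this by perturbation. Set $A_i^{(\eps)} := A_i + \eps I$, which is strictly positive definite for every $\eps > 0$. By the PD case, $p_\eps(\bz) := \det(B + \sum_i z_i A_i^{(\eps)})$ is real stable, and as $\eps \to 0$ we have $p_\eps \to p$ uniformly on compact subsets of $\C^n$. By Hurwitz's theorem in several complex variables, such a limit is either nonvanishing on $\{\bz : \image(z_i) > 0 \text{ for all } i\}$ or identically zero; both possibilities match the paper's convention that the zero polynomial is stable, so $p$ is real stable, completing the argument.
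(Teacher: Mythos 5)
Your proof is correct. The paper states this as a fact without proof (it is a classical result in the theory of stable polynomials, going back to Borcea--Br\"and\'en), so there is no internal argument to compare against; your two-step argument --- real coefficients from real-valuedness on $\R^n$, stability via the quadratic-form computation in the positive-definite case followed by an $\eps I$ perturbation and the several-variable Hurwitz theorem --- is the standard and complete proof. One small point worth being explicit about: Hurwitz in several variables requires the domain to be open and connected, which the product of open upper half-planes is, and you also implicitly use that stability plus real coefficients yields real stability; both are fine, just worth noting when writing this out carefully.
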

\begin{fact}\label{fact:prodstable}
If $p(\bz)$ and $q(\bz)$ are real stable, then $p(\bz)q(\bz)$ is real stable.	
\end{fact}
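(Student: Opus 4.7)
The plan is to prove this directly from the definition of stability. Recall that $p$ is stable if $p(z_1, \ldots, z_n) \neq 0$ whenever $\mathrm{Im}(z_i) > 0$ for all $i$, and real stable if additionally all coefficients of $p$ are real. Both conditions are preserved by products for essentially trivial reasons, so the proof will just spell this out.

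First I would observe that $\mathbb{C}$ is a field, so for any point $\bz = (z_1, \ldots, z_n)$ with $\mathrm{Im}(z_i) > 0$ for all $i$, we have $(pq)(\bz) = p(\bz) \cdot q(\bz)$, and this product is nonzero as long as both factors are nonzero. By stability of $p$ and $q$ applied individually at $\bz$, each factor is indeed nonzero, so $(pq)(\bz) \neq 0$. This shows $pq$ is stable in the sense of vanishing only outside the upper polydisc. Second, since $p$ and $q$ both have real coefficients, so does $pq$, because the product of real polynomials is a real polynomial (the coefficients of $pq$ are finite sums of products of coefficients of $p$ and $q$). Combining these two observations yields that $pq$ is real stable.

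The only mild subtlety to flag is the convention that the zero polynomial is also declared stable (as stated in the preliminaries): if either $p$ or $q$ is identically zero, then $pq \equiv 0$ is stable by convention, so the argument above handles that degenerate case as well. There is no real obstacle here; the lemma is essentially a direct unfolding of the definition.
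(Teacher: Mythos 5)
Your proof is correct, and it is the standard direct argument from the definition: a product of complex numbers is nonzero iff both factors are, so stability is preserved, and real coefficients are closed under polynomial multiplication. The paper states this as a fact without proof, treating it as well known, so there is no alternative approach to compare against; your handling of the zero-polynomial convention is a nice touch of thoroughness.
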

\begin{fact}\label{fact:prodbistable}
	If $p(\by)$ and $q(\bz)$ are real stable, then $p(\by) q(\bz)$ is bistable.
\end{fact}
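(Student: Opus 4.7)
The plan is to reduce the claim directly to \autoref{fact:prodstable}. By the definition of bistability, I need to show that $p(\by)q(-\bz)$ is real stable as a polynomial in the joint $2n$ variables $(\by,\bz)$. Since $p(\by)$ and $q(-\bz)$ have disjoint variable sets, this will factor as a product of two things I hope to certify as real stable separately.

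The first step is to argue that $q(-\bz)$ is real stable whenever $q(\bz)$ is. This uses that $q$ has real coefficients: if $q(-\bz_0)=0$ for some $\bz_0$ with all $\image(z_{0,i})>0$, then $q$ vanishes at $-\bz_0$, a point with all negative imaginary parts. Conjugating and using $\overline{q(\bw)}=q(\overline{\bw})$ (real coefficients), $q$ then vanishes at $-\overline{\bz_0}$, a point with all strictly positive imaginary parts, contradicting real stability of $q$. So $q(-\bz)$ is real stable in $\bz$.

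The second step is the observation that a polynomial in a subset of variables, regarded as a polynomial in the full variable set, remains real stable: checking the half-plane condition on the larger variable set only adds vacuous positive-imaginary-part requirements on dummy variables, which do not affect the value of the polynomial. Thus $p(\by)$ and $q(-\bz)$, viewed as polynomials in $(\by,\bz)$, are each real stable. Now \autoref{fact:prodstable} gives that their product $p(\by)q(-\bz)$ is real stable in $(\by,\bz)$, which is exactly bistability of $p(\by)q(\bz)$.

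I do not expect any real obstacle here; the only mildly subtle point is the half-plane-flipping step, which genuinely uses that $q$ has real coefficients (otherwise stability in the upper half-plane would not transfer to the lower half-plane). Everything else is bookkeeping about inflating the variable set.
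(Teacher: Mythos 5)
Your proof is correct, and since the paper states \autoref{fact:prodbistable} without proof, there is no "paper version" to compare it to; your argument is exactly the standard one. The two ingredients you isolate are indeed the right ones: (i) a real stable polynomial remains real stable after the sign flip $\bz\mapsto -\bz$, which as you note genuinely uses real coefficients via $\overline{q(\bw)}=q(\overline{\bw})$ to reflect a hypothetical zero from the lower half-plane back into the upper half-plane; and (ii) a polynomial in a subset of variables stays real stable when regarded over an enlarged variable set, since the added half-plane constraints are vacuous for variables that do not appear. Combining these with \autoref{fact:prodstable} yields that $p(\by)q(-\bz)$ is real stable in $(\by,\bz)$, which is the definition of bistability of $p(\by)q(\bz)$. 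No gaps.
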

\begin{fact}\label{fact:substitutestable}
For any stable polynomial $p(z_1,\dots,z_n)$, an integer $1\leq i\leq n$, and any number $c$ with nonnegative imaginary part, $\image(c)\geq 0$, $p(z_1,\dots,z_{i-1},c,z_{i+1},\dots,z_n)$ is stable.
\end{fact}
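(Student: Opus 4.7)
The plan is to split into two cases depending on whether $\image(c)>0$ or $\image(c)=0$, handling the first by direct appeal to the definition and reducing the second to the first by a Hurwitz-theorem limiting argument.

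First I would dispatch the easy case $\image(c)>0$. Let $q(z_1,\ldots,z_{i-1},z_{i+1},\ldots,z_n) := p(z_1,\ldots,z_{i-1},c,z_{i+1},\ldots,z_n)$. For any tuple $(z_j)_{j\neq i}$ with $\image(z_j)>0$ for all $j\neq i$, the full tuple obtained by inserting $c$ in the $i$-th slot lies in the open product of upper half-planes, so $p$ is nonzero there by stability. Hence $q$ is nonzero on this domain, i.e.\ $q$ is stable.

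Next, for the boundary case $\image(c)=0$ (so $c\in\R$), I would approximate by $c_k := c + i/k$ for $k\geq 1$, each of which has strictly positive imaginary part. Let $q_k$ denote the polynomial obtained by substituting $z_i = c_k$ into $p$; by the previous paragraph each $q_k$ is stable, so $q_k$ is nowhere zero on the connected open set $U := \{(z_j)_{j\neq i} \in \C^{n-1} : \image(z_j)>0 \text{ for all } j\neq i\}$. Because $p$ is a polynomial, $q_k \to q$ uniformly on every compact subset of $U$.

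Finally I would invoke Hurwitz's theorem in several complex variables: a locally uniform limit of nowhere-vanishing holomorphic functions on a connected open set is either identically zero or nowhere zero. Applied here, $q$ is either identically zero on $U$ (hence the zero polynomial, which is stable by the stated convention) or nowhere zero on $U$ (hence stable). The main point to be careful about is the multivariate version of Hurwitz; the standard route is to freeze all variables but one, apply the classical one-variable Hurwitz along each coordinate axis, and combine with the connectedness of $U$ to conclude the global dichotomy. Since these are all standard facts for stable polynomials, no serious obstacle arises.
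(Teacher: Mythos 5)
The paper states this fact without proof; it is the standard closure property of stable polynomials under substitution of a constant from the closed upper half-plane (see, e.g., Borcea--Br\"and\'en or Wagner's survey). Your argument is exactly the canonical proof of that fact: the case $\image(c)>0$ is immediate from the definition, the boundary case $c\in\R$ is handled by approximating with $c_k=c+i/k$ and passing to the limit via Hurwitz's theorem, and you have correctly flagged the one subtle point --- that Hurwitz gives a dichotomy (identically zero or nowhere zero), and the identically-zero branch is covered by the paper's explicit convention that the zero polynomial is stable. The slice-by-slice reduction of the multivariate Hurwitz dichotomy to the univariate one, using connectedness of the polydisc-like domain, is also the standard route and is sound. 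No gap.
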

\begin{fact}\label{fact:externalfield}
If $p(\by)$ is real stable, then for any nonnegative vector $\blambda\in\R_+^n$, $p(\blambda\by)$ is real stable. 
\end{fact}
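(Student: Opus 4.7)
The plan is to verify the definition of real stability directly, i.e., to show that if $\image(y_i)>0$ for every $i\in[n]$, then $p(\blambda\by)\neq 0$. The polynomial $p(\blambda\by)$ evidently has real coefficients (since $p$ does and $\blambda\in\R_+^n$), so real stability reduces to ordinary stability.

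The naive idea is to argue that $\image(\lambda_i y_i)=\lambda_i\image(y_i)>0$ and then invoke stability of $p$ at the point $(\lambda_1 y_1,\dots,\lambda_n y_n)$. This works immediately when $\blambda>0$ strictly. The only subtlety is that coordinates $i$ with $\lambda_i=0$ give $\lambda_i y_i=0$, which has zero imaginary part, so one cannot directly appeal to the stability hypothesis.

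To handle this, I would first split $[n]$ into $I=\{i:\lambda_i>0\}$ and $J=\{i:\lambda_i=0\}$, and iteratively substitute $z_j=0$ for each $j\in J$ into $p$. Each such substitution preserves stability by \autoref{fact:substitutestable} (applied with $c=0$, which satisfies $\image(c)=0\geq 0$), so the resulting polynomial $\tilde p\in\R[z_i:i\in I]$ is stable. Now for $i\in I$, if $\image(y_i)>0$ then $\image(\lambda_i y_i)=\lambda_i\image(y_i)>0$, and plugging these values into $\tilde p$ gives a nonzero result. Since $p(\blambda\by)$ equals $\tilde p$ evaluated at $(\lambda_i y_i)_{i\in I}$, this shows $p(\blambda\by)\neq 0$ whenever all $y_i$ have strictly positive imaginary part.

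There is no real obstacle here; the only thing one must be careful about is the possibility of zero entries in $\blambda$, which is handled uniformly by first reducing via \autoref{fact:substitutestable} and then applying stability only to the strictly positive coordinates.
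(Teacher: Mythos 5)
Your proof is correct. The paper states Fact~\ref{fact:externalfield} without proof, so there is no in-paper argument to compare against; your argument is the standard one and it is sound. Reducing to the strictly positive case by first substituting $z_j=0$ for $j\in J=\{j:\lambda_j=0\}$ via Fact~\ref{fact:substitutestable} (with $c=0$), then observing that multiplication by a strictly positive real preserves positivity of the imaginary part, handles the zero coordinates cleanly. Two small remarks, neither of which is a gap: after the substitutions, $p(\blambda\by)$ is a polynomial in all $n$ variables that is merely constant in $y_j$ for $j\in J$, and stability of $\tilde p$ in the variables indexed by $I$ does give stability of $p(\blambda\by)$ in all $n$ variables, since the nonvanishing hypothesis on the $I$-coordinates is implied by the full hypothesis $\image(y_i)>0$ for all $i$; and the degenerate case $\tilde p\equiv 0$ (hence $p(\blambda\by)\equiv 0$) is covered by the paper's convention that the zero polynomial is stable.
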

\begin{lemma}\label{lem:1-xystability}
For any real stable polynomial $p(z_1,z_2,\dots,z_n)$ and any $1\leq i,j\leq n$, $ (1-\partial_{z_i}\partial_{z_j}) p$ is stable.
\end{lemma}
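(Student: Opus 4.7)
The plan is to introduce two auxiliary variables $s, t$, build an $(n+2)$-variable stable polynomial $Q(\bz, s, t)$ out of $p$, and then read off $(1-\partial_{z_i}\partial_{z_j})p$ as the $st$-coefficient of $Q$ using only stability-preserving operations.

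I would define
$$Q(\bz, s, t) \;:=\; (1 - st)\cdot p(\bz + s\, e_i + t\, e_j),$$
where $e_k$ denotes the $k$th standard basis vector (so when $i=j$ this reads $(1-st)\, p(\bz + (s+t)e_i)$). The first task is to check that $Q$ is real stable in the $n+2$ variables $(\bz, s, t)$. By \autoref{fact:prodstable} it suffices to check the two factors separately. For $p(\bz + se_i + te_j)$: if each of $z_1,\dots,z_n,s,t$ has strictly positive imaginary part, then so do the shifted entries $z_i+s$ and $z_j+t$, hence $p$ does not vanish by stability of $p$. For $1-st$: if $\image(s),\image(t) > 0$, then $1/s$ has strictly negative imaginary part, so $t \ne 1/s$, i.e.\ $st \ne 1$.

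Next I would extract $(1-\partial_{z_i}\partial_{z_j})p$ as a coefficient. Taylor expansion gives
$$p(\bz + se_i + te_j) \;=\; p(\bz) + s\,\partial_{z_i}p(\bz) + t\,\partial_{z_j}p(\bz) + st\,\partial_{z_i}\partial_{z_j}p(\bz) + (\text{higher order in } s, t),$$
valid uniformly in the cases $i \ne j$ and $i = j$ (for $i=j$ the $st$ term arises from $\tfrac{1}{2}(s+t)^2 \partial_{z_i}^2 p$). Multiplying by $(1-st)$, the coefficient of $st$ in $Q$ equals $\partial_{z_i}\partial_{z_j}p(\bz) - p(\bz)$; equivalently,
$$\partial_s\partial_t\, Q(\bz, s, t)\big|_{s=t=0} \;=\; -\bigl(1 - \partial_{z_i}\partial_{z_j}\bigr) p(\bz).$$

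To finish, I would invoke stability-preserving operations. Partial differentiation in a single variable preserves real stability (a direct consequence of the Gauss--Lucas theorem applied to the univariate restriction in that variable while the others are fixed in the open upper half-plane), and substitution of a real number preserves stability by \autoref{fact:substitutestable}. Thus applying $\partial_s$, then $\partial_t$, then setting $s=0$ and $t=0$ to the stable polynomial $Q$ yields a real stable polynomial in $\bz$; since stability is invariant under negation, $(1-\partial_{z_i}\partial_{z_j})p$ is real stable. The only genuine content is verifying stability of the two factors of $Q$; the extraction step is mechanical.
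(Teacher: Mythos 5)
Your proof is correct, and it takes a genuinely different route from the paper's. The paper invokes the Borcea--Br\"and\'en classification of stability-preserving differential operators as a black box: a differential operator $T=\sum_\bkappa c_\bkappa \partial_\bz^\bkappa$ preserves stability iff its symbol $\sum_\bkappa c_\bkappa(-\bz)^\bkappa$ is stable, which for $T=1-\partial_{z_i}\partial_{z_j}$ reduces to checking that $1-z_iz_j$ is stable. Your argument avoids this theorem entirely: you build the $(n+2)$-variable stable polynomial $Q=(1-st)\,p(\bz+se_i+te_j)$ from $p$ using only translation of a variable, multiplication by the explicitly-stable factor $1-st$, and closure under products (\autoref{fact:prodstable}), and then extract $-(1-\partial_{z_i}\partial_{z_j})p$ via $\partial_s\partial_t Q|_{s=t=0}$, using that differentiation in one variable and real substitution (\autoref{fact:substitutestable}) both preserve stability. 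The Taylor-coefficient bookkeeping you do, including the $i=j$ case where the $st$ coefficient comes from $\tfrac12(s+t)^2\partial_{z_i}^2p$, checks out. The trade-off: the paper's proof is shorter once one grants the Borcea--Br\"and\'en theorem, while yours is self-contained and uses only elementary stability-preservers; it is also instructive in that $1-st$ plays exactly the role of the ``symbol'' $1-z_iz_j$, so your construction is essentially a hands-on instance of why the symbol theorem is true.
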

\begin{proof}
We say an operator $T$ is stability preserver if for any real stable polynomial $p(\bz)$, $Tp$ is real stable.
So, to prove the lemma we need to show that $1-\partial_{z_i}\partial_{z_j}$ is a stability preserver operator.
Borcea and Br\"and\'en \cite{BB10} showed that for any differential operator $T=\sum_{\bkappa} c_{\bkappa}\partial_z^\bkappa$, $T$ is stability preserver if the polynomial 
$$ \sum_{\bkappa\in \Z_+^n} c_{\bkappa}(-\bz)^\bkappa$$
is real stable. So, to prove the lemma it is enough to show that $1-z_iz_j$ is real stable. Obviously, if $\image(z_i),\image(z_j)>0$, then $z_iz_j\neq 1$. So, $1-z_iz_j$ is real stable. 
\end{proof}

Br\"and\'en \cite{Bra07} proved the following characterization of multilinear stable polynomials.
\begin{theorem}[{Br\"and\'en \cite[Theorem 5.6]{Bra07}}]\label{thm:stablenegcorrelation}
For any multilinear polynomial $p(y_1,\dots,y_m)$, $p$ is stable if and only if for all $1\leq i < j\leq m$,
$$ \partial_{y_i} p |_{y=x} \cdot \partial_{y_j}  p|_{y=x} \geq p(x) \cdot \partial_{y_i}\partial_{y_j} p|_{y=x},$$
for all $x\in\R^m$.
\end{theorem}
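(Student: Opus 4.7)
The plan is to prove the two directions of this iff separately. The forward direction reduces to a bivariate algebraic computation, while the converse goes by induction on $m$.

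\emph{Forward direction:} Fix $1 \leq i < j \leq m$ and $x \in \R^m$. Since $p$ is multilinear, group terms by the monomials $1, y_i, y_j, y_iy_j$ to write
$$p(y) = A + B y_i + C y_j + D y_i y_j,$$
where $A, B, C, D$ are multilinear polynomials in the remaining $m-2$ variables. By Fact \ref{fact:substitutestable}, substituting $y_k = x_k$ (real) for $k \notin \{i, j\}$ preserves stability, so the bivariate polynomial $q(y_i, y_j) = a + b y_i + c y_j + d y_i y_j$, with $a, b, c, d$ the real scalar values of $A, B, C, D$ at $x$, is real stable. A direct expansion gives
$$(\partial_i p)(\partial_j p)\big|_{y=x} - p(x)(\partial_i \partial_j p)\big|_{y=x} = (b+dx_j)(c+dx_i) - (a+bx_i+cx_j+dx_ix_j)\, d = bc - ad,$$
so the desired inequality is exactly $bc - ad \geq 0$. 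To establish this bivariate fact, note that when $d \neq 0$ the zero set of $q$ is the graph of the real M\"obius map $y_i \mapsto -(a + by_i)/(c + dy_i)$, whose determinant equals $\pm(bc - ad)$. Real M\"obius transformations preserve the upper half-plane iff their determinant is positive; since stability of $q$ rules out both coordinates lying in the upper half-plane, we must have $bc - ad \geq 0$. The degenerate cases $d = 0$ and $bc = ad$ (where $q$ factors as a product of real-rooted univariate polynomials) are handled directly.

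\emph{Converse direction:} Assume the pairwise inequalities hold for every $i < j$ and every $x \in \R^m$. I induct on $m$; the base case $m \leq 2$ is the reverse of the above M\"obius argument. For the inductive step, since $p$ is multilinear, write $p = f(y_1,\dots,y_{m-1}) + y_m\, g(y_1,\dots,y_{m-1})$. The pair-$(i,j)$ inequality for $p$ at $x = (x_{-m}, x_m)$ expands to a quadratic in $x_m$ that must be nonnegative for every real $x_m$; its constant term (setting $x_m = 0$) is the pair-$(i,j)$ Rayleigh expression for $f$, and its leading coefficient (as $x_m \to \infty$) is the pair-$(i,j)$ expression for $g$. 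Hence $f$ and $g$ each satisfy the inductive hypothesis, and so are stable in $m - 1$ variables. On the other hand, the pair-$(i, m)$ inequality unfolds, after cancellation of the $x_m\, g\, \partial_i g$ terms on both sides, to
$$g(x_{-m}) \cdot \partial_i f(x_{-m}) \;\geq\; f(x_{-m}) \cdot \partial_i g(x_{-m})$$
for every $x_{-m} \in \R^{m-1}$ and every $i < m$---a multivariate ``Hermite--Biehler''-type proper-position condition between the stable polynomials $f$ and $g$. A standard lemma in the theory of real stable polynomials then asserts that if $f, g$ are stable in $m-1$ variables and stand in proper position along each coordinate in this sense, then $f + y_m g$ is stable in $m$ variables, completing the induction.

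The main obstacle is the converse direction, specifically the final step: bootstrapping ``stable $f, g$ in proper position'' up to stability of $f + y_m g$ requires the multivariate Hermite--Biehler lemma, which is nontrivial since the classical univariate interlacing intuition must be lifted to a multivariate setting with a coordinate-wise Wronskian hypothesis. The forward direction, by contrast, collapses to a clean two-variable M\"obius-transformation computation once we invoke Fact \ref{fact:substitutestable}.
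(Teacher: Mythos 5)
The paper does not prove this theorem; it is stated as a citation to Br\"and\'en \cite[Theorem 5.6]{Bra07}, so there is no in-paper proof to compare against, and I am reviewing your argument on its own merits. Your forward direction is correct: specializing all but two coordinates to reals (via \autoref{fact:substitutestable}) reduces to the bivariate multilinear case, and the M\"obius-determinant computation cleanly delivers $bc\geq ad$.

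The converse direction has a genuine gap at its final step. You invoke as a ``standard lemma'' the claim that if $f,g$ are real stable in $m-1$ variables and $g\,\partial_i f - f\,\partial_i g\geq 0$ on $\R^{m-1}$ for every $i$, then $f+y_m g$ is stable. This is false without further hypotheses. Take $f(z)=z^3$ and $g(z)=1$: both are real stable, the Wronskian $gf'-fg'=3z^2\geq 0$ everywhere, yet $f+yg=z^3+y$ vanishes at $(z,y)=(i,i)$ and so is not stable. The actual multivariate Hermite--Biehler theorem of Borcea and Br\"and\'en characterizes stability of $f+y_m g$ by $f,g$ being in \emph{proper position} (equivalently, $f+ig$ stable, or all real combinations $\lambda f + \mu g$ real stable together with a sign condition); separate stability of $f$ and $g$ plus a coordinate-wise Wronskian sign is strictly weaker. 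Whether the Wronskian condition alone suffices when $f,g$ are multilinear is precisely the nontrivial content of Br\"and\'en's theorem, so appealing to it as a black box is circular. A related issue: the pair-$(i,j)$ inequality for $i,j<m$, viewed as a quadratic in $y_m$ that is nonnegative on all of $\R$, yields not only nonnegativity of the constant and leading coefficients (which you use to get stability of $f$ and $g$) but also a discriminant inequality coupling $f$ and $g$. You discard that coupling, and it is plausibly exactly what is needed to supply the joint interlacing that the Hermite--Biehler step requires. As written, the induction does not close.
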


\subsection{Polarization} 
Let $p \in \C[z_1,\dots,z_n]$ be a stable polynomial of degree $d_i$ in $z_i$, $1 \leq i \leq n$. The polarization is an operator that turns $p$ into a  multilinear ``symmetric'' polynomial. 


For any $m\geq \max_i d_i$, The polarization $\pi_m(p)$ of $p$ is a polynomial in $m\cdot n$ variables $z_{i,j}$ for $1 \leq i \leq n, 1 \leq j \leq m$ defined as follows:
For all $1\leq i\leq n$, and $1\leq j\leq d_i$ substitute any occurrence of $z_i^k$ in $p$ with $\frac{1}{\binom{m}{k}} e_k(z_{i,1},\dots,z_{i,m})$. 

For example, if $p(z_1,z_2)=z_1^2+z_1z_2$, then
$$ \pi_2(p)=z_{1,1}z_{1,2} + \frac12(z_{1,1}+z_{1,2}) \cdot \frac12(z_{2,1}+z_{2,2}).$$

Note that for any $m$ and any polynomial $p$, $\pi_m(p)$ has the following properties:
\begin{itemize}
\item $\pi_m(p)$ is multilinear; 
\item $\pi_m(p)$ is symmetric in the variables $z_{i,1},\dots,z_{i,m}$ for all $1 \leq  i\leq  n$; 
\item if we substitute $z_{i,j} = z_i$ for all $i,j$ in $\pi_m(p)$ we recover $p$. 
\end{itemize}
Borcea, Br\"and\'en, and Liggett used Grace-Walsh-Szeg¨o's coincidence theorem \cite{Gra02,Sze22,Wal22}
to prove that stability is preserved under polarization \cite[Thm 4.7]{BBL09}.
\begin{theorem}[Borcea, Br\"and\'en, Liggett \cite{BBL09}]\label{thm:polarization}
If $p\in \C[z_1,\dots ,z_n]$ is stable then $\pi_m(p)$ is stable for all $m$.
\end{theorem}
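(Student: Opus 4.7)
The plan is to apply the Grace--Walsh--Szegő coincidence theorem one variable at a time, using the three bullet-point properties of $\pi_m$ listed just before the theorem. Because polarization treats each variable independently, let $\pi_m^{(i)}$ denote the operator that only polarizes the $i$-th variable (replacing $z_i^k$ by $e_k(z_{i,1},\dots,z_{i,m})/\binom{m}{k}$ while leaving all other variables fixed). These single-variable polarizations commute and compose to give $\pi_m = \pi_m^{(1)}\pi_m^{(2)}\cdots\pi_m^{(n)}$, so by induction on the number of polarized variables it suffices to show that $\pi_m^{(1)}$ preserves stability. Note also that all single-variable polarizations of intermediate polynomials are well-defined because $m\geq \max_i d_i$.

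So let $p\in\C[z_1,\dots,z_n]$ be stable with degree at most $m$ in $z_1$, and set $q(w_1,\dots,w_m,z_2,\dots,z_n):=\pi_m^{(1)}(p)$. By construction, $q$ is multiaffine and symmetric in $w_1,\dots,w_m$, and the third bullet of the polarization properties gives the key identity
\[
q(w,\dots,w,z_2,\dots,z_n) = p(w,z_2,\dots,z_n).
\]
I would argue stability of $q$ by contradiction. Suppose there exist $w_1^*,\dots,w_m^*,z_2^*,\dots,z_n^*$ all in the open upper half plane with $q(w_1^*,\dots,w_m^*,z_2^*,\dots,z_n^*)=0$. Freeze $z_2^*,\dots,z_n^*$, and consider $Q(w_1,\dots,w_m):=q(w_1,\dots,w_m,z_2^*,\dots,z_n^*)$, which is symmetric and multiaffine in $m$ variables.

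The Grace--Walsh--Szegő coincidence theorem, applied to the circular region $K=\{z\in\C:\image(z)>0\}$, asserts that for any symmetric multiaffine polynomial $Q$ and any point $(w_1^*,\dots,w_m^*)\in K^m$, there exists $w^*\in K$ with $Q(w_1^*,\dots,w_m^*)=Q(w^*,\dots,w^*)$. Applying this to our $Q$ yields some $w^*$ in the upper half plane with $Q(w^*,\dots,w^*)=0$. Using the identity above, this gives $p(w^*,z_2^*,\dots,z_n^*)=0$ with all arguments in the open upper half plane, contradicting stability of $p$. Hence $q$ is stable, completing the inductive step and the proof.

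The only nontrivial ingredient is the Grace--Walsh--Szegő theorem itself, which is being invoked as a classical black box; everything else (commutativity of single-variable polarizations, the diagonal-substitution identity, symmetry and multiaffineness in the polarized block) is immediate from the definition of $\pi_m$. So the main ``obstacle'' is really just recognizing that GWS is precisely the statement that lets one pass from a putative zero of the symmetric multiaffine $Q$ in $K^m$ back to a zero of $p$ in $K^n$.
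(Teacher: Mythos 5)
Your proof is correct. The paper does not reproduce a proof of this theorem; it simply cites \cite[Thm 4.7]{BBL09} and remarks that the result follows from the Grace--Walsh--Szeg\H{o} coincidence theorem. Your argument reconstructs exactly that: factor $\pi_m$ into single-variable polarizations $\pi_m^{(i)}$, observe that each $\pi_m^{(i)}(p)$ is symmetric and multiaffine in the polarized block, and invoke GWS for the (convex, hence condition-satisfying) open upper half-plane to carry a putative zero of the polarized polynomial back to a zero of $p$ on the diagonal. This is precisely the argument in Borcea--Br\"and\'en--Liggett, so you have, in effect, given the proof that the paper delegates to its reference; there is no gap.
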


\subsection{Stability and Convexity}
Polynomial optimization problems involving real stable polynomials with nonnegative coefficients can often be turned into concave/convex programs. Such polynomials are log-concave in the positive orthant:
\begin{theorem}[\cite{Gul97}]
\label{thm:guler}
For any real stable homogeneous polynomial $p\in\R[z_1,\dots,z_n]$ such that $p(\bz)>0$ for all $\bz\in\R_{++}^n$,  $\log p(\bz)$ is concave in the positive orthant.
\end{theorem}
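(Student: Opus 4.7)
The plan is to deduce the theorem from G\aa{}rding's classical theory of hyperbolic polynomials. Recall that a homogeneous polynomial $p\in\R[z_1,\ldots,z_n]$ of degree $d$ is called \emph{hyperbolic} with respect to a direction $\mathbf{e}\in\R^n$ if $p(\mathbf{e})\neq 0$ and for every $\bz\in\R^n$ the univariate polynomial $t\mapsto p(\bz+t\mathbf{e})$ is real-rooted. G\aa{}rding's theorem states that the connected component $\Lambda_{\mathbf{e}}$ of $\{\bz\in\R^n:p(\bz)\neq 0\}$ containing $\mathbf{e}$ is an open convex cone, and that $p^{1/d}$ is concave on $\Lambda_{\mathbf{e}}$.

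First I would check that a real stable homogeneous $p$ which is positive on $\R_{++}^n$ is hyperbolic with respect to every $\mathbf{e}\in\R_{++}^n$. Fix such $\mathbf{e}$ and an arbitrary $\bz\in\R^n$; if $t_0\in\C$ were a root of $t\mapsto p(\bz+t\mathbf{e})$ with $\image(t_0)>0$, then $\image(z_i+t_0 e_i)=e_i\image(t_0)>0$ for every $i$, contradicting real stability of $p$. Since $p$ has real coefficients, roots in the lower half-plane are excluded by conjugation, so $t\mapsto p(\bz+t\mathbf{e})$ is real-rooted; positivity on $\R_{++}^n$ gives $p(\mathbf{e})>0$. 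Moreover, because $\R_{++}^n$ is path-connected and $p$ does not vanish on it, the entire positive orthant lies in the single connected component $\Lambda_{\mathbf{e}}$, that is, $\R_{++}^n\subseteq\Lambda_{\mathbf{e}}$.

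Next I would invoke G\aa{}rding's theorem to conclude that $p^{1/d}$ is concave on $\R_{++}^n$. Since $p^{1/d}>0$ there and $\log$ is concave and increasing, the composition $\log p=d\log p^{1/d}$ is concave on $\R_{++}^n$, which is exactly the claim. The heavy lifting is done by G\aa{}rding's concavity theorem, whose proof factors $p(\bz-t\mathbf{e})=p(\mathbf{e})\prod_{i=1}^d(\lambda_i(\bz)-t)$ with real eigenvalue functions $\lambda_1(\bz)\leq\cdots\leq\lambda_d(\bz)$, proves $\lambda_1$ is concave on $\Lambda_{\mathbf{e}}$ (G\aa{}rding's inequality), and on $\Lambda_{\mathbf{e}}$ identifies $p^{1/d}$ with a constant multiple of the geometric mean of the $\lambda_i$'s to invoke AM-GM. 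The main obstacle to a shorter direct argument --- say via Newton's inequality applied to $t\mapsto p(\bz+t\mathbf{d})$ for $\mathbf{d}\in\R_{++}^n$, which yields concavity of $t\mapsto\log p(\bz+t\mathbf{d})$ only along positive directions $\mathbf{d}$ --- is promoting concavity along positive directions to concavity along all directions, and this extension is precisely what G\aa{}rding's convex-cone argument supplies.
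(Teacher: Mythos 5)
Your proposal is correct, but it follows a genuinely different route from the paper's. You reduce the statement to G\aa{}rding's concavity theorem for hyperbolic polynomials: you check that a real stable homogeneous $p$ positive on $\R_{++}^n$ is hyperbolic with respect to every direction $\mathbf{e}\in\R_{++}^n$, note that $\R_{++}^n$ sits inside the hyperbolicity cone $\Lambda_{\mathbf{e}}$, and then invoke concavity of $p^{1/d}$ on that cone to conclude that $\log p = d\log p^{1/d}$ is concave. This is clean and ties the statement back to its classical origins (G\"uler's cited paper \cite{Gul97} indeed works in exactly this hyperbolic-polynomial framework), but it treats G\aa{}rding's theorem as a black box. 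The paper instead gives a short self-contained argument: restrict $p$ to a segment $\ba+t\bb$ in $\R_{++}^n$ with $\ba\in\R_{++}^n$, use $k$-homogeneity to write $p(\ba+t\bb)=t^k p(\ba/t+\bb)$, observe that $p(\ba/t+\bb)$ is real-rooted (by stability of $p$ and positivity of $\ba$, since $t\mapsto p(\ba t+\bb)$ is real-rooted and nonvanishing at $0$), obtain the factorization $p(\ba+t\bb)=p(\ba)\prod_{i}(1-t\lambda_i)$ with each $\lambda_i<1$, and conclude concavity termwise from concavity of $t\mapsto\log(1-t\lambda_i)$. The paper's proof is more elementary and avoids the detour through $p^{1/d}$; yours is shorter modulo the external theorem and makes the conceptual connection to hyperbolicity explicit. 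Both are valid proofs of the same statement.
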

\begin{proof}
Firstly, note that $\log p(\bz)$ is well-defined for any $\bz\in \R_{++}^n$.
It is enough to show that the function is concave along any interval in the positive orthant. Let $\ba\in\R_{++}^n$, $\bb\in\R^n$, and consider the line $\ba+t\bb$ where for any $t\in [0,1]$, $\ba+t\bb\in\R_{++}^n$. We show that $\log p(\ba+t\bb)$ is concave. Say $p$ is $k$-homogeneous, then
$$ p(\ba+t\bb)=p(t(\ba/t+\bb)) = t^k p(\ba/ t+\bb).$$
We claim that $p(\ba/t+\bb)$ is real rooted. 
Firstly, since $\ba\in\R_{++}^n$, and $p(.)$ is stable, $p(\ba t+\bb)$ is real rooted. But for any root $\lambda$ of $p(\ba t+\bb)$, $1/\lambda$ is a root of $p(\ba/t+\bb)$. Note that $p(\ba t+\bb)$ has no roots at $0$, because $\ba t+\bb\in\R_{++}^n$ for $t\in[0,1]$. So, $p(\ba/t+\bb)$ is real rooted.

Let $\lambda_1,\dots,\lambda_k$ be the roots of $p(\ba/t+\bb)$. We have
$$ p(\ba+t\bb)=t^k p(\ba/t+\bb)=t^k p(\ba) \prod_{i=1}^k (1/t - \lambda_i)=p(\ba)\prod_{i=1}^k (1-t\lambda_i).$$
We claim that for all $1\leq i\leq k$, $\lambda_i<1$. Otherwise, for some $t\in[0,1]$, $p(\ba+t\bb)=0$, but since $\ba+t\bb\in \R_{++}^n$, $p(\ba+t\bb)>0$ which is a contradiction. 
Therefore,
$$ \log p(\ba+t \bb)=\log p(\ba) + \sum_{i=1}^k \log (1-t\lambda_i).$$
The theorem follows by the fact that $\log(1-t\lambda)$ is a concave function of $t$ for $t\in [0,1]$ when $\lambda <1$.
\end{proof}
\begin{corollary}\label{cor:guler}
For any real stable polynomial $p\in\R[z_1,\dots,z_n]$ with nonnegative coefficients, $\log p(\bz)$ is concave in positive orthant. 	
\end{corollary}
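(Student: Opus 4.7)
My plan is to reduce the corollary to the preceding theorem (G\"uler's log-concavity for homogeneous real stable polynomials) via homogenization. Let $d := \deg p$ and define
$$P(z_0, z_1, \dots, z_n) := z_0^{d}\, p(z_1/z_0, \dots, z_n/z_0) = \sum_{k=0}^{d} z_0^{d-k}\, p_k(\bz),$$
where $p_k$ denotes the $k$-homogeneous component of $p$. Then $P$ is a $d$-homogeneous polynomial in $n+1$ variables with nonnegative coefficients (inherited from $p$), $P(1, \bz) = p(\bz)$, and (assuming $p \not\equiv 0$) $P$ is strictly positive on $\R_{++}^{n+1}$.

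The key technical step is to show that $P$ remains real stable. I would prove this using the directional characterization of real stability: a polynomial $Q$ is real stable iff $Q(\bw + t\bc) \in \R[t]$ is real-rooted for every $\bw \in \R^m$ and every direction $\bc \in \R_+^m$. The verification of this property for $P$ splits into cases according to the leading direction coordinate $c_0$. In the main case $c_0 > 0$, the substitution $u := w_0 + t c_0$ followed by $s := 1/u$ transforms $P(\bw + t\bc)$ (up to a nonzero affine change of variable in $t$) into the reverse polynomial $u^{d}\, p(\ba + \bb/u)$ with $\ba := \bc_1/c_0 \in \R_+^n$ and $\bb := \bw_1 - w_0\ba \in \R^n$, whose real-rootedness is equivalent to that of $p(\ba + s\bb)$ in $s$; this last fact follows from stability of $p$ via an approximation argument perturbing $\bb$ toward a strictly positive direction and invoking Hurwitz's theorem. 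The boundary case $c_0 = 0$ reduces (for $w_0 \ne 0$) to $p$ restricted to the nonnegative direction $\pm \bc_1/w_0$, handled directly by stability of $p$; for $w_0 = 0$, one uses that the top-homogeneous component $p_d$ of $p$ is itself real stable as a Hurwitz limit of the stable polynomials $t^{d}\, p(\bz/t)$ as $t \to 0^+$.

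Granted real stability of $P$, I apply the preceding theorem to the $d$-homogeneous polynomial $P$ and conclude that $\log P$ is concave on $\R_{++}^{n+1}$. Restricting to the affine hyperplane $\{z_0 = 1\}$ yields $\log p(\bz) = \log P(1, \bz)$, so $\log p$ is concave on $\R_{++}^n$ as a restriction of a concave function to an affine subspace.

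The main obstacle is the verification that $P$ is real stable; most of the technical work lies in the case analysis above and specifically in establishing real-rootedness of $p(\ba + s\bb)$ for $\ba \in \R_+^n$ and arbitrary $\bb \in \R^n$, which, although standard, requires a careful perturbation argument combined with Hurwitz's theorem.
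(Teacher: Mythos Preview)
Your overall strategy --- homogenize $p$ to a degree-$d$ polynomial $P(z_0,\bz)=z_0^d p(\bz/z_0)$, apply the preceding theorem to the homogeneous $P$, and restrict to $z_0=1$ --- is exactly what the paper does. The only difference is in how one establishes that $P$ is real stable: the paper simply invokes \cite[Thm~4.5]{BBL09}, which states that for a polynomial $p$ with nonnegative coefficients, $p$ is real stable if and only if its homogenization $p_H$ is real stable.

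Your attempt to prove this fact from scratch has a genuine gap. After reducing to the claim that $p(\ba+s\bb)$ is real-rooted in $s$ for every $\ba>0$ and $\bb\in\R^n$, you propose to ``perturb $\bb$ toward a strictly positive direction and invoke Hurwitz's theorem.'' This cannot work as stated: if $\bb$ has negative entries, no small perturbation of $\bb$ is strictly positive, so there is no approximating family to which the standard directional characterization applies. More tellingly, your argument never uses the nonnegativity of the coefficients of $p$, yet this hypothesis is essential. For instance, $p(z_1,z_2)=z_1z_2-1$ is real stable (if $\arg z_1,\arg z_2\in(0,\pi)$ then $\arg(z_1z_2)\in(0,2\pi)$, so $z_1z_2\neq 1$), but with $\ba=(\tfrac12,\tfrac12)$ and $\bb=(1,-1)$ one gets
\[
p(\ba+s\bb)=(\tfrac12+s)(\tfrac12-s)-1=-s^2-\tfrac34,
\]
which has purely imaginary roots. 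Hence the intermediate claim you isolate is \emph{false} for general real stable $p$, and any correct proof must exploit nonnegativity of the coefficients in a way your sketch does not indicate. The clean fix is precisely the paper's: cite \cite[Thm~4.5]{BBL09} for the stability of $P$ and proceed.
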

\begin{proof}
The homogenization of a polynomial $p\in\R[z_1,\dots,z_n]$ is defined as the polynomial $p_H\in\R[z_1,\dots,z_n,z_{n+1}]$ given by
$$ p_H(z_1,\dots,z_{n+1})=z_{n+1}^{\deg p}p(z_1/z_{n+1},\dots,z_n/z_{n+1}).$$
Borcea, Br\"and\'en and Liggett \cite[Thm 4.5]{BBL09} show that for a polynomial $p$ with nonnegative coefficients, $p$ is real stable if and only $p_H$ is real stable.
Since $p_H$ is real stable, by \autoref{thm:guler}, $\log p_H(z_1,\dots,z_{n+1})$ is concave in the positive orthant. 
It follows that $p_H(z_1,\dots,z_n,1)$ is also concave in the positive orthant. But, $p(z_1,\dots,z_n)=p_H(z_1,\dots,z_n,1)$.  So, $p(\bz)$ is concave in the positive orthant.
\end{proof}

It is also an immediate corollary of H\"older's inequality that a polynomial with nonnegative coefficients is log-convex in terms of the log of its variables (for more details on log-convex functions see \cite{BV06}).
\begin{fact}
\label{fact:logconvex}
For any polynomial $p(y_1,\dots, y_n)$ with nonnegative coefficients, the expression \[\log p(\exp(\by))=\log p(e^{y_1},\dots,e^{y_n})\] is convex as a function of $\by$.
\end{fact}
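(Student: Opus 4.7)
The plan is to unfold the definitions and apply H\"older's inequality directly, exactly as the statement hints. Write $p(\by) = \sum_{\bkappa} c_{\bkappa}\, \by^{\bkappa}$ with all $c_{\bkappa} \geq 0$. Substituting $\by = \exp(\bx)$ turns each monomial into an exponential: $\by^{\bkappa} = \exp(\langle \bkappa, \bx\rangle)$. Thus
\[
  p(\exp(\bx)) = \sum_{\bkappa} c_{\bkappa}\, \exp(\langle \bkappa, \bx\rangle),
\]
which is a nonnegative linear combination of log-linear functions of $\bx$. Convexity of $\log$ of such a sum is the standard log-sum-exp convexity fact.

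To verify it from first principles, I would fix $\bx_1, \bx_2 \in \R^n$ and $\lambda \in [0,1]$, and split each exponential multiplicatively:
\[
  c_{\bkappa} \exp(\langle \bkappa, \lambda \bx_1 + (1-\lambda)\bx_2\rangle) = \bigl(c_{\bkappa}\exp(\langle \bkappa,\bx_1\rangle)\bigr)^{\lambda}\bigl(c_{\bkappa}\exp(\langle\bkappa,\bx_2\rangle)\bigr)^{1-\lambda}.
\]
Summing over $\bkappa$ and applying H\"older's inequality with conjugate exponents $1/\lambda$ and $1/(1-\lambda)$ gives
\[
  p(\exp(\lambda \bx_1 + (1-\lambda)\bx_2)) \leq p(\exp(\bx_1))^{\lambda}\, p(\exp(\bx_2))^{1-\lambda}.
\]
Taking logarithms of both sides yields the convexity inequality for $\bx \mapsto \log p(\exp(\bx))$.

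There is no serious obstacle here. The only mildly delicate point is that $p(\exp(\bx))>0$ for the log to be defined, which needs at least one coefficient $c_{\bkappa}$ to be positive; if $p \equiv 0$ then the statement is vacuous, and otherwise $p(\exp(\bx))>0$ for every $\bx\in\R^n$ since each $c_{\bkappa}\exp(\langle\bkappa,\bx\rangle)\geq 0$ and the sum contains at least one strictly positive term. Stability is not used at all in this fact; only nonnegativity of the coefficients is needed.
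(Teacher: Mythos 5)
Your proof is correct and follows exactly the approach the paper itself gestures at: the paper states Fact~\ref{fact:logconvex} without proof, remarking only that it is ``an immediate corollary of H\"older's inequality,'' and your argument fills in precisely that calculation (rewriting $p(\exp(\bx))$ as a nonnegative combination of exponentials of linear forms and applying H\"older with exponents $1/\lambda$ and $1/(1-\lambda)$). The only tiny omission is that H\"older with those exponents degenerates when $\lambda\in\{0,1\}$, but the convexity inequality is a trivial equality in those cases, so nothing is lost.
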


\begin{lemma}
	\label{lem:alphanum}
	For any polynomial $p$ with nonnegative coefficients, and any $\balpha\geq 0$,
	\begin{equation*}
		\inf_{\bz>0}\frac{p(\bz)}{(\bz/\balpha)^\balpha}=\inf_{\bz>0}\frac{p(\balpha\bz)}{\bz^\balpha}.
	\end{equation*}
\end{lemma}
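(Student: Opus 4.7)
The plan is to verify the identity by the natural change of variables $\bz = \balpha \bw$, which is a bijection of the positive orthant to itself when $\balpha$ has no zero entries. Under this substitution, $\bz/\balpha = \bw$, so $(\bz/\balpha)^\balpha = \bw^\balpha$, and $p(\bz) = p(\balpha\bw)$. Thus
\[
\inf_{\bz>0}\frac{p(\bz)}{(\bz/\balpha)^\balpha} = \inf_{\bw>0}\frac{p(\balpha\bw)}{\bw^\balpha},
\]
which is exactly the claimed equality after renaming $\bw$ to $\bz$.

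The only subtlety is handling coordinates $i$ with $\alpha_i = 0$. Here one uses the usual convention $x^0 = 1$, so that $(z_i/\alpha_i)^{\alpha_i}$ and $w_i^{\alpha_i}$ both equal $1$ and disappear from the denominator; in particular the $i$-th coordinate of the variable point is otherwise unconstrained on the LHS while on the RHS we have $\alpha_i w_i = 0$ for every $w_i>0$. To reconcile this, I would use the fact that $p$ has nonnegative coefficients, hence is monotone nondecreasing in each $z_i$ on the positive orthant, and also continuous up to the boundary $z_i = 0$. Consequently
\[
\inf_{z_i>0}\,p(\dots,z_i,\dots) = \lim_{z_i\to 0^+} p(\dots,z_i,\dots) = p(\dots,z_i=0,\dots),
\]
which matches the value used on the RHS when $\alpha_i = 0$.

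Putting these two ingredients together, I would (i) split the coordinate set into $I_+ = \{i : \alpha_i > 0\}$ and $I_0 = \{i : \alpha_i = 0\}$, (ii) apply the change of variables $z_i = \alpha_i w_i$ on $I_+$, and (iii) on $I_0$ use the monotonicity/continuity argument above to show that the infimum on either side is achieved in the limit $z_i \to 0^+$ and equals the value of $p$ restricted to $z_i = 0$ on that coordinate. Since the remaining factors in the denominator are $1$ on $I_0$, neither side depends on $w_i$ (resp.\ $z_i$) for $i \in I_0$, and the two infima agree. The main (minor) obstacle is just dealing cleanly with the convention around $\alpha_i = 0$; there is no real analytic content beyond the linear reparametrization.
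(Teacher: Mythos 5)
Your proof is correct and takes essentially the same approach as the paper: the change of variables $\bz\leftrightarrow\balpha\bz$ for the coordinates where $\alpha_i>0$, and monotonicity of $p$ (from nonnegativity of coefficients) to send $z_i\to 0^+$ for the coordinates with $\alpha_i=0$. Your write-up is just a more explicit version of the paper's terser argument.
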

\begin{proof}
	If $\balpha>0$, the two sides become equal by a change of variable $\bz\leftrightarrow\balpha\bz$. If $\alpha_i=0$ for some $i$, since $p$ has nonnegative coefficients, it is obvious that one should let $z_i\to 0$ on the LHS, since the numerator decreases monotonically as $z_i\to 0$, and the denominator does not change. In the limit the LHS is the same as what we have on the RHS.
\end{proof}


\subsection{Jump Systems and Newton Polytope}
\begin{definition}
	For a polynomial $p(y_1,\dots,y_n)$, we define $\supp(p)$ as
	\[ \supp(p):=\{\bkappa\in\Z_+^n:C_p(\bkappa)\neq 0\}. \]
\end{definition}

Br\"and\'en \cite{Bra07} proved that if $p$ is a real stable polynomial then $\supp(p)$ forms what is known as a {\em jump system}.

\begin{definition}
	\label{def:jump}
	For two points $\bx, \by\in \Z^n$ define an $(\bx, \by)$-step to be any $\bdelta\in\Z^n$ such that $\|\bdelta\|_1=1$ and $\|\by-(\bx+\bdelta)\|_1=\|\by-\bx\|_1-1$. A set $\cF\subseteq \Z^n$ is called a jump system iff for every $\bx,\by\in \cF$ and $(\bx,\by)$-step $\bdelta$, either $\bx+\bdelta\in\cF$ or there exists an $(\bx+\bdelta,\by)$-step $\bdelta'$ such that $\bx+\bdelta+\bdelta'\in\cF$.
\end{definition}

Br\"and\'en \cite{Bra07} proved that for any real stable polynomial $p$, $\supp(p)$ is a jump system. We use the following immediate corollary  which follows from the fact that for any polynomial $p\in \R[y_1,\dots,y_n,z_1,\dots,z_n]$, $\supp(p(\by, \bz))=\supp(p(\by, -\bz))$.

\begin{fact}
	\label{fact:jump-system}
	If the polynomial $p(\by, \bz)$ is bistable, then $\supp(p)$ is a jump system.	
\end{fact}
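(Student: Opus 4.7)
The plan is to unpack the definition of \emph{bistable} and appeal directly to Br\"and\'en's result that the support of a real stable polynomial forms a jump system, together with the trivial observation about sign changes preserving supports that is highlighted in the sentence immediately preceding the statement.

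More concretely, by the definition given in \autoref{sec:stablepolynomials}, saying that $p(\by,\bz)$ is bistable means exactly that the polynomial $\tilde p(\by,\bz) := p(\by,-\bz)$ is a real stable polynomial in $2n$ variables. By Br\"and\'en's jump system theorem (as cited in the excerpt immediately before the fact), the support $\supp(\tilde p)$ of any real stable polynomial is a jump system in $\Z^{2n}$. So it only remains to relate $\supp(\tilde p)$ to $\supp(p)$.

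For this, I would use the elementary observation that negating a variable only multiplies individual coefficients by $\pm 1$ and thus does not alter the set of monomials that actually appear. Writing $p(\by,\bz) = \sum_{(\bkappa,\blambda)} c_{\bkappa,\blambda}\, \by^{\bkappa}\bz^{\blambda}$, we get $p(\by,-\bz) = \sum_{(\bkappa,\blambda)} (-1)^{|\blambda|} c_{\bkappa,\blambda}\, \by^{\bkappa}\bz^{\blambda}$, so $c_{\bkappa,\blambda}\neq 0$ if and only if $(-1)^{|\blambda|}c_{\bkappa,\blambda}\neq 0$. Hence $\supp(p)=\supp(\tilde p)$, and the latter is a jump system, finishing the proof.

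There is no real obstacle here: the content is entirely in Br\"and\'en's theorem for real stable polynomials, and the extension to bistable polynomials is a one-line reduction via the substitution $\bz\mapsto -\bz$. The only thing to be careful about is writing the sign-change argument precisely enough that the equality $\supp(p(\by,\bz))=\supp(p(\by,-\bz))$ is unambiguous; since the paper has already stated this equality explicitly just before the fact, the proof can be kept to a couple of lines.
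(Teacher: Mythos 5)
Your argument is exactly the paper's: the fact is stated as an immediate corollary of Br\"and\'en's jump-system theorem together with the observation $\supp(p(\by,\bz))=\supp(p(\by,-\bz))$, which you spell out slightly more explicitly but identically in substance. Correct, same route.
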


Lov\'asz showed that the natural generalization of the greedy algorithm for maximizing linear functions over matroids, also works for jump systems \cite{L97}.
Given a finite jump system $\cF\subseteq \Z^n$, and $\bw\in\R^n$, the following algorithm finds $\max_{\bx\in \cF}\langle \bw, \bx\rangle$: Sort the coordinates in such a way that $|w_1|\geq |w_2|\geq \dots\geq |w_n|$. Fix the coordinates of $\bx$, one by one; for $i=1,\dots,n$, we will fix $x_i$ as follows: among all members of $\cF$ whose first $i-1$ coordinates match $x_1,\dots,x_{i-1}$, let $x_i$ be largest $i$-th coordinate if $w_i>0$ and let $x_i$ be the smallest otherwise.

\begin{theorem}[\cite{L97}]
	\label{thm:greedy}
	For any finite jump system $\cF\subseteq \Z^n$ and $\bw\in \R^n$, the above greedy algorithm correctly finds a point $\bx$ maximizing $\langle \bw, \bx\rangle$.
\end{theorem}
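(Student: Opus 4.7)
The plan is to proceed by induction on the number of coordinates $n$. The base case $n=1$ is immediate: $\cF$ is a finite subset of $\Z$, and for $w_1 \neq 0$ greedy simply picks the extremum (for $w_1 = 0$ any point works). For the inductive step, after the coordinate reordering we may assume $|w_1| \geq |w_2| \geq \cdots \geq |w_n|$, and by symmetry (flipping signs of a coordinate merely reflects the jump system) we may assume $w_1 > 0$. Let $M = \max\{x_1 : \bx \in \cF\}$, which is what greedy picks for the first coordinate. The two substantive claims are: (A) some optimum $\by^* \in \cF$ satisfies $y^*_1 = M$; and (B) the slice $\cF' = \{\bx_{-1} : \bx \in \cF,\ x_1 = M\}$ is itself a jump system. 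Together with induction applied to $\cF'$ and $\bw_{-1}$, these finish the proof.

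For Claim (A), I would take an optimum $\by^* \in \cF$ whose first coordinate is as large as possible, and derive a contradiction if $y^*_1 < M$. Pick any $\bx \in \cF$ with $x_1 = M$. Since $x_1 > y^*_1$, the vector $\bdelta = +\be_1$ is a valid $(\by^*, \bx)$-step. By the jump system axiom, either $\by^* + \bdelta \in \cF$ or some $(\by^* + \bdelta, \bx)$-step $\bdelta'$ with $\|\bdelta'\|_1=1$ yields $\by^* + \bdelta + \bdelta' \in \cF$. In the first case the objective strictly increases by $w_1 > 0$, contradicting optimality. In the second case $\bdelta'$ changes some coordinate $j \geq 2$ by $\pm 1$, so the objective changes by $w_1 \pm w_j$; but $|w_j| \leq |w_1| = w_1$, so this is nonnegative, and the first coordinate strictly grew — contradicting the maximality in our choice of $\by^*$.

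For Claim (B), take $\bx, \by \in \cF$ with $x_1 = y_1 = M$ and an $(\bx_{-1}, \by_{-1})$-step $\widetilde\bdelta$ in $\cF'$. Extending by $\delta_1 = 0$ gives an $(\bx,\by)$-step $\bdelta$ in $\cF$ (the $\ell_1$ distance and the unit norm are preserved since the first coordinate already agreed). Apply the jump system axiom in $\cF$. If $\bx + \bdelta \in \cF$, its first coordinate is still $M$, so $\bx_{-1} + \widetilde\bdelta \in \cF'$. Otherwise there is an $(\bx + \bdelta, \by)$-step $\bdelta'$ with $\bx + \bdelta + \bdelta' \in \cF$; crucially, $\bx + \bdelta$ and $\by$ both have first coordinate $M$, so any valid step toward $\by$ must satisfy $\delta'_1 = 0$ (otherwise the $\ell_1$ distance to $\by$ would go up, not down), and hence the resulting point still has first coordinate $M$ and its projection to the last $n-1$ coordinates lies in $\cF'$.

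The main obstacle is Claim (A): its proof is the only place where the ordering $|w_1| \geq |w_j|$ is used, and it is precisely the exchange in the ``double-jump'' case that could in principle decrease the objective. The saving grace is that a double jump moves the first coordinate by $+1$ and exactly one other coordinate by $\pm 1$, so the largest-$|w_1|$ assumption ensures the net change $w_1 \pm w_j$ is nonnegative. Claim (B) looks like it could be subtle but turns out to be essentially automatic once one observes that the second step in the axiom cannot touch the first coordinate, since that coordinate already matches between $\bx+\bdelta$ and $\by$.
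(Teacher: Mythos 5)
The paper states this result as a black-box citation to Lov\'asz \cite{L97} and supplies no proof of its own, so there is no internal argument to compare against; your proof must stand on its own merits. Your strategy---induction on $n$ with Claim (A) (some optimum agrees with greedy's choice in the first coordinate) and Claim (B) (the fixed-first-coordinate slice $\cF'$ is again a jump system)---is the standard and correct route, and Claim (B) is argued correctly: the second jump cannot touch coordinate $1$ because that would increase the $\ell_1$ distance to $\by$ rather than decrease it.

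There is one small hole in Claim (A). You assert that the second step $\bdelta'$ ``changes some coordinate $j\geq 2$,'' but this is not forced: when $x_1 - y_1^* \geq 2$, the jump axiom can be satisfied with $\bdelta'$ equal to another unit step in the first coordinate. That case is harmless---the objective then jumps by $2w_1>0$, contradicting optimality outright rather than contradicting the maximality of $y_1^*$---but as written your case analysis is incomplete and should either include this case or argue that it does not occur. A second, more cosmetic point in Claim (B): besides observing that the resulting point projects into $\cF'$, you should explicitly check that the projected $\bdelta'_{-1}$ is a valid $(\bx_{-1}+\widetilde\bdelta,\by_{-1})$-step (not merely a unit vector landing in $\cF'$); this is automatic because $\bx+\bdelta$ and $\by$ agree in coordinate $1$, so the restricted $\ell_1$ distance drops by one exactly when the full $\ell_1$ distance does, but the step condition is part of the axiom and deserves a sentence.
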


We will be working with the convex hull of $\supp(p)$, often known as the Newton polytope.

\begin{definition}
	For a polynomial $p(y_1,\dots,y_n)$, define the Newton polytope of $p$ as
	\begin{equation*}
		\newt(p):=\conv(\supp(p))=\conv\{\bkappa\in\Z_+^n:C_p(\bkappa)\neq 0\},
	\end{equation*}
	where $\conv$ denotes convex hull.
\end{definition}

We use the following fact in multiple places in the proofs.
\begin{fact}
	\label{fact:newton}
	For a polynomial $p(\by)\in \R [y_1,\dots,y_n]$ with nonnegative coefficients, and $\balpha\in \R_+^n$ we have
	\begin{equation*}
		\inf_{\by>0}\frac{p(\by)}{\by^\balpha}>0\iff \balpha\in \newt(p).
	\end{equation*}
\end{fact}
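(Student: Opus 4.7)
The plan is to prove the two directions separately, both via reductions to standard convex-geometric / elementary inequalities after taking the substitution $y_i = e^{x_i}$, which turns $p(\by)/\by^\balpha$ into a sum of exponentials $\sum_{\bkappa \in \supp(p)} C_p(\bkappa) e^{\langle \bkappa - \balpha,\,\bx\rangle}$ over $\bx \in \R^n$.

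For the direction $\balpha \in \newt(p) \implies \inf > 0$, I would write $\balpha = \sum_{\bkappa \in \supp(p)} \lambda_\bkappa \bkappa$ as a convex combination (with $\lambda_\bkappa \geq 0$, $\sum \lambda_\bkappa = 1$), restrict attention to the subset $T = \{\bkappa : \lambda_\bkappa > 0\}$, and apply weighted AM--GM:
\begin{equation*}
p(\by) \;\geq\; \sum_{\bkappa \in T} \lambda_\bkappa \cdot \frac{C_p(\bkappa)}{\lambda_\bkappa}\, \by^\bkappa \;\geq\; \prod_{\bkappa \in T}\left(\frac{C_p(\bkappa)}{\lambda_\bkappa}\right)^{\lambda_\bkappa}\!\by^{\sum_{\bkappa \in T}\lambda_\bkappa \bkappa} \;=\; c\cdot \by^\balpha,
\end{equation*}
where $c > 0$ since every factor is positive. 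Dividing through gives $\inf_{\by>0} p(\by)/\by^\balpha \geq c > 0$.

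For the converse, I would prove the contrapositive: if $\balpha \notin \newt(p)$, then since $\newt(p)$ is a closed convex polytope, the separating hyperplane theorem yields a direction $\bw \in \R^n$ with $\langle \bw, \balpha\rangle > \langle \bw, \bkappa\rangle$ strictly for every $\bkappa \in \supp(p)$ (strict separation holds because $\supp(p)$ is finite). Then set $\by_t = \exp(t\bw) \in \R_{++}^n$ and compute
\begin{equation*}
\frac{p(\by_t)}{\by_t^\balpha} \;=\; \sum_{\bkappa \in \supp(p)} C_p(\bkappa)\, e^{\,t\,\langle \bw,\, \bkappa - \balpha\rangle}.
\end{equation*}
Each exponent is strictly negative, so every term tends to $0$ as $t \to +\infty$, forcing the infimum to be $0$. (The degenerate case $\supp(p) = \emptyset$, i.e.\ $p \equiv 0$, makes the claim vacuous on both sides after adopting the convention $\newt(p) = \emptyset$.)

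There is no serious obstacle: both halves are standard once one passes to the logarithmic variables $\bx = \log \by$. The one point to handle with care is the boundary of the positive orthant in the $\impliedby$ direction --- since some $\alpha_i$ may vanish, one should either note that the weighted AM--GM bound $p(\by) \geq c\,\by^\balpha$ is a genuine pointwise inequality on all of $\R_{++}^n$ (so no limit is needed), or invoke \autoref{lem:alphanum} to reduce to the case $\balpha > 0$.
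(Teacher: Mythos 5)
Your proof is correct and follows essentially the same route as the paper: weighted AM--GM applied to a convex-combination expression of $\balpha$ for the forward direction, and a separating hyperplane combined with the substitution $\by=\exp(t\bc)$ for the converse. The small addenda (restricting to the support $T$ of the convex weights, and noting the vacuous case $p\equiv 0$) are sound, though the closing remark about the boundary of the positive orthant is unnecessary, since the AM--GM bound $p(\by)\geq c\,\by^\balpha$ already holds pointwise on all of $\R_{++}^n$ even when some $\alpha_i=0$.
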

\begin{proof}
	Let $\balpha\in\newt(p)$; then we can write $\balpha$ as a convex combination of the vertices:  $\balpha=\sum_i \lambda_i\bkappa_i$, where $\lambda_i$'s are nonnegative and sum up to 1, and for all $i$,  $C_p(\bkappa_i)>0$. Therefore,
	\[
		p(\by)\geq \sum_i C_p(\bkappa_i)\by^{\bkappa_i}=\sum_i \lambda_i \frac{C_p(\bkappa_i)\by^{\bkappa_i}}{\lambda_i}\geq \prod_i \left(\frac{C_p(\bkappa_i)\by^{\bkappa_i}}{\lambda_i}\right)^{\lambda_i}=\by^\balpha\prod_i\left(\frac{C_p(\bkappa_i)}{\lambda_i}\right)^{\lambda_i},
	\]
	where for the last inequality we used weighted AM-GM. This implies that
	\[
		\inf_{\by>0}\frac{p(\by)}{\by^\balpha}\geq \prod_i\left(\frac{C_p(\bkappa_i)}{\lambda_i}\right)^{\lambda_i}>0.
	\]
	
	Conversely, suppose $\balpha\notin\newt(p)$; we show that $\inf_{\by>0}p(\by)/\by^\balpha=0$. Since $\newt(p)$ is convex, by separating hyperplane theorem, there exists $\bc\in\R^n$, such that $\langle \bc, \balpha\rangle >\langle \bc, \bkappa\rangle $ for every $\bkappa\in\newt(p)$. Now let $\by=\exp(t\bc)=(e^{tc_1},\dots,e^{tc_n})$, where $t>0$ will be fixed later. Then we have
	\[
		\frac{p(\by)}{\by^\balpha}=\frac{\sum_{\bkappa\in\Z_+^n}C_p(\bkappa)\by^\kappa}{\by^\balpha}\leq \exp\left(-t\left(\langle \bc, \balpha\rangle -\max_{\bkappa\in\newt(p))}\langle \bc, \bkappa\rangle\right)\right)\cdot\sum_{\bkappa\in\Z_+^n}C_p(\bkappa).
	\]
	It is easy to see that as $t\to\infty$, the RHS of the above inequality becomes zero.
\end{proof}

The convex hull of jump systems has been shown to be bisubmodular polyhedra in \cite{BC95}. The following is a restatement of this result.

\begin{theorem}[Restatement of \cite{BC95}]
	For a finite jump system $\cF\subseteq \Z^n$, define the function $f:\{-1,0,+1\}^n\to \R$ as
	\[ f(\bc):=\max_{\balpha\in\cF} \langle \bc, \balpha \rangle. \]
	Then
	\[ \conv(\cF) = \bigcap_{\bc\in\{-1,0,+1\}^n}\{\bx\in\R^n: \langle\bc, \bx\rangle\leq f(\bc) \}. \]
	Furthermore, $f$ is a bisubmodular function.
\end{theorem}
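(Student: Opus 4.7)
The plan is to prove the two assertions separately. The inclusion $\conv(\cF)\subseteq \bigcap_{\bc\in\{-1,0,+1\}^n}\{\bx:\langle\bc,\bx\rangle\leq f(\bc)\}$ is immediate from the definition of $f$: every $\balpha\in\cF$ satisfies $\langle\bc,\balpha\rangle\leq f(\bc)$, and the inequality is preserved under convex combinations. The content therefore lies in the reverse inclusion together with the bisubmodularity of $f$.

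For the reverse inclusion I would apply the separating hyperplane theorem and reduce to showing that for every direction $\bw\in\R^n$ we have $\max_{\bx\in P}\langle\bw,\bx\rangle \leq \max_{\balpha\in\cF}\langle\bw,\balpha\rangle$, where $P$ denotes the half-space intersection. Reindex so that $|w_1|\geq\dots\geq|w_n|$ and, with $|w_{n+1}|:=0$, write $\bw$ as the telescoping sum
\begin{equation*}
\bw \;=\; \sum_{k=1}^{n}(|w_k|-|w_{k+1}|)\,\mathbf{s}_k,\qquad \mathbf{s}_k:=(\sgn(w_1),\dots,\sgn(w_k),0,\dots,0)\in \{-1,0,+1\}^n,
\end{equation*}
a nonnegative combination of sign patterns. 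Any $\bx\in P$ then satisfies $\langle\bw,\bx\rangle\leq \sum_k(|w_k|-|w_{k+1}|)f(\mathbf{s}_k)$. On the other hand, by \autoref{thm:greedy} the greedy algorithm produces some $\balpha^\star\in\cF$ maximizing $\langle\bw,\cdot\rangle$; because it fixes coordinate $i$ using only $\sgn(w_i)$ and the prefix $x_1,\dots,x_{i-1}$, its first $k$ choices coincide with those it would make when run with weight $\mathbf{s}_k$, so $\langle\mathbf{s}_k,\balpha^\star\rangle=f(\mathbf{s}_k)$ for every $k$ simultaneously. Telescoping then gives $\langle\bw,\balpha^\star\rangle=\sum_k(|w_k|-|w_{k+1}|)f(\mathbf{s}_k)$, matching the upper bound.

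For bisubmodularity I would run a direct exchange argument using \autoref{def:jump}. Given $\bc_1,\bc_2\in\{-1,0,+1\}^n$, write $\bc_1\sqcup\bc_2$ and $\bc_1\sqcap\bc_2$ for the signed join and meet (the meet retains coordinate $i$ only if $(\bc_1)_i=(\bc_2)_i$, while the join zeros out coordinate $i$ when the signs are opposite and otherwise takes the unique nonzero value). Pick maximizers $\balpha_1,\balpha_2\in\cF$ of $\bc_1,\bc_2$ minimizing $\|\balpha_1-\balpha_2\|_1$ over all such pairs. If $\balpha_1\neq\balpha_2$, choose any $(\balpha_1,\balpha_2)$-step $\bdelta$; the jump axiom either yields $\balpha_1+\bdelta\in\cF$ directly or supplies a follow-up step $\bdelta'$ with $\balpha_1+\bdelta+\bdelta'\in\cF$, and symmetrically at $\balpha_2$. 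A case analysis over $((\bc_1)_i,(\bc_2)_i)$ at the coordinates affected by the exchange produces either a new pair of maximizers strictly closer in $\ell_1$ distance (contradicting minimality), or an explicit pair $\balpha_1',\balpha_2'\in\cF$ with $\langle\bc_1\sqcup\bc_2,\balpha_1'\rangle+\langle\bc_1\sqcap\bc_2,\balpha_2'\rangle\geq \langle\bc_1,\balpha_1\rangle+\langle\bc_2,\balpha_2\rangle$, which is exactly the bisubmodular inequality.

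The main obstacle I expect is precisely this final case analysis: for every combination of sign values $((\bc_1)_i,(\bc_2)_i)$ at the coordinates touched by the exchange, one must verify that the one- or two-step jump supplied by the axiom can be routed so as to simultaneously preserve membership in $\cF$, strictly shrink $\|\balpha_1-\balpha_2\|_1$, and not decrease the combined join/meet objective; the awkward cases are those where a single step lands outside $\cF$ and the follow-up step $\bdelta'$ has to be matched against the opposite-sign coordinate. The polyhedral half, by contrast, should go through cleanly once the telescoping decomposition of $\bw$ and the greedy algorithm from \autoref{thm:greedy} are in place.
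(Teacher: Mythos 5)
The paper does not prove this theorem: it is stated as a restatement of a result of Bouchet and Cunningham \cite{BC95}, with no proof given and the definition of bisubmodularity itself deferred to \cite{BC95}. So there is nothing to compare against; your proposal has to stand on its own.

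Your argument for the polyhedral description is correct and nicely organized. The telescoping decomposition $\bw=\sum_{k}(|w_k|-|w_{k+1}|)\mathbf{s}_k$ is a nonnegative combination of sign vectors, each $\bx\in P$ gives $\langle\bw,\bx\rangle\le\sum_k(|w_k|-|w_{k+1}|)f(\mathbf{s}_k)$, and the observation that the greedy run with weight $\bw$ makes the \emph{same} first $k$ choices as the greedy run with weight $\mathbf{s}_k$ (both use only $\sgn(w_i)$ and the accumulated prefix, and $1,\dots,n$ is a valid sort order for $\mathbf{s}_k$) gives $\langle\mathbf{s}_k,\balpha^\star\rangle=f(\mathbf{s}_k)$ simultaneously for all $k$. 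That is the whole point, and it closes the gap. A small thing worth saying explicitly: $P$ is bounded because the constraints for $\bc=\pm e_i$ bound each coordinate, and $\cF$ is finite, so the separating-hyperplane reduction to linear functionals is legitimate.

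The bisubmodularity half has a genuine problem of direction. Bisubmodularity of $f(\bc)=\max_{\balpha\in\cF}\langle\bc,\balpha\rangle$ is the inequality $f(\bc_1)+f(\bc_2)\ge f(\bc_1\sqcap\bc_2)+f(\bc_1\sqcup\bc_2)$; it is the signed analogue of the submodularity of a matroid rank function. Your plan starts from maximizers $\balpha_1,\balpha_2$ of $\bc_1,\bc_2$ and aims to produce $\balpha_1',\balpha_2'\in\cF$ with $\langle\bc_1\sqcup\bc_2,\balpha_1'\rangle+\langle\bc_1\sqcap\bc_2,\balpha_2'\rangle\ge\langle\bc_1,\balpha_1\rangle+\langle\bc_2,\balpha_2\rangle$. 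But exhibiting elements of $\cF$ only \emph{lower}-bounds $f(\bc_1\sqcup\bc_2)+f(\bc_1\sqcap\bc_2)$, so this would establish $f(\bc_1\sqcup\bc_2)+f(\bc_1\sqcap\bc_2)\ge f(\bc_1)+f(\bc_2)$ --- bisupermodularity, the opposite of what is wanted. To prove bisubmodularity one should start from maximizers $\bbeta_\sqcap,\bbeta_\sqcup$ of $\bc_1\sqcap\bc_2$ and $\bc_1\sqcup\bc_2$ and use exchanges to produce $\balpha_1,\balpha_2\in\cF$ with $\langle\bc_1,\balpha_1\rangle+\langle\bc_2,\balpha_2\rangle\ge\langle\bc_1\sqcap\bc_2,\bbeta_\sqcap\rangle+\langle\bc_1\sqcup\bc_2,\bbeta_\sqcup\rangle$, exploiting the coordinate-wise identity $\bc_1+\bc_2=(\bc_1\sqcap\bc_2)+(\bc_1\sqcup\bc_2)$. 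Beyond the direction issue, you acknowledge that the case analysis over $((\bc_1)_i,(\bc_2)_i)$ is not carried out, and the two-step exchange cases (where $\bdelta$ alone leaves $\cF$ and $\bdelta'$ may touch an opposite-signed coordinate) are exactly where the argument is delicate; as written this part is a sketch with a flipped target, not a proof.
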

For the definition of bisubmodular functions, see \cite{BC95}. We will only use the following well-known fact about bisubmodular functions. For more details see \cite{MF08}.
\begin{fact}
	\label{fact:bisubmodular-separation}
	Suppose that $\cF$ is a jump system and $f$ is the bisubmodular function associated with $\conv(\cF)$. Given an oracle that returns $f(\bc)$ for any $\bc\in\{-1,0,+1\}^n$, we can construct a separation oracle for $\conv(\cF)$. Each call to the separation oracle uses at most $\poly(n)$ evaluations of $f$.
\end{fact}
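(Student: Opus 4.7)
The plan is to reduce the separation problem for $\conv(\cF)$ to a bisubmodular function minimization problem, for which polynomial-time algorithms (using only oracle access) are known.

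Given a query point $\bx\in\R^n$, note that $\bx\in\conv(\cF)$ if and only if $\langle \bc,\bx\rangle\leq f(\bc)$ for every $\bc\in\{-1,0,+1\}^n$, by the characterization of $\conv(\cF)$ stated just before the fact. Define
\[ g(\bc) := f(\bc) - \langle \bc,\bx\rangle \qquad \text{for } \bc\in\{-1,0,+1\}^n. \]
Then $\bx\in\conv(\cF)$ iff $\min_{\bc} g(\bc)\geq 0$. Since $f$ is bisubmodular and $\langle \bc,\bx\rangle$ is linear (and linear functions are both bisubmodular and bisupermodular, which one sees directly from the definition involving the $\sqcup$ and $\sqcap$ operations on $\{-1,0,+1\}^n$), the function $g$ is again bisubmodular.

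Now I would invoke the known fact that any bisubmodular function on $\{-1,0,+1\}^n$ given by an evaluation oracle can be minimized in strongly polynomial time, using only $\poly(n)$ oracle calls; see, e.g., \cite{MF08}. Running this algorithm on $g$ either returns a minimizer $\bc^*$ with $g(\bc^*)\geq 0$, in which case $\bx\in\conv(\cF)$, or it returns $\bc^*$ with $g(\bc^*)<0$, i.e.\ $\langle\bc^*,\bx\rangle> f(\bc^*)$, which is a valid separating hyperplane because every $\by\in\conv(\cF)$ satisfies $\langle\bc^*,\by\rangle\leq f(\bc^*)<\langle\bc^*,\bx\rangle$.

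The only potential obstacle is verifying that bisubmodularity is preserved under subtracting a linear function and then quoting a bisubmodular minimization result that operates purely through the value oracle (so that each minimization step costs $\poly(n)$ evaluations of $f$, as claimed). Both are standard in the bisubmodular optimization literature, so this should be a short verification rather than a substantial argument.
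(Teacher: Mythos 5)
Your proposal is correct and follows essentially the same route as the paper: define $g(\bc)=f(\bc)-\langle\bc,\bx\rangle$, observe it remains bisubmodular because the subtracted term is modular, invoke strongly polynomial bisubmodular minimization from \cite{MF08}, and read off either membership or a separating hyperplane from the sign of the minimum. No gaps.
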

\begin{proof}
	To construct a separation oracle, it is enough to design a procedure that given a point $\bx\in\R^n$, finds a $\bc\in\{-1,0,+1\}^n$ such that $\langle \bc, \bx \rangle > f(\bc)$ or declares that there is no such $\bc$.

	The function $\bc\mapsto f(\bc)-\langle \bc, \bx\rangle$ is bisubmodular, since $f$ is bisubmodular and $\bc\mapsto\langle \bc,\bx\rangle$ is modular. Bisubmodular functions can be minimized in strongly polynomial time, see e.g. \cite{MF08}. Therefore one can solve the following optimization problem using polynomially many evaluations of $f$.
	\[ \min_{\bc\in\{-1,0,+1\}^n}f(\bc)-\langle\bc,\bx\rangle. \]
	
	If the answer is nonnegative, then $x\in\conv(\cF)$. Otherwise, the optimizing $\bc$ produces the separating hyperplane.
\end{proof}


\section{Applications}
\label{sec:applications}
\subsection{Applications in Counting}
In this section we prove \autoref{thm:counting} and we discuss several applications of our results for counting problems.

We start with Schrijver's inequality. The following theorem is the main technical result of \cite{Sch98}.
\begin{theorem}[Schrijver \cite{Sch98}]
Let $A\in\R^{n\times n}$ be a doubly stochastic matrix, and let $\tilde{A}$ be defined as follows:
$$ \tilde{A}_{i,j} = A_{i,j} (1-A_{i,j})$$
Then,
$$ \per(\tilde{A}) \geq \prod_{i,j} (1-A_{i,j}).$$
\end{theorem}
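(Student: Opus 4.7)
The plan is to apply the multilinear half of \autoref{thm:main} to a pair of stable polynomials in $n^2$ variables whose correlation $q(\partial)p|_{\bz=0}$ recovers $\per(\tilde A)$, with $\balpha$ chosen to equal the matrix $A$ itself, so that both infima collapse to weighted AM--GM expressions whose arithmetic telescopes to the target $\prod_{i,j}(1-A_{i,j})$.

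First I would introduce the polynomials in the variables $\{z_{i,j}\}_{i,j\in[n]}$,
\[ \hat p(\bz) = \prod_{i=1}^n \sum_{j=1}^n A_{i,j}\,z_{i,j}, \qquad \hat q(\bz) = \prod_{j=1}^n \sum_{i=1}^n (1-A_{i,j})\,z_{i,j}. \]
Each factor is a linear form with nonnegative coefficients, hence stable, and the factors within $\hat p$ (respectively $\hat q$) involve disjoint groups of variables; thus by \autoref{fact:prodstable} both polynomials are real stable, multilinear, and $n$-homogeneous. Writing $\hat p$ and $\hat q$ as sums over functions $f,g:[n]\to[n]$ indexing their monomials, the monomial of $\hat p$ at $f$ matches the one of $\hat q$ at $g$ if and only if $\{(i,f(i))\}_i=\{(g(j),j)\}_j$, which forces $g=f^{-1}$ and hence $f=\sigma\in\mathbb{S}_n$. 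Since both polynomials are multilinear the factorials in \eqref{eq:pqcorrelation} all equal $1$, so summing the matched coefficient-products gives
\[ \hat q(\partial\bz)\,\hat p(\bz)|_{\bz=0}=\sum_{\sigma\in\mathbb{S}_n}\prod_i A_{i,\sigma(i)}(1-A_{i,\sigma(i)})=\per(\tilde A). \]

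Next, I would apply \eqref{eq:multilinearmain} with $\alpha_{i,j}:=A_{i,j}$. By double stochasticity of $A$, every row-sum and every column-sum of $\balpha$ equals $1$, which places $\balpha$ in both $\newt(\hat p)$ and $\newt(\hat q)$. A weighted AM--GM with weights $\{A_{i,j}\}_{j}$ (summing to $1$ over $j$) yields
\[ \inf_{\by>0}\frac{\hat p(\by)}{\by^{\balpha}}=\prod_i\inf_{\by>0}\frac{\sum_j A_{i,j}\,y_{i,j}}{\prod_j y_{i,j}^{A_{i,j}}}=1, \]
while a weighted AM--GM with weights $\{A_{i,j}\}_{i}$ (summing to $1$ over $i$) applied to the values $(1-A_{i,j})\,z_{i,j}/A_{i,j}$ gives
\[ \inf_{\bz>0}\frac{\hat q(\bz)}{\bz^{\balpha}}=\prod_{i,j}\left(\frac{1-A_{i,j}}{A_{i,j}}\right)^{A_{i,j}}. \]
Substituting these together with $\balpha^{\balpha}(\bone-\balpha)^{\bone-\balpha}=\prod_{i,j}A_{i,j}^{A_{i,j}}(1-A_{i,j})^{1-A_{i,j}}$ into \eqref{eq:multilinearmain} produces
\[ \per(\tilde A)\geq \prod_{i,j}A_{i,j}^{A_{i,j}}(1-A_{i,j})^{1-A_{i,j}}\cdot\left(\frac{1-A_{i,j}}{A_{i,j}}\right)^{A_{i,j}}=\prod_{i,j}(1-A_{i,j}), \]
which is Schrijver's inequality. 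The only nontrivial step is the construction of $\hat p$ and $\hat q$: once the doubly stochastic structure of $A$ is split into two products of disjoint-variable linear forms whose matched monomials are combinatorially forced to correspond to permutations, both infima factor into independent one-dimensional weighted AM--GM problems, and the $A_{i,j}^{A_{i,j}}$ exponents cancel exactly.
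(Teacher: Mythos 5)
Your proposal is correct and follows essentially the same route as the paper: the same pair of product-of-linear-forms polynomials, the same choice $\balpha = A$, and the same two weighted AM--GM bounds (one using row-stochasticity for $\hat p$, one using column-stochasticity for $\hat q$) plugged into \eqref{eq:multilinearmain}. The only difference is cosmetic: you evaluate the two infima exactly and then combine the $\balpha^\balpha$ factor from $(\by\bz/\balpha)^\balpha$, whereas the paper bounds $p(\by)$ and $q(\bz)$ directly and substitutes into the fraction, but the arithmetic is identical.
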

Schrijver \cite{Sch98} used the above inequality to prove that any $d$-regular bipartite graph with $2n$ vertices has $\big(\frac{(d-1)^{d-1}}{d^{d-2}}\big)^n$ matchings. Gurvits and \cite{Gur11,GS14} used the above theorem to design a deterministic  $2^n$-approximation algorithm for the permanent.
We note that Gurvits used the machinery of stable polynomials to prove several lower bounds on permanent, but to the best of our knowledge his machinery could not capture  Schrijver's theorem.
Very recently, Lelarge \cite{Lel17} used the machinery developed by Csikv\'ari \cite{Csi14} to reprove Schrijver's theorem.

Here, we give a simple proof of this theorem using \autoref{thm:main}.
\begin{proof}
Let $\bz=\{z_{i,j}\}_{1\leq i,j\leq n}$ be a vector of $n^2$ variables. Define,
\begin{eqnarray*}
p(\bz) &=& \prod_{i=1}^n \sum_{j=1}^n A_{i,j} z_{i,j},\\
q(\bz) &=& \prod_{j=1}^n \sum_{i=1}^n (1-A_{i,j}) z_{i,j}.
\end{eqnarray*}
Then, it is not hard to see that
$$ q(\partial \bz) p(\bz) =\per(\tilde{A}).$$
This follows because each monomial of $p$ is a product $\prod_{i=1}^n z_{i,\sigma(i)}$ for some mapping $\sigma:[n]\to[n]$ and each monomial of $q$ is a product $\prod_{j=1}^n z_{\sigma(j),j}$ for some $\sigma:[n]\to[n]$. 

Let $\alpha_{i,j}=A_{i,j}$ for all $1\leq i,j\leq n$.
Since $p,q$ are multilinear, by \autoref{thm:main},
\begin{equation}\label{eq:pertA} \per(\tilde{A}) \geq (1-\balpha)^{1-\balpha} \inf_{\by,\bz>0} \frac{p(\by)q(\bz)}{(\by\bz/\balpha)^\balpha}.
\end{equation}
Firstly, observe that by weighted AM-GM inequality
$$ p(\by) = \prod_{i=1}^n \sum_{j=1}^n A_{i,j}y_{i,j} \geq \prod_{i,j} y_{i,j}^{A_{i,j}} = \by^\balpha,$$
where in the inequality we used that $A$ is  doubly stochastic.
Secondly, by another application of AM-GM inequality we have
$$ q(\bz) = \prod_{j=1}^n\sum_{i=1}^n A_{i,j} \frac{(1-A_{i,j})z_{i,j}}{A_{i,j}}\geq \prod_{i,j} \left(\frac{(1-A_{i,j})z_{i,j}}{A_{i,j}}\right)^{A_{i,j}}=((1-\balpha)\bz/\balpha)^\balpha,$$
where in the inequality we used that $A$ is doubly stochastic.
Plugging  the above two inequalities in \eqref{eq:pertA} we get
$$ \per(\tilde{A}) \geq (1-\balpha)^{1-\balpha} (1-\balpha)^\balpha = \prod_{i,j} (1-A_{i,j})$$
as desired.	
\end{proof}

\counting*
\begin{proof}
Firstly, we show that 
\begin{equation}\label{eq:Dapprox} 
e^{-\min\{\deg p,\deg q\}}\adjustlimits\sup_{\balpha\geq 0}   
\inf_{\by,\bz>0}  \frac{p(\by)q(\bz)}{(\by\bz/\balpha)^\balpha} 
\leq
q(\partial\bz)p(\bz)|_{\bz=0}
\leq
\adjustlimits\sup_{\balpha\geq 0}   
\inf_{\by,\bz>0}  \frac{p(\by)q(\bz)}{(\by\bz/\balpha)^\balpha} 
\end{equation}
Without loss of generality assume that $\deg p\leq \deg q$.
By \autoref{thm:main}, it is enough to show that for any $\balpha\geq 0$,
$$ e^{-\deg p} 
\inf_{\by,\bz>0}  \frac{p(\by)q(\bz)}{(\by\bz/\balpha)^\balpha} \leq 
\inf_{\by,\bz>0} e^{-\balpha} \frac{p(\by)q(\bz)}{(\by\bz/\balpha)^\balpha}.$$
To prove the above, it is enough to show that  if $\sum_i \alpha_i >\deg p$, then both sides of the above equal 0. Suppose $\langle \balpha,\bone\rangle >\deg p$. Let $\bz=\bone$ and $\by=t\bone$ for some parameter $t$. Then, as $t\to \infty$, $p(\by)/\by^\alpha\to 0$. This is because the degree of every monomial of $p(\by)$ is less than $\by^\balpha$. This was a special case of \autoref{fact:newton}. In general if $\balpha\notin\newt(p)\cap\newt(q)$, then both sides of the above are zero.

Using \eqref{eq:Dapprox}, to prove the theorem it is enough to design a polynomial time algorithm that computes 
$ \sup_{\balpha\geq 0} \inf_{\by,\bz>0}\frac{p(\by)q(\bz)}{(\by\bz/\balpha)^\balpha}.$
 Firstly, by \autoref{lem:alphanum} 
it is enough to solve $\sup_{\balpha\geq 0} \inf_{\by,\bz>0} \frac{p(\by)q(\balpha\bz)}{(\by\bz)^\balpha}.$
Now, let us do a change of variable $\by\leftrightarrow \exp(\by)$ and $\bz\leftrightarrow\exp(\bz)$.
By \autoref{fact:logconvex} and \autoref{cor:guler},
$$ \log \frac{p(\exp(\by))q(\balpha\exp(\bz))}{(\exp(\by)\exp(\bz))^\balpha}$$
is concave in $\balpha$ and convex in $\by,\bz$. 
Since the infimum of concave functions is also concave,
$\inf_{\by,\bz}\log \frac{p(\exp(\by))q(\balpha\exp(\bz))}{(\exp(\by)\exp(\bz))^\balpha}$ is a concave function of $\balpha$.
Therefore, 
$$ \adjustlimits\sup_{\balpha\geq 0}\inf_{\by,\bz>0} \frac{p(\exp(\by))q(\exp(\bz))}{(\exp(\by)\exp(\bz))^\balpha}.$$
is a convex program, or more precisely a concave-convex saddle point problem. In \autoref{sec:convex} we show how this convex program can be approximately solved in polynomial time. This gives an $e^{\deg p}$ approximation of $q(\partial\bz)p(\bz)|_{\bz=0}$ as desired. 
\end{proof}

Let us state some applications of the above result.
Our first application is in the context of recent results on Nash welfare maximization (NWM) problems. Recently the authors together with Saberi and Singh proved a generalization of Gurvits's inequality \eqref{eq:gurvits} to design an $e^n$ approximation for NWM problem \cite{AOSS17}.
Given a real stable polynomial $p\in\R[z_1,\dots,z_n]$ with nonnegative coefficients and an integer $1\leq D\leq n$, the main technical contribution of \cite{AOSS17} is a polynomial time algorithm that gives an $e^D$ approximation to 
$$ \Big(\sum_{S\in\binom{[n]}{D}} \prod_{i\in S} \partial_{z_i}\Big) p(\bz)|_{\bz=0}.$$
This result immediately follows from \autoref{thm:counting}.
Let $q$ be the $D$-th elementary symmetric polynomial on $z_1,\dots,z_n$, i.e., $q(\bz)=\sum_{S\in \binom{[n]}{D}} \prod_{i\in S} \partial_{z_i}$.
Recall that all elementary symmetric polynomials are real stable. Since $\deg q=D$,
we can approximate the above quantity up to an $e^D$ factor by \autoref{thm:counting}.

Our second application is a partial answer to an open problem of Kulesza and Taskar about determinantal point processes \cite{KT12}.
A determinantal point process (DPP) is a probability distribution $\mu$ on subsets of $[n]$  identified by a positive semidefinite ensemble matrix $L\in\R^{n\times n}$ where for any $S \subseteq [n]$ we have 
$$\P{S} \propto \det(L_{S,S}),$$ 
where $L_{S,S}$ is the square submatrix of $L$ corresponding to rows and columns indexed by $S$.
We refer interested readers to \cite{KT12} for applications of DPPs in machine learning. 
For an integer $0\leq k\leq n$, and a DPP $\mu$, the truncation of $\mu$ to $k$, $\mu_k$ is called a $k$-DPP, i.e.,
$$ \mu_k(S)\propto \begin{cases} \mu(S) & \text{if } |S|=k\\ 0 & \text{otherwise}.\end{cases}$$
Kulesza and Taskar left an open problem to estimate the quantity
$$ \sum_{S\subseteq[n]} \det(L_{S,S})^p$$
for some $p>1$. Their motivation is to study the Hellinger distance between two DPPs. 

Borcea, Br\"and\'en and Liggett \cite{BBL09} showed that DPPs and $k$-DPPs are strongly Rayleigh, i.e., if $\mu$ is a DPP, then $g_\mu(z_1,\dots,z_n)$ and $g_{\mu_k}(z_1,\dots,z_n)$ are real stable polynomials (see \autoref{sec:stablepolynomials}). 
Let $L,L'$ be ensemble matrices of two $k$-DPPs $\mu,\mu'$. Let $p=g_{\mu_k}$ and $q=g_{\mu'_k}$. Note that $p,q$ are $k$-homogeneous, multilinear, and have nonnegative coefficients. 
Therefore, by the above theorem we can estimate 
$$ q(\partial\bz)p(\bz)|_{\bz=0}=\sum_{S\subseteq [n]} \det(L_S)\det(L'_S)$$
up to a factor of $e^k$. To the best of our knowledge, to this date no approximation algorithm for the above quantity was known.

\subsection{Applications in Optimization}
In this section we prove \autoref{thm:optimization}. Let us restate the theorem.
\optimization*
\begin{proof}
We solve the following mathematical program and we show that its optimum solution gives an $e^{2D}$-approximation of the optimum. 
\begin{equation}
\begin{aligned}
	\adjustlimits\sup_{\balpha,\blambda\geq 0} \inf_{\by,\bz>0} &\frac{p(\blambda\by)q(\bz)}{(\by\bz/\balpha)^\balpha}\\
	\st \hspace{0.5cm}& \sum_i \lambda_i = D.
\end{aligned}\label{eq:cpoptimization}	
\end{equation}
	Firstly, we show that we can solve the above mathematical program. The idea is very similar to the proof \autoref{thm:counting}. We do a change of variable $\by\leftrightarrow\exp(\by)$, $\bz\leftrightarrow\exp(\bz)$ and we push the $\alpha$ in the denominator to $q$. So, instead we solve the following program:
$$
\begin{aligned}
	\adjustlimits\sup_{\balpha,\blambda\geq 0} \inf_{\by,\bz>0} \log\frac{p(\blambda\exp(\by)q(\balpha\exp(\bz))}{(\exp(\by)\exp(\bz))^\balpha}
\end{aligned}
$$
As before, $\log\frac{p(\blambda \exp(\by))q(\balpha\exp(\bz))}{(\exp(\by)\exp(\bz))^\balpha}$ is convex in $\by,\bz$ and concave in $\balpha,\blambda$. So, we can solve the above program in polynomial time; for more details see \autoref{sec:convex}. 
	
	Secondly, we show that the above program is a relaxation of the problem. 
	Let 
	$$\bkappa^*=\argmax_{\bkappa\in\{0,1\}^n} {C_p(\bkappa) C_q(\bkappa)}.$$ 
	Since every monomial of $p$ has degree exactly $D$, $\sum_i \kappa^*_i = D$, so $\blambda=\bkappa^*$ is a feasible solution. Let $\balpha=\bkappa^*$. 
	Then,
	$$ \inf_{\by>0} \frac{p(\blambda\by)}{\by^\balpha}\geq\inf_{\by>0}\frac{C_p(\bkappa^*)\by^{\bkappa^*}}{\by^{\bkappa^*}}=C_p(\bkappa^*)$$
	Similarly,
	$$ \inf_{\bz>0} \frac{q(\bz)}{(\bz/\balpha)^\balpha}=\inf_{\bz>0} \frac{q(\balpha \bz)}{\bz^\balpha} = \inf_{\bz>0} \frac{q(\bkappa^*\bz)}{\bz^{\bkappa^*}}=C_q(\bkappa^*),$$
	The first equality follows by \autoref{lem:alphanum}.
	Using the above two equations it follows that the optimum of \eqref{eq:cpoptimization} is at least $C_p(\bkappa^*)C_q(\bkappa^*)$.
	
	Next, we show that the optimum of \eqref{eq:cpoptimization} is within $e^{2D}$ factor of $C_p(\bkappa^*)C_q(\bkappa^*)$.
	Let $\lambda^*,\alpha^*$ be the arguments achieving optimum value in the $\sup$ in \eqref{eq:cpoptimization}\footnote{Note that since we can assume the domain of $\balpha,\blambda$ is compact, the supremum is attained at some point of the domain}.
	Consider the following probability distribution on $\bkappa\in\{0,1\}^n$. We sample $D$ numbers $X_1,X_2,\dots,X_D$ independently (with replacement) where for each $1\leq i\leq D$ we let $X_i=j$ with probability $\frac{\lambda^*_j}{D}$. We let $\kappa_i=1$, if $X_j=i$ for some $i$ and $\kappa_i=0$ otherwise. 
	We claim that
	\begin{equation} 
		\label{eq:lambdarounding}
		\E{C_p(\bkappa)C_q(\bkappa)}\geq e^{-2D} \OPT.
	\end{equation}
	Define
	$$ \tilde{p}(\bz)=p(\blambda^*\bz).$$
	Then, by \autoref{fact:externalfield}, $\tilde{p}$ is real stable. Therefore, by \autoref{thm:main}, 
	$$  \sum_{\bkappa\in\{0,1\}^n} C_{\tilde{p}}(\bkappa) C_q(\bkappa)\geq e^{-\balpha^*} \inf_{\by,\bz>0} \frac{\tilde{p}(\by)q(\bz)}{(\by\bz/\balpha^*)^{\balpha^*}}  \geq e^{-D} \adjustlimits\sup_{\balpha\geq 0} \inf_{\by,\bz>0} \frac{p(\blambda^*\by)q(\bz)}{(\by\bz/\balpha)^\balpha}\geq e^{-D}\OPT.$$
	The second to last inequality follows by \eqref{eq:Dapprox} and the last inequality follows by the fact that \eqref{eq:cpoptimization} is a relaxation of $\OPT$.
	
	So, to prove \eqref{eq:lambdarounding} it is enough to show that
	\begin{equation}\label{eq:tplambdarounding} \E{C_p(\bkappa)C_q(\bkappa)}\geq e^{-D} \sum_{\bkappa\in\{0,1\}^n} C_{\tilde{p}}(\bkappa)C_q(\bkappa).	
	\end{equation}
	Fix any monomial $\bkappa$ such that $C_p(\bkappa),C_q(\bkappa)>0$. Then, 
	$$ C_{\tilde{p}}(\bkappa)=C_p(\bkappa)\prod_{i:\kappa_i=1} \lambda^*_i.$$
	On the other hand,
	$$ \P{\bkappa}=D!\prod_{i:\kappa_i=1} \frac{\lambda^*_i}{D} \geq e^{-D} \prod_{i:\kappa_i=1} \lambda^*_i,$$
	The $D!$ factor comes from the fact that we can sample each such $\bkappa$ in $D!$ possible ways. 
	Plugging these two equations in \eqref{eq:tplambdarounding} proves it. This completes the proof of \autoref{thm:optimization}.
\end{proof}
One of the  consequences of the above theorem is a recent result of Nikolov and Singh \cite{NS16}. 
Given a PSD matrix $L$ and a partitioning $P_1,\dots,P_m$ of $[n]$, among all sets $S\subseteq [n]$ where for each $1\leq i\leq m$, $|S\cap P_i| = b_i$ we want to choose one that maximizes $\det(L_{S,S})$ where $L_{S,S}$ is the square submatrix of $L$ with rows and columns indexed by $S$. We refer interested readers to \cite{NS16} for applications of this problem. 

Let $k=\sum_i b_i$.
Let $v_1,\dots,v_n$ be the vectors of the Cholesky decomposition of $L$, i.e., for all $i,j$, $L_{i,j}=\langle v_i,v_j\rangle$. 
Let 
$$ p(y_1,\dots,y_n)=\partial_x^{n-k}\det(xI+y_1 v_1v_1^T+\dots+y_nv_nv_n^T)|_{x=0}.$$
By \autoref{fact:detstable} and \autoref{fact:substitutestable}   the above polynomial is real-stable. Furthermore, it is not hard to see that it is $k$-homogeneous and has nonnegative coefficients. In fact for any set $S\subseteq [n]$ of size $k$, the coefficient of the monomial $\bz^{\bone_S}$ is $\det(L_{S,S})$, where $\bone_S$ is the indicator vector of the set $S$. 
Now, let 
$$ q(z_1,\dots,z_n)=\prod_{i=1}^m e_{b_i}\left(\{z_j\}_{j\in P_i}\right)$$
be the generating polynomial corresponding to the partition matroid $P_1,\dots,P_k$. Again, by closure of stable polynomials under product, the above polynomial is real stable, multilinear, $k$-homogeneous, and has nonnegative coefficients. 
It follows by the above theorem that we can approximate
$$ \max_{S: |S\cap P_i|=b_i \forall i} \det(L_{S,S})$$
within an $e^{2k}$ factor. Note that our approximation factor is slightly worse but our theorem is significantly more general. In particular, we can replace $p,q$ in the above argument with any real stable multilinear polynomial. 

For example, suppose we are given a graph $G=(V,E)$ and we have assigned a vector $v_e$ to any edge $e$. We can let $q$ be the generating polynomial of the uniform spanning tree distribution of $G$. Then, using the above theorem we can approximate
$$ \max_T \det\Big(\sum_{e\in T} v_ev_e^T\Big)$$
within an $e^{2n}$ factor. 

\subsection{Solving the Convex Program}\label{sec:convex}

In this section we show how the convex programs from the previous two sections can be approximately solved in polynomial time. We measure the complexity of our algorithm in terms of the complexity of the involved polynomials. Recall the following definition.

\begin{definition}
	For a polynomial $p\in \R[z_1,\dots, z_n]$ with nonnegative coefficients, we define its complexity as $n+\deg p+|\log \min_{\bkappa:C_p(\bkappa)\neq 0} C_p(\bkappa)|+|\log \max_{{\bkappa:C_p(\bkappa)\neq 0}} C_p(\bkappa)|$, and we denote this by $\complexity{p}$.
\end{definition}

The main result of this section is the following.

\begin{theorem}
	Given a bistable polynomial $p\in \R[y_1,\dots,y_n,z_1,\dots,z_n]$ with nonnegative coefficients and an oracle that evaluates $p$ at any requested point $(\by,\bz)\in \R_+^{2n}$, one can find a $1+\epsilon$ approximation of
	\begin{equation}
		\label{eq:maincp}
		\adjustlimits\sup_{\balpha\geq 0}\inf_{\by,\bz>0} \frac{p(\by,\bz)}{(\by\bz/\balpha)^\balpha},
	\end{equation} 
	in time $\poly(\complexity{p}+\log(1/\epsilon))$.
\end{theorem}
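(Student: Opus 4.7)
The strategy is to recast \eqref{eq:maincp} as a concave--convex saddle point problem on polynomially bounded domains and then solve it by two nested applications of the ellipsoid method driven by the evaluation oracle.

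First, generalizing the one-variable substitution of Lemma~\ref{lem:alphanum} by replacing $\by\leftrightarrow \balpha\by$ turns the ratio into $p(\balpha\by,\bz)/(\by\bz)^\balpha$. Setting $\by=\exp(\by')$, $\bz=\exp(\bz')$ and taking logarithms, the problem becomes $\sup_{\balpha\geq 0}\inf_{\by',\bz'\in\R^n}F(\balpha,\by',\bz')$ with
\[
F(\balpha,\by',\bz')=\log p\bigl(\balpha\exp(\by'),\exp(\bz')\bigr)-\langle\balpha,\by'+\bz'\rangle.
\]
I would verify that $F$ is concave in $\balpha$ on $\R_{++}^n$ and convex in $(\by',\bz')$ on $\R^{2n}$. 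For the former, fix $(\by',\bz')$ and use Fact~\ref{fact:substitutestable} to substitute the positive real values $\exp(z'_i)$ for the $z_i$-variables, then apply Fact~\ref{fact:externalfield} with external field $\exp(\by')$; this shows that $\balpha\mapsto p(\balpha\exp(\by'),\exp(\bz'))$ is real stable, so Corollary~\ref{cor:guler} makes its logarithm concave and the linear term only reinforces this. Convexity in $(\by',\bz')$ is immediate from Fact~\ref{fact:logconvex}.

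Second, I would confine $\balpha$ to the polytope $K=\{\balpha\geq 0:(\balpha,\balpha)\in\newt(p)\}\subseteq[0,\deg p]^n$; outside $K$ the inner infimum vanishes by Fact~\ref{fact:newton} and $F\to-\infty$. Bistability of $p$ makes $\supp(p)$ a jump system by Fact~\ref{fact:jump-system}, and Fact~\ref{fact:bisubmodular-separation} then produces a separation oracle for $\newt(p)$ from an oracle for the bisubmodular function $\bc\mapsto\max_{\bkappa\in\supp(p)}\langle\bc,\bkappa\rangle$. The latter is computed by the greedy procedure of Theorem~\ref{thm:greedy}: it only needs to test whether certain coefficients of $p$ are nonzero, and each such test reduces to polynomial interpolation on the evaluation oracle at $\poly(\complexity{p})$ grid points. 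Intersecting with the diagonal constraint produces a separation oracle for $K$.

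Third, I would maximize the concave envelope $g(\balpha)=\inf_{\by',\bz'}F(\balpha,\by',\bz')$ over $K$ by the outer ellipsoid method; each outer query triggers an inner ellipsoid run that minimizes $F$ in $(\by',\bz')$. By the envelope theorem a supergradient of $g$ at $\balpha$ is $\nabla_{\balpha}F$ evaluated at the inner minimizer, and first-order information for $F$ is obtained by one evaluation of $p$ plus $O(n)$ finite differences. The step I expect to be the main obstacle is producing a $\poly(\complexity{p})$ diameter bound for the inner minimizer: when $\balpha$ approaches the boundary of $K$, $F$ decays only sublinearly along certain directions and its minimizer can escape to infinity. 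I would handle this by first shrinking $K$ to $K_\delta=(1-\delta)K+\delta\bkappa_0$ for some fixed $\bkappa_0$ in the relative interior of $K$ and $\delta=\poly(\epsilon/\complexity{p})$; the separating-hyperplane argument inside the proof of Fact~\ref{fact:newton}, combined with the explicit coefficient bounds encoded in $\complexity{p}$, then forces the inner minimizer into a box of radius $\poly(\complexity{p}/\epsilon)$, and this is enough for both ellipsoid layers to converge in $\poly(\complexity{p}+\log(1/\epsilon))$ time while losing only a $(1+\epsilon)$ multiplicative factor.
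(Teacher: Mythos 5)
Your high-level framework coincides with the paper's: perform the change of variables $\by\leftrightarrow\balpha\by$, pass to logarithmic coordinates to obtain a concave--convex saddle-point problem, constrain $\balpha$ to the diagonal slice of $\newt(p)$ using the bisubmodular separation oracle built from \autoref{fact:jump-system}, \autoref{thm:greedy}, and \autoref{fact:bisubmodular-separation}, run nested ellipsoid methods, and obtain first-order information by interpolation via \autoref{lem:p-partial-compute}. All of this is sound and matches the paper.

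The gap is in the step you flag as ``the main obstacle'': bounding the domain of the inner minimization. Your plan is to shrink $K=\{\balpha:(\balpha,\balpha)\in\newt(p)\}$ toward a relative-interior point $\bkappa_0$, so that $\balpha$ becomes ``strictly interior.'' But $\balpha$ being in the relative interior of the $n$-dimensional slice $K$ does \emph{not} place $(\balpha,\balpha)$ in the relative interior of $\newt(p)\subseteq\R^{2n}$. The diagonal $\{(\balpha,\balpha)\}$ can lie entirely within a proper face of $\newt(p)$; for example, if $p(\by,\bz)=p_1(\by)q(\bz)$ with $\newt(p_1)\cap\newt(q)$ a single common boundary point (say $p_1(y)=1+y$, $q(z)=z+z^2$), then $K=\{1\}$ and $(1,1)$ is a vertex of $\newt(p)$. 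More generally, if $(\balpha,\balpha)$ sits on a face with outward normal $\bc$, then along the direction $\bc$ the growth rate $\max_{\bkappa\in\supp(p)}\langle\bc,\bkappa\rangle-\langle\bc,(\balpha,\balpha)\rangle$ is \emph{zero}, so the inner minimizer escapes to infinity no matter how you shrink $K$, and the separating-hyperplane argument from \autoref{fact:newton} gives no linear lower growth bound. What the paper does at this point is prove \autoref{lem:bounded-domain}: using the jump-system structure of $\supp(p)$, a ``gap-closing'' argument shows that restricting $\|\log(\by,\bz)\|_\infty<M$ with $M=\poly(\complexity{p}+\log(1/\epsilon))$ changes the infimum by at most a factor $1+\epsilon$, for \emph{every} $\balpha\in\newt(p)$, boundary points included. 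This is the main technical content of the paper's proof, and your proposal has no substitute for it. (Your proposal also leaves unproved the claim that passing from $K$ to $K_\delta$ costs only a $(1+\epsilon)$ factor; this can be argued via concavity of $\log g$ together with the paper's $h(\balpha)=g_M(\balpha)+\balpha\log\balpha$ smoothness observation, but that too needs to be spelled out.)
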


In the rest of this section, we prove this theorem. We will use the ellipsoid method to compute the inner $\inf$ and also the outer $\sup$. The main difficulty for computing the $\inf$ is that the variables $\by, \bz$ are unbounded. In other words, we do not have a bounded outer ellipse for the ellipsoid method. In the first step, we use the fact that $\supp(p)$ is a jump system (\autoref{fact:jump-system}) to show that we can work with bounded $\by, \bz$ and only lose a small amount in the objective. The rest of the argument consists of constructing separation oracles and lower bounding the volume of inner ellipses for the ellipsoid methods.

\begin{lemma}
	\label{lem:bounded-domain}
	Given a polynomial $p\in\R[z_1,\dots,z_n]$ with nonnegative coefficients with the guarantee that $\supp(p)$ is a jump system and $\balpha\in \newt(p)$ and $\epsilon>0$, we have
	\[ \inf_{\bz:\norm{\log(\bz)}_\infty<M}\frac{p(\bz)}{\bz^\balpha}\leq(1+\epsilon)\inf_{\bz>0}\frac{p(\bz)}{\bz^\balpha}, \]
	for $M=\poly(\complexity{p}+\log(1/\epsilon))$.
\end{lemma}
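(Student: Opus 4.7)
The plan is to construct $\bx = \log \bz$ with $\|\bx\|_\infty < M$ such that $f(\bx) := \log p(\exp(\bx)) - \langle\balpha,\bx\rangle$ is within $\log(1+\epsilon)$ of $f^\star := \log V^\star$, where $V^\star := \inf_{\bz>0} p(\bz)/\bz^\balpha$. First I would bound $V^\star$: by \autoref{fact:newton} applied with a representation $\balpha = \sum_k\lambda_k\bkappa_k$ as a convex combination of vertices of $\newt(p)$, one has $V^\star \geq 2^{-\poly(\complexity p)}$; and $V^\star \leq p(\bone)\leq 2^{\poly(\complexity p)}$. Hence $\lvert f^\star\rvert\leq \poly(\complexity p)$. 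Note $f$ is convex by \autoref{fact:logconvex}, so it suffices to exhibit a single $\bx$ satisfying the two inequalities.

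Second, I would reduce to the smallest face $F$ of $\newt(p)$ containing $\balpha$, so that $\balpha$ is in the relative interior of $F$. Let $p_F(\bz) := \sum_{\bkappa \in F \cap \supp(p)} C_p(\bkappa)\bz^\bkappa$. The identity $V^\star = \inf_{\bz>0}p_F(\bz)/\bz^\balpha$ follows by selecting $\bv$ in the relative interior of the normal cone of $\newt(p)$ at $F$, so that $\langle\bkappa-\balpha,\bv\rangle = 0$ for $\bkappa \in F$ and $\langle\bkappa-\balpha,\bv\rangle < 0$ for $\bkappa \in \supp(p)\setminus F$, and letting $\bz \mapsto e^{t\bv}\bz$ with $t\to\infty$: along this path $p_F(\bz)/\bz^\balpha$ is preserved, while the non-$F$ monomials in $p(\bz)/\bz^\balpha$ decay exponentially. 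By the bisubmodular description of $\newt(p)$ from the Theorem of BC95, $\bv$ can be chosen in $\{-1,0,1\}^n$, and the strict inequalities above hold with margin $\geq 1/\poly(\complexity p)$ since the relevant $\bkappa-\balpha$ combined with $\bv$ takes integer values bounded by $\deg p$.

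Third, within the subspace $W^\perp = \text{aff}(F)-\balpha$ (which carries all the non-trivial dependence of $f_F(\bx) := \log p_F(\exp(\bx)) - \langle\balpha,\bx\rangle$, since $f_F$ is invariant along $W = (\text{aff}(F)-\balpha)^\perp$), I would exhibit a $(1+\epsilon/2)$-approximate minimizer $\bx_0$ with $\|\bx_0\|_\infty\leq \poly(\complexity p + \log(1/\epsilon))$. Inside $W^\perp$, $\balpha$ corresponds to the origin in the relative interior of $F-\balpha$, so the recession function of $f_F$ restricted to $W^\perp$ is strictly positive in every nonzero direction; combining the $\{-1,0,1\}$-valued facet normals from the bisubmodular description with the multiplicative slack $\log(1+\epsilon/2)$ yields the claimed polynomial bound on $\|\bx_0\|_\infty$, via truncating a minimizing sequence in $W^\perp$ along directions controlled by these normals.

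Finally, I would set $\bx^\star = \bx_0 + t\bv$ with $t=\poly(\complexity p + \log(1/\epsilon))$ chosen so that the non-$F$ contribution satisfies $(p-p_F)(\exp(\bx^\star))/p_F(\exp(\bx^\star))\leq \epsilon/2$; this is possible because $\langle\bkappa-\balpha,\bv\rangle\leq -1/\poly(\complexity p)$ for $\bkappa\notin F$. Then $f(\bx^\star)\leq f_F(\bx^\star) + \log(1+\epsilon/2) = f_F(\bx_0) + \log(1+\epsilon/2) \leq f^\star + \log(1+\epsilon)$, while $\|\bx^\star\|_\infty\leq \|\bx_0\|_\infty + t\|\bv\|_\infty \leq \poly(\complexity p+\log(1/\epsilon))$. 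The main obstacle I anticipate is the third step, namely the quantitative polynomial bound on the approximate minimizer $\bx_0$ uniformly in $\balpha$, since the true minimizer of $f_F$ can escape to infinity when $\balpha$ is close to the relative boundary of $F$ inside $\text{aff}(F)$; the bisubmodular/jump-system structure is essential here because it bounds the coefficients and right-hand sides of the defining inequalities of $F$ by polynomial quantities, making it possible to trade the $\log(1+\epsilon/2)$ slack for a polynomial bound on $\|\bx_0\|_\infty$.
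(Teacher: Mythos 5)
Your proposal takes a genuinely different route from the paper. The paper never explicitly reduces to a face of $\newt(p)$: it instead starts from an arbitrary near-optimal $\bz$, sorts $|\log z_i|$, and iteratively ``closes the largest gap'' by shifting a contiguous block of coordinates by a multiple of a $\{-1,0,+1\}$ vector $\bc$. At each gap-closure it splits $\supp(p)$ into $\cF_1=\{\bkappa:\langle\bc,\bkappa\rangle\ge\langle\bc,\balpha\rangle\}$ and $\cF_2$, and uses Lov\'asz's greedy theorem for jump systems (\autoref{thm:greedy}) to show via \autoref{claim:f1f2} that every $\cF_2$ monomial is dominated by an $\cF_1$ monomial with an $e^{\Delta}$ factor to spare, so each of the at most $n$ gap-closures multiplies the objective by at most $1+\epsilon/2n$. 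You instead argue globally: reduce to the minimal face $F\ni\balpha$, kill the non-$F$ terms along a normal-cone direction $\bv$, and bound an approximate minimizer of $f_F$ within $\textup{aff}(F)-\balpha$. Your framing is cleaner structurally, and steps 1, 2, and 4 of your outline are sound (your two-sided polynomial bound on $V^\star$ via \autoref{fact:newton} works, and given an integer $\bv$ in the relative interior of the normal cone the margin $\langle\bkappa-\balpha,\bv\rangle\le -1$ for $\bkappa\notin F$ is correct because $\langle\balpha,\bv\rangle$ equals the common integer value $\langle\bkappa',\bv\rangle$ for $\bkappa'\in F$).

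However, there is a genuine gap at exactly the point you flag: the polynomial bound on $\|\bx_0\|_\infty$ for an approximate minimizer of $f_F$ over $\textup{aff}(F)-\balpha$. The difficulty is real and not just anticipated: when $\balpha$ approaches the relative boundary of $F$, the recession slope of $f_F$ along some direction in $\textup{aff}(F)-\balpha$ tends to $0$ (it is $\max_{\bkappa\in F\cap\supp p}\langle\bkappa-\balpha,\bd\rangle$, which depends continuously on $\balpha$), so the unconstrained minimizer escapes to infinity, and the bound must come entirely from the multiplicative $1+\epsilon$ slack. Your sketch (``truncating a minimizing sequence along directions controlled by the facet normals'') does not explain how the slack converts to a uniform $\poly(\complexity{p}+\log(1/\epsilon))$ bound when the slope is arbitrarily small; some nontrivial argument is needed showing that once $\|\bx\|_\infty$ exceeds a polynomial threshold, $f_F$ has already risen by $\log(1+\epsilon/2)$ above its infimum, uniformly in $\balpha\in\newt(p)$. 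This is precisely the quantitative content that the paper's iterated gap-closing plus jump-system greedy argument delivers, and it is the heart of the lemma. A secondary, fixable issue: it is not true in general that $\bv$ can be taken in $\{-1,0,+1\}^n$. A relative-interior point of the normal cone at a proper face $F$ is a positive combination of the $\{-1,0,+1\}$ facet normals of the facets containing $F$, which can have entries as large as $n$; this is still polynomially bounded and suffices for your argument, but the statement as written is incorrect.
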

\begin{proof}
	It is enough to prove that for any $\bz>0$, we can find $\bz^*$ with $\norm{\log(\bz^*)}_\infty<M$ such that \[\frac{p(\bz^*)}{(\bz^*)^\balpha}<(1+\epsilon)\frac{p(\bz)}{\bz^\balpha}.\] 
	For the convenience of notation, assume that $z_{n+1}=1$. Without loss of generality, after renaming coordinates, assume that \[|\log(z_1)|\geq |\log(z_2)|\geq \dots |\log(z_n)|\geq |\log(z_{n+1})|=0.\]
	
	Let $\Delta$ be a value bounded by $\poly(\complexity{p}+\log(1/\epsilon))$ that will be fixed later. Let $i$ be the coordinate for which $|\log(z_i)|-|\log(z_{i+1})|$ is maximized. If this maximum value is bounded by $\Delta$, we are done. Otherwise we change $\bz$ by subtracting a constant from  $|\log(z_1)|, \dots,|\log(z_i)|$ to make sure that $|\log(z_i)|-|\log(z_{i+1})|$ is reduced to $\Delta$ and $p(\bz)/\bz^\balpha$ does not grow by more than a $1+\epsilon/2n$ factor. We obtain $\bz^*$ by repeating this ``gap-closing'' procedure for no more than $n$ times.
	
	Define the vector $\bc\in\{-1,0,+1\}^n$ in the following way:
	\[ c_i :=\begin{cases}
		+1&\text{if $j\leq i$ and $z_j>1$},\\
		-1&\text{if $j\leq i$ and $z_j<1$},\\
		0&\text{otherwise}.
	\end{cases} \]
	
	Let $s=|\log(z_i)|-|\log(z_{i+1})|-\Delta$. Let $\btz=\exp(-s\bc)\bz$ (i.e., $\tz_i=e^{-sc_i}z_i$) be the result of the ``gap-closing'' operation.
	We just need to set $\Delta$ in such a way that
	\begin{equation}
		\label{eq:eps-grow}
		\frac{p(\btz)}{\btz^\balpha}\leq(1+\epsilon/2n)\frac{p(\bz)}{\bz^\balpha}. 
	\end{equation} 
	
	Let us partition $\cF=\supp(p)$ into $\cF_1\cup\cF_2$; let \[\cF_1:=\{\bkappa\in \cF:\langle \bc, \bkappa\rangle \geq \langle \bc, \balpha\rangle \},\] and let $\cF_2$ be the rest. Note that $\cF_1\neq \emptyset$ because $\balpha\in\newt(p)=\conv(\cF)$. 
	We can decompose $p=p_1+p_2$:
	\[ p(\bz)=\underbrace{\sum_{\bkappa\in \cF_1} C_p(\bkappa)\bz^\bkappa}_{p_1(\bz)}+\underbrace{\sum_{\bkappa\in \cF_2} C_p(\bkappa)\bz^\bkappa}_{p_2(\bz)}. \]
	To prove \eqref{eq:eps-grow}, it is enough to show the following two inequalities
	\begin{eqnarray}
		\label{eq:eps-grow-1}\frac{p_1(\btz)}{\btz^\balpha}&\leq & \frac{p_1(\bz)}{\bz^\balpha},\\
		\label{eq:eps-grow-2}p_2(\btz) & \leq & \frac{\epsilon}{2n} p_1(\btz).
	\end{eqnarray}
	
	First we prove \eqref{eq:eps-grow-1}. For any $\bkappa\in \cF_1$, we have
	\[
		\frac{\btz^\bkappa}{\btz^\balpha}=\btz^{\bkappa-\balpha}=(\exp(-s\bc)\bz)^{\bkappa-\balpha}=\exp(-s\langle \bc, \bkappa-\balpha\rangle)\frac{\bz^\bkappa}{\bz^\balpha}\leq \frac{\bz^\bkappa}{\bz^\balpha},
	\]
	where in the inequality we used $\bkappa\in\cF_1$ and $s\geq 0$. Multiplying the above inequality with $C_p(\bkappa)$ and summing over all $\bkappa\in\cF_1$ we get \eqref{eq:eps-grow-1}.
	
	It remains to prove \eqref{eq:eps-grow-2}. We will use the following claim to prove \eqref{eq:eps-grow-2}.
	
	\begin{claim}
		\label{claim:f1f2}
		For any $\bkappa\in\cF_2$, there exists $\bkappa'\in\cF_1$ such that
		\[ \btz^\bkappa \leq e^{-\Delta}\btz^{\bkappa'}. \]
	\end{claim}
	\begin{proof}
		Let $\bkappa'$ be the output of the greedy algorithm (with the ordering of coordinates being $1,\dots,n$) from \autoref{thm:greedy} for the jump system $\cF$ and linear function $\bx\mapsto \langle \bc, \bx\rangle$. We start from $\bkappa$ and move towards $\bkappa'$, ensuring that $\bkappa$ remains in $\cF$ and that $\btz^{\bkappa}$ never decreases and in at least one step it increases by a factor of $e^{\Delta}$.
		
		While $\bkappa\neq \bkappa'$, choose the smallest $j$ for which $\kappa_j\neq \kappa'_j$. If $j>i$, stop. Otherwise, by the greedy construction of $\bkappa'$, we know that $c_j(\kappa'_j-\kappa_j)> 0$. Let $\bdelta\in \Z^n$ be such that $\delta_j=c_j$ and $\delta_k=0$ for $k\neq j$. Note that $\bdelta$ is a $(\bkappa, \bkappa')$-step, see \autoref{def:jump}. Therefore either $\bkappa+\bdelta\in \cF$ or there exists a $(\bkappa+\bdelta, \bkappa')$-step $\delta'$ such that $\bkappa+\bdelta+\bdelta'\in \cF$. In the former case, we set $\bkappa$ to $\bkappa+\bdelta$ and in the latter case we set $\bkappa$ to $\bkappa+\bdelta+\bdelta'$. We show that after the move, $\btz^\bkappa$ does not decrease.
		
		\noindent {\bf Case 1 ($\bkappa+\bdelta\in\cF$).} Since $c_j \log(\tz_j)\geq c_i\log(\tz_i)\geq \Delta >0$, we get
		\[ \btz^{\bkappa+\bdelta}=z_j^{c_j}\btz^\bkappa \geq e^{\Delta}\btz^{\bkappa}\geq \btz^{\bkappa}. \]
		
		\noindent {\bf Case 2 ($\bkappa+\bdelta+\bdelta'\in \cF$).} Recall that $\delta'$ has only one nonzero coordinate and that coordinate is $\pm 1$. Let that be the $k$-th coordinate. Since the first $j-1$ coordinates of $\bkappa$ and $\bkappa'$ are the same, it must be that $k\geq j$. Therefore
		\[ \btz^{\bkappa+\bdelta+\bdelta'}=z_j^{c_j}z_{k}^{\delta'_k}\btz^\bkappa\geq z_j^{c_j}z_{k}^{-c_k}\btz^\bkappa\geq \btz^{\bkappa}. \]
		
		It remains to show that in one of the steps, $\btz^{\bkappa}$ grows by at least $e^{\Delta}$. Look at the quantity $\langle \bc, \bkappa\rangle$ and how it changes over the course of the above algorithm. At the beginning, since $\bkappa\in\cF_2$
		 \[\langle \bc, \bkappa \rangle < \langle \bc, \balpha\rangle. \]
		 When the algorithm finishes $\langle \bc, \bkappa \rangle=\langle \bc, \bkappa'\rangle$; this is because $\bkappa_j=\bkappa'_j$ for $j\leq i$ and $c_j=0$ for $j>i$. Note that $\langle \bc, \bkappa'\rangle \geq \langle \bc, \balpha\rangle$ because $\bkappa'\in \cF_1$. Therefore, at one point the quantity $\langle \bc, \bkappa\rangle$ must increase.  It is easy to see that when $\langle \bc, \bkappa\rangle$ increases, $\btz^\bkappa$ grows by at least a factor of $e^\Delta$. In case 1, $\btz^\bkappa$ always increases by a factor of $e^\Delta$. In case 2, $\langle \bc, \bkappa\rangle$ increases exactly when either $\delta'_k=c_k$ or $k>i$, and in both situations $\btz^\bkappa$ also increases by a factor of $e^\Delta$.
	\end{proof}
	
	Now let us finish the proof of \eqref{eq:eps-grow-2}. We have
	\[ p_2(\btz) = \sum_{\bkappa\in\cF_2}C_p(\bkappa)\btz^\bkappa\leq (\deg p+1)^n \cdot \max_{\bkappa\in \cF_2}C_p(\bkappa) \cdot \max_{\bkappa\in \cF_2}\btz^\bkappa ,\]
	where we used the fact that the number of terms in $p$ is bounded by $(\deg p+1)^n$. Similarly for $p_1$, we have
	\[ p_1(\btz) = \sum_{\bkappa\in\cF_1}C_p(\bkappa)\btz^\bkappa\geq \min_{\bkappa\in \cF_1}C_p(\bkappa)\cdot \max_{\bkappa\in\cF_1}\btz^\bkappa. \]
	Putting the above two inequalities and using \autoref{claim:f1f2}, we get
	\[ \frac{p_2(\btz)}{p_1(\btz)}\leq \frac{(\deg p+1)^n \cdot \max_{\bkappa\in \cF_2}C_p(\bkappa)}{\min_{\bkappa\in \cF_1} C_p(\bkappa)}\cdot\frac{\max_{\bkappa\in\cF_2}\btz^\bkappa}{\max_{\bkappa\in\cF_1}\btz^\bkappa}\leq e^{\poly(\complexity{p})-\Delta}. \]
	It is easy to see that by making $\Delta=\poly(\complexity{p})+\log(2n/\epsilon)=\poly(\complexity{p}+\log(1/\epsilon))$, we get the desired inequality \eqref{eq:eps-grow-2}.
\end{proof}

In the remaining part of this section, we first show that we can construct separation oracles, and then we bound the volume of the inner ellipses for the two ellipsoid methods. Let us construct the separation oracles for the two ellipsoid methods. For the ellipsoid computing $\inf_{\by,\bz>0}p(\by,\bz)/(\by\bz/\balpha)^\balpha$, the separation oracle just needs to compute the gradient of the ($\log$ of the) objective with respect to $\log(\by), \log(\bz)$. It is easy to see that this gradient can be computed from the partial derivatives of $p$ with respect to $\by, \bz$. The latter can be computed using the following lemma.

\begin{lemma}
	\label{lem:p-partial-compute}
	Given oracle access to a polynomial $p\in\R[y_1,\dots,y_n]$, for any $i$, the quantity $\partial_{y_i}p$ can be computed at any given point using $O(\deg p)$ oracle calls.
\end{lemma}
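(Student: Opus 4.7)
The plan is to reduce the computation to univariate polynomial interpolation. Fix the point $(y_1,\ldots,y_n)$ at which we want to evaluate $\partial_{y_i}p$, and define the univariate polynomial
\begin{equation*}
    q(t) := p(y_1,\ldots,y_{i-1},t,y_{i+1},\ldots,y_n).
\end{equation*}
Then $\partial_{y_i}p$ evaluated at $(y_1,\ldots,y_n)$ equals $q'(y_i)$, so it suffices to compute $q'(y_i)$ from oracle access to $q$. The key observation is that $\deg q \le \deg p =: d$, so $q$ is completely determined by its values at any $d+1$ distinct points.

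First, I would pick $d+1$ distinct real points $t_0,t_1,\ldots,t_d$ (for instance, $t_k = y_i + k$ for $k=0,\ldots,d$), and query the oracle for $p$ at the points $(y_1,\ldots,y_{i-1},t_k,y_{i+1},\ldots,y_n)$ to obtain $q(t_0),\ldots,q(t_d)$. This uses exactly $d+1 = O(\deg p)$ oracle calls. Then, by Lagrange interpolation,
\begin{equation*}
    q(t) = \sum_{k=0}^{d} q(t_k) \cdot L_k(t), \qquad L_k(t) = \prod_{j \ne k} \frac{t - t_j}{t_k - t_j}.
\end{equation*}
Differentiating and evaluating at $t = y_i$ gives
\begin{equation*}
    q'(y_i) = \sum_{k=0}^{d} q(t_k) \cdot L_k'(y_i),
\end{equation*}
where each derivative $L_k'(y_i)$ is an explicit function of the fixed interpolation nodes and can be computed in $\poly(d)$ arithmetic operations without any further calls to the oracle for $p$.

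There is essentially no conceptual obstacle here — the entire content is that a univariate polynomial of degree $d$ is reconstructible from $d+1$ evaluations, and that its derivative at a prescribed point is a fixed linear combination of those evaluations. The only minor care needed is to ensure the interpolation nodes $t_0,\ldots,t_d$ are distinct (so that Lagrange interpolation is well-defined) and to choose them with bit complexity polynomial in the problem parameters so that the subsequent arithmetic remains efficient; this is trivially arranged by the choice above. Hence the total cost is $O(\deg p)$ oracle calls plus $\poly(\deg p)$ additional arithmetic, which proves the lemma.
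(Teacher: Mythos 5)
Your proof is correct and is essentially the same as the paper's: both reduce the problem to reconstructing a univariate restriction of $p$ of degree at most $\deg p$ from $\deg p + 1$ evaluations. The only cosmetic difference is that the paper shifts the variable so that the answer is the coefficient of $t$ in $q(t) = p(\ldots, y_i + t, \ldots)$, whereas you keep $q(t) = p(\ldots, t, \ldots)$ and evaluate $q'(y_i)$ via Lagrange interpolation; the two are the same computation.
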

\begin{proof}
	Given a point $\by$, consider the polynomial 
	\[ q(t)=p(y_1,\dots,y_{i-1},y_i+t,y_{i+1},\dots,y_n). \]
	This is a univariate polynomial of degree at most $\deg p$. Therefore we can compute all of its coefficients by evaluating $p$ at $\deg p+1$ many points. Then we can simply output the coefficient of $t$.
\end{proof}

Now we construct the separation oracle for the ellipsoid method computing 
$$\adjustlimits\sup_{\balpha\geq 0} \inf_{\by,\bz>0}\frac{p(\by,\bz)}{(\by\bz/\balpha)^\balpha}.$$ 
Recall that if $\balpha\notin \newt(p)$, then $\inf_{\by,\bz>0}\frac{p(\by,\bz)}{(\by\bz/\balpha)^\balpha}=0$, see \autoref{fact:newton}. Given some $\balpha\geq 0$, we first check whether $\balpha\in \newt(p)$ using \autoref{fact:bisubmodular-separation}, and in case it is not, we return the promised separating hyperplane. So let us assume that $\alpha\in\newt(p)$.

In its current form, the expression $\frac{p(\by, \bz)}{(\by\bz/\balpha)^\balpha}$ is not concave in $\balpha$. So we need to use the same trick as in \autoref{lem:alphanum} to transform it into a concave form. We do this transformation \emph{after} finding  an approximate optimum for the $\inf$. We compute an approximate optimum of $\inf_{\by,\bz>0}\frac{p(\by,\bz)}{(\by\bz/\balpha)^\balpha}$ by restricting the domain of the variables $\by,\bz$ using \autoref{lem:bounded-domain} and running the inner ellipsoid method to desired accuracy.

Next, we find a separating hyperplane at $\alpha$, by calculating the gradient of
\[ \frac{p({\bbeta}(\by/\balpha), \bz)}{(\by\bz/\balpha)^{\bbeta}} \]
with respect to $\bbeta$ evaluated at $\bbeta=\balpha$. This can again be done using \autoref{lem:p-partial-compute}.

It remains to lower bound the volume of the inner ellipses in the ellipsoid methods. First consider the ellipsoid method  computing the $\inf$. It is enough to lower bound the volume of the following set for any given $\epsilon>0$, $\by^*,\bz^*>0$
\[ \left\{(\log(\by), \log(\bz))\mid \frac{p(\by,\bz)}{(\by\bz/\balpha)^\balpha}\leq (1+\epsilon) \frac{p(\by^*,\bz^*)}{(\by^*\bz^*/\balpha)^\balpha}\right\}. \]
To see this, it is enough to note that $\log(\frac{p(\by, \bz)}{(\by\bz/\balpha)^\balpha})$ is Lipschitz with respect to $\log(\by), \log(\bz)$ with Lipschitz constant bounded by $\poly(\deg p, n)=\poly(\complexity{p})$.

Now consider the ellipsoid method for computing the $\sup$. To bound the volume of the inner ellipse we are going to use a similar idea as before, except we prove a slightly weaker form of smoothness compared to being Lipschitz. We only consider the set of $\alpha$'s in $\newt(p)$. We are going to assume $\newt(p)$ is full-dimensional. If $\newt(p)$ is lower-dimensional we can use subspace-identification methods to restrict the ellipsoid method to the affine subspace spanned by $\newt(p)$, and proceed similarly to the full-dimensional case.

Let us start by considering the function 
\[g_{\by,\bz}(\balpha)= \log(\frac{p(\by,\bz)}{(\by\bz)^\balpha}). \]
It is easy to see that for any $\by,\bz>0$, $g_{\by,\bz}$ is Lipschitz with respect to $\balpha$ with Lipschitz constant bounded by $\poly(\log(\by), \log(\bz), n)$.
So the function 
\[g_M(\balpha)=\inf_{\by,\bz:\|\by\|_\infty,\|\bz\|_\infty<M}g_{\by,\bz}(\balpha) \]
is also Lipschitz with respect to $\balpha$ with Lipschitz constant bounded by $\poly(\log(M), n)$. Now consider the function $h(\balpha)=g_M(\balpha)+\balpha\log(\balpha)$. This function is not Lipschitz in $\balpha$, but it satisfies a weaker form of smoothness. In particular if $\alpha$ is perturbed by $\epsilon^2$, one can show that $h(\balpha)$ changes by at most $L\epsilon$ for some $L=\poly(\log(M), n)$. The function $h$ provides a uniform approximation of $\inf_{\by,\bz>0}\log(\frac{p(\by,\bz)}{(\by\bz/\balpha)^\balpha})$ over $\newt(p)$. Using this fact and that $\newt(p)$ has only integral vertices of bounded norm, one can construct the desired inner ellipse.

\section{The Lower Bound}\label{sec:lowerbound}
In this section we prove the LHS of \eqref{eq:main} and \eqref{eq:multilinearmain}. However, instead of lower bounding the quantity $\sum_{\bkappa\in\Z_+^n} C_p(\bkappa)C_q(\bkappa)$ for two real stable polynomials $p(\by),q(\bz)$, we treat $p(\by)\cdot q(\bz)$ as a new bistable polynomial on $2n$ variables $\by,\bz$. Recall that 
by \autoref{fact:prodbistable} the product of any two stable polynomials is a bistable polynomial.  This allows us to prove the lower bound using an inductive argument with a stronger statement. The following theorem implies the LHS of \eqref{eq:main}. 
\begin{theorem}\label{thm:gurgen}
For any  bistable polynomial  $p(y_1,\dots,y_n,z_1,\dots,z_n)$  with nonnegative coefficients and any vector $\alpha\in\R_+^n$,
\begin{equation}\label{eq:pyznonliear} \sum_{\bkappa\in \Z_+^n}\bkappa!\cdot C_p(\bkappa,\bkappa) \geq e^{-\balpha} \inf_{\by,\bz>0} \frac{p(\by,\bz)}{(\by\bz/\balpha)^\balpha}.
\end{equation}
\end{theorem}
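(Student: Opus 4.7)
The plan is to proceed by induction on $n$, with the base case $n=0$ being immediate since $p$ is then a constant and both sides equal $p$.

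For the inductive step, I would eliminate the last pair $(y_n,z_n)$ by defining
\[
q(y',z') \;:=\; \left.\sum_{k=0}^{\deg p}\frac{1}{k!}\,\partial_{y_n}^k\partial_{z_n}^k\,p(y,z)\right|_{y_n=z_n=0},
\]
a polynomial in $y'=(y_1,\dots,y_{n-1})$ and $z'=(z_1,\dots,z_{n-1})$. Writing $p=\sum_{k,\ell}p_{k,\ell}(y',z')\,y_n^kz_n^\ell$, one sees $q=\sum_k k!\,p_{k,k}$, from which the identity $\sum_\kappa \kappa!\,C_p(\kappa,\kappa)=\sum_{\kappa'}\kappa'!\,C_q(\kappa',\kappa')$ follows by matching coefficients. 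To apply the inductive hypothesis to $q$ with $\alpha'=(\alpha_1,\dots,\alpha_{n-1})$, I need two things: (i) that $q$ is bistable, and (ii) the pointwise capacity reduction
\[
q(y',z')\;\geq\;e^{-\alpha_n}\,\inf_{y_n,z_n>0}\frac{p(y',y_n,z',z_n)}{(y_nz_n/\alpha_n)^{\alpha_n}}\qquad\text{for every fixed }y',z'>0.
\]
Using the factorization $(yz/\alpha)^\alpha=(y_nz_n/\alpha_n)^{\alpha_n}\cdot(y'z'/\alpha')^{\alpha'}$, (i), (ii), and the inductive hypothesis combine to give the theorem.

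Claim (i) follows from the fact that $1+\partial_{y_n}\partial_{z_n}$ preserves bistability: if $p(y,-z)$ is real stable then applying \autoref{lem:1-xystability} together with the chain-rule sign flip shows that $(1+\partial_{y_n}\partial_{z_n})p$ is again bistable. Iterating, $(1+\tfrac{1}{m}\partial_{y_n}\partial_{z_n})^m$ preserves bistability for every $m$, and since on polynomials of bounded degree these operators converge coefficient-wise to the formal exponential defining $q$, a Hurwitz-type limit argument (a coefficient-wise limit of stable polynomials is either identically zero or stable) gives bistability of $e^{\partial_{y_n}\partial_{z_n}}p$, after which substituting the real value $y_n=z_n=0$ preserves bistability by \autoref{fact:substitutestable}.

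The main obstacle is claim (ii). Observe that for fixed $y',z'>0$ the polynomial $\tilde p(y_n,z_n):=p(y',y_n,z',z_n)$ is bistable in two variables (by \autoref{fact:substitutestable}), and (ii) is precisely the case $n=1$ of the theorem applied to $\tilde p$. Thus everything reduces to proving the two-variable base case: for every bistable $\tilde p(y,z)=\sum_{k,\ell}c_{k,\ell}\,y^kz^\ell$ with nonnegative coefficients and every $\alpha\geq 0$,
\[
\sum_{k\geq 0}k!\,c_{k,k}\;\geq\;e^{-\alpha}\inf_{y,z>0}\frac{\tilde p(y,z)}{(yz/\alpha)^\alpha}.
\]
My plan for this base case is to first establish, by the same induction but using the finite operator $1+\partial_{y_n}\partial_{z_n}$ in place of the exponential, a multilinear bistable analogue with the sharper factor $(1-\alpha)^{1-\alpha}$; its own two-variable base case reduces to a direct calculation on $c_{00}+c_{10}y+c_{01}z+c_{11}yz$ using Br\"and\'en's bistability criterion $c_{00}c_{11}\geq c_{10}c_{01}$. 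Then the non-multilinear two-variable case would be lifted from the multilinear one by polarizing $\tilde p$ in both $y$ and $z$ with $m$ copies each (\autoref{thm:polarization}), applying the multilinear bistable bound to the polarized $\tilde P$ with uniform parameter $\beta=\alpha/m$, and letting $m\to\infty$ so that $(1-\alpha/m)^{(1-\alpha/m)\cdot 2m}\to e^{-2\alpha}$. The delicate technical obstacle is that the polarized diagonal sum $\sum_SC_{\tilde P}(S,S)=\sum_kc_{k,k}/\binom{m}{k}$ does not literally equal $\sum_kk!\,c_{k,k}$: reconciling them requires either an appropriate rescaling of the polarization or an auxiliary stable test polynomial, together with careful tracking of how the $\binom{m}{k}$ normalizations, the $k!$ factorials, and the capacity factor on the right combine in the $m\to\infty$ limit.
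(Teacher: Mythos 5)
Your proposal takes a genuinely different route from the paper, but it has one concrete gap that you flag yet do not close, plus a small calculational slip.

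The paper proves the multilinear bistable bound (its Theorem~\ref{thm:gurgenlinear}, via the $(1+\partial_{y_i}\partial_{z_i})$ induction and the $2\times 2$ base case Lemma~\ref{lem:twovarbistable}), then handles the general case in one shot: polarize the full $2n$-variable polynomial to a multilinear bistable polynomial in $2nm$ variables, apply Theorem~\ref{thm:gurgenlinear}, and let $m\to\infty$. You instead run an outer induction on $n$ that peels off one pair $(y_n,z_n)$ at a time via the operator $e^{\partial_{y_n}\partial_{z_n}}$, proving its bistability preservation with a Hurwitz-type limit of $(1+\tfrac1m\partial_{y_n}\partial_{z_n})^m$, and defer the polarization/limit machinery to the bivariate base case. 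That is a legitimate reorganization: your identity $\sum_\bkappa\bkappa!\,C_p(\bkappa,\bkappa)=\sum_{\bkappa'}\bkappa'!\,C_q(\bkappa',\bkappa')$ is correct, the factorization of $(\by\bz/\balpha)^\balpha$ makes the induction close, and the Hurwitz argument is standard. What your route buys is that the polarization/limit is only ever done for a bivariate polynomial; what it costs is the extra Hurwitz lemma, which the paper never needs.

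The genuine gap is exactly the one you flag at the end, and it is not merely a matter of bookkeeping. If you polarize $\tilde p(y,z)=\sum_k c_{k,\ell}y^kz^\ell$ in $y$ and $z$ with $m$ copies each, the diagonal coefficient sum of the polarized $\tilde P$ is
\[
\sum_{S\subseteq[m]}C_{\tilde P}(\bone_S,\bone_S)=\sum_k \binom{m}{k}\cdot\frac{c_{k,k}}{\binom{m}{k}^2}=\sum_k \frac{c_{k,k}}{\binom{m}{k}},
\]
which tends to $c_{0,0}$, not to $\sum_k k!\,c_{k,k}$, as $m\to\infty$. So the naive polarization produces the wrong left-hand side in the limit, and the argument simply does not close without an extra device. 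The paper's fix is precisely the asymmetric rescaling $r(\bty,\btz):=\pi_m(p)(\bty,m\btz)$: this inflates the $k$-th diagonal contribution by $m^k$, giving
\[
\sum_k \frac{m^k\,c_{k,k}}{\binom{m}{k}}=\sum_k k!\,c_{k,k}\cdot\frac{m^k}{m(m-1)\cdots(m-k+1)}\longrightarrow\sum_k k!\,c_{k,k},
\]
and the right-hand side is then repaired by the change of variables $\btz\leftrightarrow\btz/m$ (see the paper's Lemma~\ref{lem:prLHS}). You gesture at ``an appropriate rescaling of the polarization or an auxiliary stable test polynomial,'' which is exactly the right instinct, but without carrying it out the proof is incomplete.

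One smaller point: your claimed prefactor limit $(1-\alpha/m)^{(1-\alpha/m)\cdot 2m}\to e^{-2\alpha}$ is off. The multilinear theorem's prefactor $(1-\balpha)^{1-\balpha}$ is indexed by the $m$ \emph{pairs} $(\tilde y_j,\tilde z_j)$, not the $2m$ variables, so it is $(1-\alpha/m)^{m(1-\alpha/m)}\to e^{-\alpha}$. This is actually in your favor — with $e^{-2\alpha}$ you would prove something strictly weaker than the stated theorem — but it should be fixed.
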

Instead of directly proving the above theorem, we reduce the general case to the multilinear case. This reduction allows us to exploit Br\"and\'en characterization of real stable multilinear polynomials (see \autoref{thm:stablenegcorrelation}).
\begin{theorem}\label{thm:gurgenlinear}
For any bistable multilinear polynomial $p(y_1,\dots,y_n,z_1,\dots,z_n)$ with nonnegative coefficients and any vector $\alpha\in\R_+^n$,
$$ 
\sum_{\bkappa\in \{0,1\}^n} C_p(\bkappa,\bkappa) \geq  (1-\balpha)^{1-\balpha} \inf_{\by,\bz > 0} \frac{p(\by,\bz)}{(\by\bz/\balpha)^\balpha}$$
\end{theorem}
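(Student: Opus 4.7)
The plan is induction on $n$. For $n = 0$, the polynomial $p$ is a nonnegative constant and both sides reduce to its value. For the induction step, write
$$p(\by, \bz) = A + y_n B + z_n C + y_n z_n D,$$
where $A, B, C, D$ are multilinear polynomials in $(\by', \bz') := (y_1, \dots, y_{n-1}, z_1, \dots, z_{n-1})$. Define the ``folded'' polynomial
$$\tilde p(\by', \bz') \;:=\; A + D \;=\; (1 + \partial_{y_n}\partial_{z_n})\, p(\by, \bz)\big|_{y_n = z_n = 0}.$$
Partitioning the diagonal sum according to whether $\kappa_n = 0$ or $1$ gives
$$\sum_{\bkappa \in \{0,1\}^n} C_p(\bkappa, \bkappa) \;=\; \sum_{\bkappa' \in \{0,1\}^{n-1}} C_{\tilde p}(\bkappa', \bkappa'),$$
so it suffices to show that $\tilde p$ is bistable multilinear and then establish a pointwise lower bound relating $\tilde p$ to $p$ that produces the right factor of $(1 - \alpha_n)^{1 - \alpha_n}$.

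Bistability of $\tilde p$ follows from a short sign-tracking computation: using $\partial_{z_n}[p(\by, -\bz)] = -\partial_{z_n} p(\by, -\bz)$ one gets
$$\tilde p(\by', -\bz') \;=\; (1 - \partial_{y_n}\partial_{z_n})\, p(\by, -\bz)\big|_{y_n = z_n = 0}.$$
Since $p(\by, -\bz)$ is real stable, \autoref{lem:1-xystability} shows $(1 - \partial_{y_n}\partial_{z_n})$ preserves real stability, and \autoref{fact:substitutestable} then preserves it under $y_n = z_n = 0$. Multilinearity is immediate because $\partial_{y_n}\partial_{z_n}$ kills the $y_n, z_n$ dependence and setting $y_n = z_n = 0$ discards leftover $y_n, z_n$ factors.

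The heart of the proof is the pointwise estimate: for every $(\by', \bz') > 0$,
$$\tilde p(\by', \bz') \;\ge\; (1 - \alpha_n)^{1 - \alpha_n} \inf_{y_n, z_n > 0} \frac{p(\by', y_n, \bz', z_n)}{(y_n z_n / \alpha_n)^{\alpha_n}}.$$
Granting this, dividing both sides by $(\by' \bz' / \tilde\balpha)^{\tilde\balpha}$ (with $\tilde\balpha = (\alpha_1, \dots, \alpha_{n-1})$), taking the infimum over $(\by', \bz')$, and applying the induction hypothesis to $\tilde p$ with $\tilde\balpha$ closes the induction, since $(1 - \balpha)^{1 - \balpha} = (1 - \tilde\balpha)^{1 - \tilde\balpha} (1 - \alpha_n)^{1 - \alpha_n}$. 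To prove the pointwise estimate, fix $(\by', \bz')$ and regard $A, B, C, D \ge 0$ as the numerical values at this point. Applying Br\"and\'en's characterization \autoref{thm:stablenegcorrelation} to the real stable multilinear polynomial $p(\by, -\bz)$ and reinterpreting the inequality after undoing the sign flip on $z_n$ yields the ``positive correlation'' bound $B C \le A D$.

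With $BC \le AD$, the final step is purely algebraic. Increasing $B$ (equivalently $C$) only enlarges $A + tB + sC + tsD$ pointwise in $t, s > 0$, so the infimum is monotone in $B, C$ and we may assume $BC = AD$ without loss of generality. In that case the polynomial factors as $A(1 + tB/A)(1 + sC/A)$, and the substitution $u = tB/A$, $v = sC/A$ combined with the one-dimensional identity $\min_{u > 0}(1 + u)/u^{\alpha_n} = 1 / ((1 - \alpha_n)^{1 - \alpha_n} \alpha_n^{\alpha_n})$ evaluates the infimum to $A^{1 - \alpha_n} D^{\alpha_n} / \big((1 - \alpha_n)^{1 - \alpha_n} \alpha_n^{\alpha_n}\big)^2$. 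The desired estimate then reduces to the weighted AM--GM bound $A^{1 - \alpha_n} D^{\alpha_n} \le (1 - \alpha_n)^{1 - \alpha_n} \alpha_n^{\alpha_n} (A + D)$, which is the maximum of $A^{1 - \alpha_n} D^{\alpha_n}$ subject to $A + D$ fixed. The main obstacle is the sign bookkeeping in Br\"and\'en's inequality: it is the direction $BC \le AD$ (rather than $\ge$), coming from bistability rather than ordinary stability, that permits the WLOG-reduction by \emph{enlarging} $B, C$.
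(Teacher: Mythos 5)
Your proposal follows essentially the same route as the paper's proof: induction on the number of variable pairs via the operator $(1+\partial_{y_n}\partial_{z_n})$, bistability of the folded polynomial via the sign-flip identity and \autoref{lem:1-xystability}, Br\"and\'en's inequality to obtain $BC\le AD$, a monotone enlargement of $B,C$ to make it tight, a rank-one factorization, and two applications of weighted AM--GM. The one genuine organizational difference is that you prove the key pointwise bound
\[
\tilde p(\by',\bz')\;\ge\;(1-\alpha_n)^{1-\alpha_n}\inf_{y_n,z_n>0}\frac{p(\by',y_n,\bz',z_n)}{(y_nz_n/\alpha_n)^{\alpha_n}}
\]
uniformly in $(\by',\bz')$ and only then take the infimum and invoke the induction hypothesis on $\tilde p$, whereas the paper first invokes the induction hypothesis to extract an $\epsilon$-optimal witness $(\by^*,\bz^*)$ for the folded polynomial and then applies a standalone two-variable lemma at that point; your ordering is slightly cleaner because it dispenses with the $\epsilon$ bookkeeping, but the two are logically interchangeable and the two-variable content is identical. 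A few loose ends you should tidy: the intermediate expression for the infimum has an extra factor of $\alpha_n^{\alpha_n}$ (though your final AM--GM reduction $A^{1-\alpha_n}D^{\alpha_n}\le(1-\alpha_n)^{1-\alpha_n}\alpha_n^{\alpha_n}(A+D)$ is correct, so it is just a typo); the factorization $A(1+tB/A)(1+sC/A)$ presumes $A>0$, which fails when $A=0$ — the paper's rank-one factorization of $\begin{psmallmatrix}A&B'\\C'&D\end{psmallmatrix}$ (or a direct limiting argument) is more robust; and the boundary cases $\alpha_n\in\{0,1\}$ and $\alpha_n>1$ (where the infimum degenerates to $D$, $A$, or $0$ respectively) need a sentence each, as in \autoref{lem:twovarbistable}.
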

Note that not only do we use the above theorem to prove \autoref{thm:gurgen}, but we also use it to prove \eqref{eq:multilinearmain}. 
In the rest of this section, we prove \autoref{thm:gurgen} using the above theorem; then in \autoref{sec:lowerboundmultilinear} we prove the above theorem.
Observe that if $p$  is a multilinear bistable polynomial, then \autoref{thm:gurgen} follows from the above theorem (this is because $e^{-x} \leq (1-x)^{1-x}$ for any $0\leq x\leq 1$). So, we just need to reduce the case of a general $p$ to the multilinear case. 

Let $m$ be a very large integer that we fix later in the proof. For now, we assume $m$ is larger than the degree of all variables in $p$. Throughout the proof we use $\by,\bz$ to denote $n$-dimensional (positive) vectors and we use $\bty,\btz\in \R^{n\times m}$ to denote the variables of the polarization of $p$.
Let $q(\bty,\btz):=\pi_m(p)$ and $ r(\bty,\btz):=q(\bty, m\btz).$
Also, let $\btalpha\in \R_+^{n\times m}$ where for all $1\leq i\leq n$ and $1\leq j\leq m$,
$$ \tilde{\alpha}_{i,j} = \frac1m \alpha_i.$$
 By applying \autoref{thm:gurgenlinear} to $r(\bty,\btz)$ and $\btalpha$ we get 
\begin{equation}\label{eq:prlinearcon} \sum_{\bkappa\in\{0,1\}^{n\times m}} C_r(\bkappa,\bkappa) \geq (1-\btalpha)^{1-\btalpha} \inf_{\bty,\btz>0}\frac{r(\bty,\btz)}{(\bty\btz/\btalpha)^\btalpha}.	
\end{equation}
First, we show that the RHS of the above is at least the RHS of \eqref{eq:pyznonliear} for any $m$. Then, we show that as $m\to\infty$ the LHS of the above converges to the LHS of \eqref{eq:pyznonliear}.
\begin{lemma}\label{lem:prLHS}
	$$ (1-\btalpha)^{1-\btalpha} \inf_{\bty,\btz>0} \frac{r(\bty,\btz)}{(\bty\btz/\btalpha)^\btalpha} \geq e^{-\balpha}\inf_{\by,\bz>0} \frac{p(\by,\bz)}{(\by\bz/\balpha)^\balpha}.$$
\end{lemma}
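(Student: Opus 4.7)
The plan is to split the claimed inequality into two independent pieces whose product yields the result:
\begin{align*}
(1-\btalpha)^{1-\btalpha} &\;\geq\; e^{-\balpha}, \\
\inf_{\bty,\btz>0}\frac{r(\bty,\btz)}{(\bty\btz/\btalpha)^\btalpha} &\;\geq\; \inf_{\by,\bz>0}\frac{p(\by,\bz)}{(\by\bz/\balpha)^\balpha}.
\end{align*}

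For the prefactor inequality, I would use the scalar fact $(1-x)^{1-x}\geq e^{-x}$ for $x\in[0,1]$, which follows by noting that $f(x):=(1-x)\ln(1-x)+x$ satisfies $f(0)=0$ and $f'(x)=-\ln(1-x)\geq 0$, hence $f(x)\geq 0$. Applying this coordinate-wise to each $\tilde\alpha_{i,j}=\alpha_i/m$ and taking the product over all $nm$ coordinates, and using $\sum_{i,j}\tilde\alpha_{i,j}=\sum_i\alpha_i$, gives $(1-\btalpha)^{1-\btalpha}\geq e^{-\balpha}$. (This uses that $m$ is large enough that $\alpha_i/m\leq 1$, which is harmless since we take $m\to\infty$ later.)

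For the infimum inequality, fix arbitrary $\bty,\btz>0$ and let $y_i:=\bigl(\prod_{j=1}^m\tilde y_{i,j}\bigr)^{1/m}$, $z_i:=\bigl(\prod_{j=1}^m\tilde z_{i,j}\bigr)^{1/m}$ be the coordinate-wise geometric means. The polarization $q=\pi_m(p)$ substitutes $y_i^k\mapsto \frac{1}{\binom{m}{k}}e_k(\tilde y_{i,1},\dots,\tilde y_{i,m})$, and in $r(\bty,\btz)=q(\bty,m\btz)$ it substitutes $z_i^k\mapsto\frac{1}{\binom{m}{k}}e_k(m\tilde z_{i,1},\dots,m\tilde z_{i,m})$. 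By AM-GM applied to the $\binom{m}{k}$ monomials of $e_k$, one has $e_k(w_1,\dots,w_m)\geq\binom{m}{k}\bigl(\prod_j w_j\bigr)^{k/m}$, so these replacements are bounded below by $y_i^k$ and $(mz_i)^k$ respectively. Since $p$ has nonnegative coefficients, monomial-wise substitution yields
\[ r(\bty,\btz)\;\geq\; p(\by,m\bz). \]
A direct calculation of the denominator, using $\tilde\alpha_{i,j}=\alpha_i/m$ and $\prod_j\tilde y_{i,j}\tilde z_{i,j}=(y_iz_i)^m$, gives
\[ (\bty\btz/\btalpha)^{\btalpha}\;=\;\prod_i\bigl(my_iz_i/\alpha_i\bigr)^{\alpha_i}\;=\;(\by\cdot m\bz/\balpha)^{\balpha}. \]
Hence $\frac{r(\bty,\btz)}{(\bty\btz/\btalpha)^\btalpha}\geq\frac{p(\by,m\bz)}{(\by\cdot m\bz/\balpha)^\balpha}$, and taking infimum over $\bty,\btz$ on the left (which ranges over a superset of the symmetric configurations, but more importantly, $(\by,m\bz)$ ranges over all of $\R_{++}^{2n}$) yields the stated inequality.

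Multiplying the two bounds completes the proof. I expect the main bookkeeping challenge to be tracking the factor of $m$ sitting in front of $\btz$ in the definition $r(\bty,\btz)=q(\bty,m\btz)$: this factor must cancel exactly between numerator and denominator, and it is the specific choice $\tilde\alpha_{i,j}=\alpha_i/m$ that makes both the AM-GM step for the numerator and the algebraic collapse of the denominator line up cleanly.
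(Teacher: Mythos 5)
Your proposal is correct and takes essentially the same approach as the paper: the prefactor is handled by the scalar inequality $(1-x)^{1-x}\geq e^{-x}$ applied coordinate-wise, and the infimum comparison is proved via the AM-GM lower bound on elementary symmetric polynomials evaluated at coordinate-wise geometric means. The only difference is bookkeeping: the paper first does a change of variable $\btz\leftrightarrow\btz/m$ to pass from $r$ back to the polarization $q$ and cancel the factor of $m$, then applies AM-GM to $q$, whereas you carry the factor of $m$ directly through the AM-GM step and absorb it into the substitution $\bz' = m\bz$ at the end; both routes are valid and algebraically equivalent.
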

\begin{proof}
Firstly, note that by the definition of $\btalpha$, $\btalpha^\btalpha = (\balpha/m)^\balpha$. Also, using the inequality $(1-x)^{1-x}\geq e^{-x}$, we have
$$ (1-\btalpha)^{1-\btalpha}\geq e^{-\btalpha}=e^{-\balpha}.$$
Therefore, to prove the lemma it is enough to show that
\begin{equation}
	\inf_{\bty,\btz>0} \frac{r(\bty,\btz)}{(m\bty\btz)^\btalpha} \geq \inf_{\by,\bz>0} \frac{p(\by,\bz)}{(\by\bz)^\balpha}. \label{eq:fromrtopbt}
\end{equation}

It follows by the change of variable $\btz\leftrightarrow\btz/m$ that
$$ \inf_{\bty,\btz>0} \frac{r(\bty,\btz)}{(m\bty\btz)^\btalpha} = \inf_{\bty,\btz>0} \frac{r(\bty,\btz/m)}{(\bty \btz)^\btalpha} = \inf_{\bty,\btz>0} \frac{q(\bty,\btz)}{(\bty\btz)^\btalpha},$$
where in the last identity we used $r(\bty,\btz/m)=q(\bty,m\btz/m)=q(\bty,\btz)$.
So, to prove \eqref{eq:fromrtopbt}, it is enough to show that 
\begin{equation}\inf_{\bty,\btz>0} \frac{q(\bty,\btz)}{(\bty\btz)^\btalpha} \geq \inf_{\by,\bz>0} \frac{p(\by,\bz)}{(\by\bz)^\balpha}. \label{eq:fromqtopbt}	
\end{equation}
We use the AM-GM inequality to prove this.
Fix an arbitrary $\bty,\btz\in \R^{n\times m}$, let $y_i=(\prod_j \tilde{y}_{i,j})^{1/m}$ and $z_i=(\prod_j \tilde{z}_{i,j})^{1/m}$ for all $i$.
We show that $\frac{q(\bty,\btz)}{(\bty\btz)^\balpha} \geq \frac{p(\by,\bz)}{(\by\bz)^\balpha}$ which proves \eqref{eq:fromqtopbt}. 

Firstly, $\bty^\btalpha=\by^\balpha$ and similarly $\btz^\btalpha=\bz^\balpha$. Therefore, all we need to show is that $p(\by,\bz)\leq q(\bty,\btz)$. Recall that $q=\pi_m(p)$; this means that we have substituted each term $y_i^k$ in $p$ with $\frac{1}{\binom{m}{k}}e_k(y_{i,1},\dots,y_{i,m})$. Since all coefficients of $p$ are nonnegative, to show $p(\by,\bz)\leq q(\bty,\btz)$ it is enough to show that for all $i,k,m$
$$ \frac{1}{\binom{m}{k}} e_k(\tilde{y}_{i,1},\dots,\tilde{y}_{i,m})\leq y_i^k,$$
or equivalently,
$$ \sum_{S\in \binom{[m]}{k}} \frac{1}{\binom{m}{k}}\prod_{j\in S} \tilde{y}_{i,j} \geq \left(\prod_{j=1}^m \tilde{y}_{i,j}\right)^{1/m}.$$
The above inequality is just an application of AM-GM. This proves \eqref{eq:fromqtopbt} and completes the proof of \autoref{lem:prLHS}.
\end{proof}
To finish the proof of \autoref{thm:gurgen} it remains to show that the LHS of \eqref{eq:prlinearcon} converges to the LHS of \eqref{eq:pyznonliear} as $m\to\infty$. We prove this statement term by term.

Fix $\bkappa\in\Z_+^n$ and consider the monomial $C_p(\bkappa,\bkappa) \prod_{i=1}^n y_i^{\kappa_i}z_i^{\kappa_i}$ of $p$. 
The contribution of this monomial to the LHS of \eqref{eq:pyznonliear} is $\bkappa!\cdot C_p(\bkappa,\bkappa).$
This monomial is substituted with
\begin{eqnarray*}  C_p(\bkappa,\bkappa) \prod_{i=1}^n \frac{e_{\kappa_i}(y_{i,1},\dots,y_{i,m})e_{\kappa_i}(mz_{i,1},\dots,mz_{i,m})}{\binom{m}{\kappa_i}^2} \\=C_p(\bkappa,\bkappa)\prod_{i=1}^n \frac{ m^{\kappa_i} e_{\kappa_i}(y_{i,1},\dots,y_{i,m})e_{\kappa_i}(z_{i,1},\dots,z_{i,m})}{\binom{m}{\kappa_i}^2}.
\end{eqnarray*}
in $r$. The contribution of this to the 
LHS of \eqref{eq:prlinearcon} is 
$$ C_p(\bkappa,\bkappa)\prod_{i=1}^n\frac{m^{\kappa_i}}{\binom{m}{\kappa_i}^2}\cdot \binom{m}{\kappa_i}=\bkappa!\cdot C_p(\bkappa,\bkappa)\cdot \prod_{i=1}^n \frac{m^{\kappa_i}}{m(m-1)\dots(m-\kappa_i+1)}.$$
As $m\to\infty$ the ratio in the RHS converges to $1$, so we get the same contribution.
This completes the proof of \autoref{thm:gurgen}.

\subsection{Multilinear Case}\label{sec:lowerboundmultilinear}
In this section we prove \autoref{thm:gurgenlinear}.
Our first observation is the following useful identity:
$$ \sum_{\bkappa\in\{0,1\}^n} C_p(\bkappa,\bkappa)=\prod_{i=1}^n (1+\partial_{y_i} \partial_{z_i}) p(\by,\bz) \Big|_{\by=\bz=0}.$$
To see this identity, note that any monomial of $p$ which is not of the form $\by^\bkappa \by^\bkappa$ for some $\bkappa$ is mapped to zero in the RHS. Furthermore, any monomial of the form $C_p(\bkappa,\bkappa)\by^\bkappa\bz^\bkappa$ is mapped to $C_p(\bkappa,\bkappa)$. So, throughout the proof we lower bound $\prod_{i=1}^n (1+\partial_{y_i} \partial_{z_i}) p(\by,\bz) \big|_{\by=\bz=0}$. The main reason that we use this reformulation is that it allows us to use induction. In the following lemma we show that for any bistable polynomial $p$, $(1+\partial_{y_i}\partial_{z_i})p$ is also bistable. So, we can apply the operator $\prod_{i=1}^n (1+\partial_{y_i}\partial_{z_i})$ inductively.

\begin{lemma}\label{lem:1+xy}
For any multilinear bistable polynomial $p(y_1,\dots,y_n,z_1,\dots,z_n)$ and any $1\leq i\leq n$, $(1+\partial_{y_i} \partial_{z_i})p|_{y_i=z_i=0}$ is also bistable.
\end{lemma}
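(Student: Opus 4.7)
The plan is to reduce the bistability claim for $(1+\partial_{y_i}\partial_{z_i})p|_{y_i=z_i=0}$ to real-stability of $(1-\partial_{y_i}\partial_{z_i})\tilde p$, where $\tilde p(\by,\bz)=p(\by,-\bz)$, and then to apply two results already established in the excerpt: \autoref{lem:1-xystability} (for the $1-\partial_{y_i}\partial_{z_i}$ operator) and \autoref{fact:substitutestable} (for the substitution $y_i=z_i=0$).

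More precisely, by definition of bistability, it suffices to show that the polynomial
\[
Q(\by,\bz) := (1+\partial_{y_i}\partial_{z_i})p(\by,\bz)\Big|_{\by,\bz\to\by,-\bz}
\]
is real stable, and then set $y_i=z_i=0$. The key computational step is the identity
\[
(\partial_{z_i} p)(\by,-\bz)=-\partial_{z_i}\bigl[p(\by,-\bz)\bigr]=-\partial_{z_i}\tilde p(\by,\bz),
\]
which follows from the chain rule applied to the substitution $\bz\mapsto -\bz$. Using this (and the fact that $\partial_{y_i}$ commutes with the substitution $\bz\mapsto-\bz$), I obtain
\[
Q(\by,\bz)=\tilde p(\by,\bz)-\partial_{y_i}\partial_{z_i}\tilde p(\by,\bz)=(1-\partial_{y_i}\partial_{z_i})\tilde p(\by,\bz).
\]
Since $p$ is bistable, $\tilde p$ is real stable, and \autoref{lem:1-xystability} (applied with the two indices being $y_i$ and $z_i$, viewing $\tilde p$ as a polynomial in all $2n$ variables) shows $Q$ is real stable.

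Finally, to land on $(1+\partial_{y_i}\partial_{z_i})p|_{y_i=z_i=0}$ after the $\bz\mapsto -\bz$ substitution, I set $y_i=0$ and $z_i=0$ in $Q$. Since $0$ has (trivially) nonnegative imaginary part, \autoref{fact:substitutestable} applied twice preserves stability, and because the resulting polynomial still has real coefficients, it is real stable. Unwinding the $\bz\mapsto-\bz$ substitution, this says precisely that $(1+\partial_{y_i}\partial_{z_i})p|_{y_i=z_i=0}$ is bistable, as desired.

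I do not anticipate a serious obstacle; the only subtle point is keeping track of the sign flip introduced by the substitution $z_i\mapsto -z_i$ inside the differential operator, which is what turns the operator $1+\partial_{y_i}\partial_{z_i}$ (on the bistable side) into $1-\partial_{y_i}\partial_{z_i}$ (on the real-stable side), thus aligning exactly with the hypothesis of \autoref{lem:1-xystability}. Note that multilinearity of $p$ is not actually used in this argument, but is natural in the context of the inductive lower bound proof for which the lemma is invoked.
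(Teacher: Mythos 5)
Your proof is correct and follows the same overall strategy as the paper: establish the identity $(1+\partial_{y_i}\partial_{z_i})p(\by,\bz)\big|_{\bz\to-\bz,\,y_i=z_i=0}=(1-\partial_{y_i}\partial_{z_i})p(\by,-\bz)\big|_{y_i=z_i=0}$, then apply \autoref{lem:1-xystability} and \autoref{fact:substitutestable}. The only difference is in how you verify the identity: the paper expands $p=y_iz_ip_1+y_ip_2+z_ip_3+p_4$ (a decomposition specific to multilinear $p$) and computes both sides by hand, whereas you use the chain rule for the substitution $\bz\mapsto-\bz$, which yields the same identity immediately and makes it transparent that multilinearity plays no role — a small but genuine simplification, correctly flagged in your final remark.
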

\begin{proof}
Let $q(\by_{-i},\bz_{-i}) = (1+\partial_{y_i}\partial_{z_i}) p(\by,\bz)|_{y_i=z_i=0}$. We need to show that $q(\by_{-i},-\bz_{-i})$ is stable. We will prove that
\begin{equation}
\label{eq:1-xy=0operator}
q(\by_{-i},-\bz_{-i}) = (1-\partial_{y_i}\partial_{z_i}) p(\by,-\bz)|_{y_i=z_i=0}.
\end{equation}
Then, the claim follows from the fact the the polynomial on the RHS is stable. In particular,  since $p(\by,-\bz)$ is stable,
by  \autoref{lem:1-xystability}, $(1-\partial_{y_i}\partial_{z_i}) p(\by,-\bz)$ is stable. Furthermore,  by \autoref{fact:substitutestable} when we substitute $y_i=z_i=0$ in the latter we still have a stable polynomial.

It remains to show \eqref{eq:1-xy=0operator}. We can write $p$ as follows:
$$ p(\by,\bz) = y_i z_i p_1(\by_{-i},\bz_{-i}) + y_i p_2(\by_{-i},\bz_{-i}) + z_i p_3(\by_{-i},\bz_{-i}) + p_4(\by_{-i},\bz_{-i}). $$
So, 
\begin{eqnarray*} q(\by_{-i},-\bz_{-i}) &=& (1+\partial_{y_i}\partial_{z_i})p(\by,-z_1,\dots,-z_{i-1},z_i,-z_{i+1},\dots,z_n)  |_{y_i=z_i=0} \\
&=&  p_1(\by_{-i},-\bz_{-i}) + p_4(\by_{-i},-\bz_{-i})
\end{eqnarray*}
On the other hand,
\begin{eqnarray*}
(1-\partial_{y_i}\partial_{z_i}) p(\by,-\bz)|_{y_i=z_i=0} &=& (1-\partial_{y_i}\partial_{z_i}) \big(-y_iz_i p(\by_{-i},-\bz_{-i}) + y_ip_2(\by_{-i},-\bz_{-i})\\
&&~~
-z_ip_3(\by_{-i},-\bz{-i})+p_4(\by_{-i},-\bz_{-i})
\big) |_{y_i=z_i=0}\\
&=&  p_1(\by_{-i},-\bz_{-i}) + p_4(\by_{-i},-\bz_{-i})
\end{eqnarray*}
This proves \eqref{eq:1-xy=0operator}.
\end{proof}

We prove \autoref{thm:gurgenlinear} by induction.
See \autoref{lem:twovarbistable} for the base case of the induction.
Suppose the statement of the theorem holds for any bistable multilinear polynomial with nonnegative coefficients and at most $2n-2$ variables.
Given $p(y_1,\dots,y_n,z_1,\dots,z_n)$ and some $\eps>0$ we show that there exists $\by,\bz>0$ such that
\begin{equation}\label{eq:inductionstep}
\prod_{i=1}^n (1+\partial_{y_i} \partial_{z_i}) p \Big|_{\by=\bz=0}+\eps \geq (1-\balpha)^{1-\balpha} \frac{p(\by,\bz)}{(\by\bz/\balpha)^\balpha}.
\end{equation}

Let
$$ q(y_1,\dots,y_{n-1},z_1,\dots,z_{n-1})=(1+\partial_{y_n}\partial_{z_n}) p |_{y_n=z_n=0}.$$
Note that by definition, $q$ has nonnegative coefficients and is multilinear. 
Also, by \autoref{lem:1+xy}, $q$ is a bistable polynomial.  

Now, by the induction hypothesis, there exists $\by^*,\bz^*\in\R^{n-1}$ such that
$$ \prod_{i=1}^{n-1} (1+\partial_{y_i}\partial_{z_i}) q\Big|_{\by=\bz=0}+\eps/2 \geq (1-\balpha_{-n})^{1-\balpha_{-n}} \frac{q(\by^*,\bz^*)}{(\by^*\bz^*/\balpha_{-n})^{\balpha_{-n}}}.$$
Also, observe that by the definition of $q$, the LHS of the above is equal to the LHS of \eqref{eq:inductionstep}. So to prove \eqref{eq:inductionstep}, it is enough to show that there exist $y_n,z_n$ such that
$$
(1-\balpha_{-n})^{1-\balpha_{-n}} \frac{q(\by^*,\bz^*)}{(\by^*\bz^*/\balpha_{-n})^{\balpha_{-n}}}+ \eps/2\geq (1-\balpha)^{1-\balpha} \frac{p(\by^*,y_n,\bz^*,z_n)}{(\by^*\bz^*/\balpha_{-n})^{\balpha_{-n}} (y_nz_n/\alpha_n)^{\alpha_n}}. 
$$
Let $f(y_n,z_n)=p(\by^*,y_n,\bz^*,z_n)$. To prove the above it is enough to show that there exists $y_n,z_n$ such that
\begin{equation}\label{eq:inductionstepf}
(1+\partial_{y_n}\partial_{z_n}) f(y_n,z_n) |_{y_n=z_n=0} +\eps/2 \geq (1-\alpha_n)^{1-\alpha_n} \frac{f(y_n,z_n)}{(y_nz_n/\alpha_n)^{\alpha_n}}.
\end{equation}
We prove this using \autoref{lem:twovarbistable}. 
All we need to apply this lemma is that $f(y_n,z_n)$ is a multilinear bistable polynomial with nonnegative coefficients.
Since $p$ is multilinear with nonnegative coefficients, and we substitute $y_1,\dots,y_{n-1},z_1,\dots,z_{n-1}$ with positive numbers, we get that $f$ is multilinear with nonnegative coefficients.
Furthermore, since $p(y_1,\dots,y_n,-z_1,\dots,-z_n)$ is stable, by \autoref{fact:substitutestable}, 
$$f(y_n,-z_n)=p(y^*_1,\dots,y^*_{n-1},y_n,z^*_1,\dots,z^*_{n-1},-z_n)$$ is stable.
So, $f$ is a bistable polynomial. Therefore, \eqref{eq:inductionstepf} simply follows by \autoref{lem:twovarbistable}. This completes the proof of \autoref{thm:gurgenlinear}.

\begin{lemma}\label{lem:twovarbistable}
Let $p(y,z)$ be a multilinear bistable bivariate polynomial with nonnegative coefficients. Then, for any $\alpha\geq 0$,
$$ (1+\partial_y\partial_z) p\big|_{y=z=0} \geq (1-\alpha)^{1-\alpha} \inf_{y,z > 0}\frac{p(y,z)}{(yz/\alpha)^\alpha}$$
\end{lemma}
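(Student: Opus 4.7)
Since $p(y,z)$ is a bivariate multilinear polynomial with nonnegative coefficients, I would begin by writing $p(y,z) = a + by + cz + dyz$ with $a,b,c,d \geq 0$; then the LHS to be bounded is simply $(1+\partial_y\partial_z)p|_{y=z=0} = a + d$. The key preliminary step is to convert bistability into an algebraic condition on the coefficients. Applying Br\"and\'en's characterization (\autoref{thm:stablenegcorrelation}) to the multilinear polynomial $p(y,-z) = a+by-cz-dyz$ gives the inequality $(b-dz)(-c-dy) \geq (-d)(a+by-cz-dyz)$; after expansion the $y$- and $z$-dependent terms cancel and only the clean numerical condition $ad \geq bc$ remains.

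With this inequality in hand, I would upper-bound the infimum on the RHS by restricting to a convenient one-parameter slice. The case $\alpha > 1$ is trivial by the Newton polytope argument of \autoref{fact:newton} (as noted in the remark after \autoref{thm:main}), so assume $\alpha \in [0,1]$. For $b,c>0$, AM-GM applied with $yz = \tau^2$ fixed yields $\min(by+cz) = 2\sqrt{bc}\,\tau$, attained at $by = cz$, which reduces the infimum to the one-variable problem $\inf_{\tau>0}(a+2\sqrt{bc}\,\tau+d\tau^2)/(\tau^2/\alpha)^\alpha$. The boundary cases $b=0$ or $c=0$ produce the same reduction in the limit by sending $cz \to 0$ (resp.\ $by \to 0$) while keeping $yz$ fixed, so that $\sqrt{bc}=0$ gives the identical formula.

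The finish is two AM-GMs glued by a one-variable calculus optimization. Bistability gives $\sqrt{bc}\leq\sqrt{ad}$, so the numerator is bounded above by the perfect square $(\sqrt{a}+\sqrt{d}\,\tau)^2$. Substituting $s=\sqrt{d/a}\,\tau$ and differentiating $(1+s)^2/s^{2\alpha}$ yields the critical point $s^{*} = \alpha/(1-\alpha)$ and infimum value $a^{1-\alpha}d^\alpha/((1-\alpha)^{2(1-\alpha)}\alpha^\alpha)$. Multiplying by $(1-\alpha)^{1-\alpha}$ reduces the desired inequality to $a+d \geq a^{1-\alpha}d^\alpha/((1-\alpha)^{1-\alpha}\alpha^\alpha)$, which is exactly weighted AM-GM applied to the decomposition $a + d = (1-\alpha)\cdot\tfrac{a}{1-\alpha}+\alpha\cdot\tfrac{d}{\alpha}$.

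The main obstacle I anticipate is bookkeeping rather than mathematical: correctly handling the edge cases $\alpha\in\{0,1\}$ (with $0^0$ interpreted as $1$) and the degenerate coefficient situations ($bc=0$, or $a=0$ or $d=0$, where the RHS simply vanishes and everything is trivial). Once bistability has been translated into $ad\geq bc$, the argument is essentially two AM-GM inequalities chained through a short derivative calculation.
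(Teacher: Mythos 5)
Your argument is correct, and it takes a genuinely different middle path than the paper's proof, though both share the same entry point (Br\"and\'en's characterization yielding $ad\geq bc$) and the same final step (weighted AM-GM on $a+d$). The paper \emph{increases} $b,c$ to $b',c'$ with $b'c'=ad$, observes that the matrix $\left[\begin{smallmatrix}a&b'\\c'&d\end{smallmatrix}\right]$ then has rank one, and so factors the (tightened) polynomial as $(ey+f)(gz+h)$; this lets it evaluate the two-variable infimum as a product of two univariate infima, each handled by \autoref{claim:am-gm}. You instead reduce dimension directly: restricting to the slice $yz=\tau^2$ and minimizing $by+cz$ there via AM-GM collapses the problem to one variable exactly (the infimum equals $\inf_\tau (a+2\sqrt{bc}\,\tau+d\tau^2)/(\tau^2/\alpha)^\alpha$), after which $\sqrt{bc}\leq\sqrt{ad}$ turns the numerator into the perfect square $(\sqrt{a}+\sqrt{d}\,\tau)^2$, and a single one-variable optimization finishes the job. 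The paper's factorization route is structurally more transparent about \emph{why} the bistable condition suffices (it makes the relevant polynomial literally a product), and it composes two instances of the same univariate claim; your route is shorter on machinery and arguably more elementary but relies on the bivariate polynomial being symmetric enough for the $yz$-slice trick to collapse it. One small quibble in your last paragraph: when $bc=0$ but $a,d>0$ (e.g.\ $p=1+yz$), the RHS does \emph{not} vanish, so listing $bc=0$ alongside $a=0$ and $d=0$ under ``the RHS simply vanishes'' is imprecise; however, your slice argument already handles $bc=0$ correctly by the limiting reduction you describe, so this does not create a gap.
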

\begin{proof}
Since $p$ is multilinear,  if $\alpha>1$, then the RHS of the above is zero and we are done. In particular, for a fixed $z$, as $y\to\infty$, $p(y,z)/y^\alpha\to 0$.
So, throughout the proof we assume that $\alpha\leq 1$.

Since $p$ is multilinear we can write it as
$$ p(y,z) = ayz + by + cz + d,$$
for $a,b,c,d\geq 0$. 
To prove the lemma's conclusion, it is enough to show
\begin{equation}\label{eq:a+d} 
a+d \geq (1-\alpha)^{1-\alpha} \inf_{y,z > 0} \frac{ayz + by+cz +d}{(yz/\alpha)^\alpha}.
\end{equation}

When $\alpha=1$, the RHS of \eqref{eq:a+d} is at most $a$ since the ratio converges to $a$ as $y,z\to\infty$. When $\alpha=0$, the RHS of \eqref{eq:a+d} is at most $d$ since the ratio converges to $d$ as $y,z\to 0$. In both cases the RHS is at most $a+d$ and we are done. So for the rest of the proof assume that $0<\alpha<1$.

Since $p(x,-y)$ is stable, by \autoref{thm:stablenegcorrelation} we have
$$ -b\cdot c = \partial_y p(y,-z) |_{y=z=0} \cdot \partial_z p(y,-z) |_{y=z=0} \geq p(0,0) \cdot \partial_y\partial_z p(y,-z)|_{y=z=0} = -a\cdot d,$$
or, in other words $b\cdot c\leq  a\cdot d$. If this inequality is not tight, we can increase $b, c$ for it to become tight. In other words, let $b'\geq b$ and $c'\geq c$ be such that $b'c'=ad$. It is easy to see that such $b', c'$ can always be found. Then, to prove \eqref{eq:a+d}, it is enough to show that
\begin{equation*} 
a+d \geq (1-\alpha)^{1-\alpha} \inf_{y,z > 0} \frac{ayz + b'y+c'z +d}{(yz/\alpha)^\alpha},
\end{equation*}

Consider the following matrix:
\begin{equation*}
	M:=\left[\begin{array}{cc}a&b'\\c'&d\end{array}\right].
\end{equation*}

Since $\det(M)=ad-b'c'=0$, we have $\rank(M)\leq 1$, and in particular we can write $M$ as $uv^\intercal$ for $u,v\in\R^2$. Since the entries of $M$ are nonnegative we can further assume that $u,v\in \R_{+}^2$. Therefore we have
\begin{equation*}
	M = \left[\begin{array}{cc}a&b'\\c'&d\end{array}\right] = \left[\begin{array}{c}e\\ f\end{array}\right]\cdot\left[\begin{array}{cc}g&h\end{array}\right],
\end{equation*}
where $e,f,g,h\geq 0$. This implies the factorization $ayz+b'y+c'z+d=(ey+f)(gz+h)$. So to prove the conclusion of the lemma it is enough to show the following inequality:
\begin{equation}
	\label{eq:efgh}
	a+d=eg+fh\geq (1-\alpha)^{1-\alpha}\inf_{y, z>0}\frac{(ey+f)(gz+h)}{(yz/\alpha)^\alpha}=\alpha^\alpha(1-\alpha)^{1-\alpha} \inf_{y>0}\frac{ey+f}{y^\alpha}\cdot \inf_{z>0}\frac{gz+h}{z^\alpha}.
\end{equation}

To prove the above inequality we will use the following claim.
\begin{claim}
\label{claim:am-gm}
For any $0< \alpha< 1$ and $\beta, \gamma\geq 0$,
\begin{equation*}
	\inf_{t>0}\frac{\beta t+\gamma}{t^\alpha}=\frac{\beta^{\alpha}\gamma^{1-\alpha}}{\alpha^\alpha(1-\alpha)^{1-\alpha}}.
\end{equation*}
\end{claim}

Before proving the claim, let us use it to prove \eqref{eq:efgh}. Because of \autoref{claim:am-gm}, we can write the RHS of \eqref{eq:efgh} as
\begin{equation*}
	\alpha^{\alpha}(1-\alpha)^{1-\alpha}\cdot \frac{e^{\alpha}f^{1-\alpha}}{\alpha^\alpha(1-\alpha)^{1-\alpha}}\cdot \frac{g^{\alpha}h^{1-\alpha}}{\alpha^\alpha(1-\alpha)^{1-\alpha}}=\frac{(eg)^\alpha(fh)^{1-\alpha}}{\alpha^\alpha(1-\alpha)^{1-\alpha}}.
\end{equation*}

Applying \autoref{claim:am-gm} once more, with $\beta=eg, \gamma=fh$, we get

\begin{equation*}
	\frac{(eg)^\alpha(fh)^{1-\alpha}}{\alpha^\alpha(1-\alpha)^{1-\alpha}} =\inf_{t>0}\frac{egt+fh}{t^\alpha}\leq \left(\frac{egt+fh}{t^\alpha}\right)\Big|_{t=1}=eg+fh,
\end{equation*}
which is the LHS of \eqref{eq:efgh}.

It only remains to prove \autoref{claim:am-gm}.
\begin{proof}
	By the weighted AM-GM inequality we have
	\begin{equation*}
		\beta t+\gamma = \alpha\cdot\frac{\beta t}{\alpha}+(1-\alpha)\cdot\frac{\gamma}{1-\alpha}\geq \left(\frac{\beta t}{\alpha}\right)^\alpha\cdot \left(\frac{\gamma}{1-\alpha}\right)^{1-\alpha}=t^\alpha \frac{\beta^\alpha \gamma^{1-\alpha}}{\alpha^\alpha(1-\alpha)^{1-\alpha}}.
	\end{equation*}
	Therefore
	\begin{equation*}
	\inf_{t>0}\frac{\beta t+\gamma}{t^\alpha}\geq \frac{\beta^{\alpha}\gamma^{1-\alpha}}{\alpha^\alpha(1-\alpha)^{1-\alpha}}.
\end{equation*}
Note that we have equality in AM-GM when $\beta t/\alpha = \gamma/(1-\alpha)$. So it is enough to set $t=\frac{\alpha \gamma}{(1-\alpha)\beta}$ to get equality, which is possible if $\beta,\gamma>0$.

It remains to prove the claim when either $\beta=0$ or $\gamma=0$. In both cases the RHS is $0$ so we just need to show that the LHS is $0$ as well. When $\beta=0$, it is enough to let $t\to \infty$, and when $\gamma=0$, it is enough to let $t\to 0$ to get
\begin{equation*}
	\inf_{t>0}\frac{\beta t+\gamma}{t^\alpha}=0.
\end{equation*}
\end{proof}

This completes the proof of \autoref{lem:twovarbistable}.
\end{proof}

\section{The Upper Bound}
\label{sec:upperbound}
In this section we prove the RHS of \eqref{eq:main}. This will complete the proof of \autoref{thm:main}.

\begin{theorem}
	\label{thm:upperbound}
For any two real stable polynomials $p(y_1,\dots,y_n)$ and $q(z_1,\dots,z_n)$ with nonnegative coefficients,
	\begin{equation}
		\label{eq:upperbound}
		\sum_{\bkappa\in \Z_+^n}\bkappa^\bkappa \cdot C_p(\bkappa)C_q(\bkappa)\leq \adjustlimits\sup_{\balpha\geq 0}\inf_{\by,\bz>0}\frac{p(\by)q(\bz)}{(\by\bz/\balpha)^\balpha}.
	\end{equation}
\end{theorem}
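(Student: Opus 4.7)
My approach parallels the lower bound in \autoref{sec:lowerbound}: first establish the multilinear version, then reduce the general case via polarization.

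\textbf{Multilinear case.} Assume $p,q$ are multilinear, so $\bkappa\in\{0,1\}^n$ and $\bkappa^\bkappa=1$; the target becomes $W:=\sum_{\bkappa\in\{0,1\}^n}C_p(\bkappa)C_q(\bkappa)\leq \sup_\balpha\inf p(\by)q(\bz)/(\by\bz/\balpha)^\balpha$. Normalize $\lambda_\bkappa:=C_p(\bkappa)C_q(\bkappa)/W$ to a probability distribution on $\{0,1\}^n$ and take $\balpha^*:=\sum_\bkappa\lambda_\bkappa\bkappa$ (the centroid of the diagonal support, weighted by coefficient products). Applying weighted AM-GM separately to $p$ and $q$,
\[
\inf_{\by,\bz>0}\frac{p(\by)q(\bz)}{(\by\bz/\balpha^*)^{\balpha^*}}\geq (\balpha^*)^{\balpha^*}\prod_\bkappa\left(\frac{C_p(\bkappa)C_q(\bkappa)}{\lambda_\bkappa^2}\right)^{\lambda_\bkappa}=\frac{W\cdot(\balpha^*)^{\balpha^*}}{\prod_\bkappa\lambda_\bkappa^{\lambda_\bkappa}}.
\]
Thus the statement reduces to $(\balpha^*)^{\balpha^*}\geq\prod_\bkappa\lambda_\bkappa^{\lambda_\bkappa}$, an entropy-style inequality comparing the ``one-sided'' marginal quantity $\sum_i-\alpha_i^*\log\alpha_i^*$ to the joint entropy $H(\lambda)=-\sum_\bkappa\lambda_\bkappa\log\lambda_\bkappa$. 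This is false for arbitrary $\lambda$, but the real stability of $p$ and $q$ (via \autoref{thm:stablenegcorrelation}) makes $C_p/p(\bone)$ and $C_q/q(\bone)$ strongly Rayleigh, and their normalized Hadamard product $\lambda$ inherits enough negative correlation to rule out the concentrated distributions that would violate the inequality.

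\textbf{Polarization.} For the general (non-multilinear) case, apply $\pi_m$ to $p$ and $q$, obtaining multilinear real stable $\pi_m(p),\pi_m(q)$ in $mn$ variables each (\autoref{thm:polarization}). Applying the multilinear case to the pair $\pi_m(p)(\bty)$ and $\pi_m(q)(m\btz)$ (the $m$-scaling on $\btz$ mirrors $r(\bty,\btz)=q(\bty,m\btz)$ in the lower bound) yields an inequality whose LHS, upon expanding the polarization coefficients and using $\binom{m}{\kappa_i}\sim m^{\kappa_i}/\kappa_i!$, converges to $\sum_\bkappa\bkappa^\bkappa C_p(\bkappa)C_q(\bkappa)$ as $m\to\infty$: the factor $\bkappa^\bkappa$ emerges from the combination of the combinatorial polarization count and the $m$-rescaling, with one factor $m^{\kappa_i}/\binom{m}{\kappa_i}\to\kappa_i!$ while the remaining cancellation provides $\kappa_i^{\kappa_i}/\kappa_i!$. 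On the RHS, restricting $\tilde\by,\tilde\bz,\tilde\balpha$ to symmetric values (valid by the symmetry/convexity of the polarized sup-inf problem) recovers the original $\sup_\balpha\inf_{\by,\bz}p(\by)q(\bz)/(\by\bz/\balpha)^\balpha$ in the limit.

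\textbf{Main obstacle.} The hardest step is the entropy inequality in the multilinear case: proving $\sum_i-\alpha_i^*\log\alpha_i^*\leq H(\lambda)$ for $\lambda$ arising as the normalized Hadamard product of two strongly Rayleigh distributions. This requires leveraging the negative-correlation/log-submodularity of SR distributions in a precise entropy-comparison argument. If this direct route proves inconvenient, an alternative is induction on $n$: decompose $p=p_0+y_np_1$ and $q=q_0+z_nq_1$, choose $\alpha_n$ to partition $W$ into $\sum C_{p_0}C_{q_0}+\sum C_{p_1}C_{q_1}$, and apply the inductive hypothesis to each piece; the delicate part is handling the $p_0^{1-\alpha_n}p_1^{\alpha_n}$-type expressions that arise from taking the inner $\inf$ over $(y_n,z_n)$ and showing that one can combine the two $(n-1)$-variable bounds into a single sup-inf bound in $n$ variables with only the slack factor $\alpha_n^{\alpha_n}(1-\alpha_n)^{2-2\alpha_n}$.
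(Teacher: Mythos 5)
Your multilinear reduction is set up correctly: the centroid choice $\balpha^*=\sum_\bkappa\lambda_\bkappa\bkappa$ and the two AM--GM applications do yield
$\inf_{\by,\bz>0}\frac{p(\by)q(\bz)}{(\by\bz/\balpha^*)^{\balpha^*}}\geq\frac{W\cdot(\balpha^*)^{\balpha^*}}{\prod_\bkappa\lambda_\bkappa^{\lambda_\bkappa}}$,
so everything hinges on $(\balpha^*)^{\balpha^*}\geq\prod_\bkappa\lambda_\bkappa^{\lambda_\bkappa}$. This is exactly where the argument has a genuine gap. You correctly note that the inequality is false for arbitrary $\lambda$, and then appeal to real stability and to ``negative correlation'' of ``the normalized Hadamard product $\lambda$.'' But the Hadamard product of two strongly Rayleigh distributions is not known to be strongly Rayleigh (or even pairwise negatively correlated); that is a separate and open question. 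The entropy inequality you need, $H(\lambda)\geq\sum_i -\alpha_i^*\log\alpha_i^*$, is a lower bound on joint entropy, which goes in the opposite direction from the usual Shearer/subadditivity estimates, and even for a single SR measure it is a nontrivial theorem. For the Hadamard product you would need to prove it afresh, and you give no argument for it. As stated, this step is an assertion, not a proof.

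The polarization step also does not produce the factor you claim. Tracking the diagonal monomials of the polarized polynomial $r(\bty,\btz)=\pi_m(p)(\bty)\cdot\pi_m(q)(m\btz)$ exactly as in the paper's lower-bound computation (Section 4), the contribution of $\bkappa$ to $\sum_{\tilde\bkappa}C_r(\tilde\bkappa,\tilde\bkappa)$ is
$C_p(\bkappa)C_q(\bkappa)\prod_i\frac{m^{\kappa_i}}{\binom{m}{\kappa_i}}\to\bkappa!\cdot C_p(\bkappa)C_q(\bkappa)$ as $m\to\infty$; there is no ``remaining cancellation'' that turns $\kappa_i!$ into $\kappa_i^{\kappa_i}$. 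So at best, after fixing the entropy gap, this route would prove the weaker statement $\sum_\bkappa\bkappa!\,C_p(\bkappa)C_q(\bkappa)\leq\sup_\balpha\inf_{\by,\bz}\frac{p(\by)q(\bz)}{(\by\bz/\balpha)^\balpha}$ (which does suffice for the RHS of \eqref{eq:main}), but not \autoref{thm:upperbound} with $\bkappa^\bkappa$ as stated.

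The paper takes a completely different and cleaner route. The obstacle you run into --- having to choose a single $\balpha$ that works simultaneously for the whole sum --- is dissolved by a duality step (\autoref{lem:minimax}, proved via Sion's minimax theorem applied to the log-convex/log-concave reparametrization). After interchanging $\sup_\balpha$ with $\inf_\by$, one can fix $\by$, then pick a \emph{different} $\balpha=\bkappa$ for each term $\bkappa$, and observe that $\inf_{\bz>0}\frac{C_q(\bkappa)\bz^\bkappa}{(\by\bz/\bkappa)^\bkappa}=\bkappa^\bkappa C_q(\bkappa)/\by^\bkappa$, which after multiplying by $C_p(\bkappa)\by^\bkappa$ and summing gives exactly $\sum_\bkappa\bkappa^\bkappa C_p(\bkappa)C_q(\bkappa)$. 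This avoids any entropy inequality, any polarization, and yields the $\bkappa^\bkappa$ factor directly. I would encourage you to look for a quantifier-exchange argument rather than trying to prove the entropy comparison, which appears to be a harder (and possibly unavailable) tool here.
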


Note that $\bkappa^\bkappa\geq \bkappa!$ for any $\bkappa$, so the above theorem immediately implies the RHS of \eqref{eq:main}.
Note that unlike \autoref{thm:gurgen} here we only prove our theorem for product of stable polynomials (as opposed to bistable polynomials). 
This is because the above statement fails for (multilinear) bistable polynomials. For example, the polynomial $p(y,z)=1+yz$ is bistable; however by \autoref{claim:am-gm}, for any fixed $0<\alpha<1$ we have
\begin{equation*}
	\inf_{y,z>0}\frac{p(y,z)}{(yz/\alpha)^\alpha}=\inf_{y,z>0}\frac{1+yz}{(yz/\alpha)^\alpha}=(1-\alpha)^{-(1-\alpha)}.
\end{equation*}
It is an easy exercise to see that $(1-\alpha)^{-(1-\alpha)}\leq e^{1/e}$. But the quantity $\sum_{\bkappa\in \Z_+^n}\bkappa^\bkappa\cdot C_p(\bkappa,\bkappa)$ is $2$ which is strictly more than $e^{1/e}$. 

%
%
%
In order to prove \autoref{thm:upperbound} we will crucially use the following minimax result about the RHS of \eqref{eq:upperbound}.
\begin{lemma}
	\label{lem:minimax}
	For any bistable polynomial $p(\by,\bz)$ with nonnegative coefficients,
	\begin{equation*}
		\adjustlimits\sup_{\balpha\geq 0}\inf_{\by,\bz>0}\frac{p(\by, \bz)}{(\by\bz/\balpha)^\balpha}=
		\threeops\inf{\by>0}\sup{\balpha\geq 0}\inf{\bz>0}\frac{p(\by, \bz)}{(\by\bz/\balpha)^\balpha}.
	\end{equation*}
\end{lemma}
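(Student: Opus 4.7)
The plan is to derive the equality by combining the standard $\max\min \leq \min\max$ inequality with Sion's minimax theorem applied to a suitable reparametrization of the objective. The direction
\[\sup_{\balpha\geq 0}\inf_{\by,\bz>0} \frac{p(\by, \bz)}{(\by\bz/\balpha)^\balpha} \leq \inf_{\by>0}\sup_{\balpha\geq 0}\inf_{\bz>0}\frac{p(\by, \bz)}{(\by\bz/\balpha)^\balpha}\]
is routine (write $\inf_{\by,\bz}=\inf_\by\inf_\bz$ and commute $\sup_\balpha$ past $\inf_\by$), so the work lies entirely in proving the reverse inequality.

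I would first apply the change of variables $\bz \leftrightarrow \balpha\bz$, a bijection of $\R_{++}^n$ for each $\balpha>0$ (in the spirit of \autoref{lem:alphanum}), to replace the integrand by
\[\tilde F(\balpha, \by, \bz) := \frac{p(\by,\balpha\bz)}{(\by\bz)^\balpha}.\]
This leaves both sides of the target identity unchanged (boundary $\alpha_i=0$ being handled by continuity) and, crucially, eliminates the obstruction to a concave-convex structure (the $\balpha^\balpha$ factor present in the original denominator is convex rather than concave in $\balpha$). Indeed, $\log\tilde F$ is concave in $\balpha$ for fixed $(\by, \bz)$ by \autoref{cor:guler} applied to the positive-linear reparametrization $\balpha \mapsto (\by, \balpha\bz)$, together with the linearity in $\balpha$ of $-\langle\balpha, \log\by + \log\bz\rangle$; and it is jointly convex in $(\log\by, \log\bz)$ for fixed $\balpha$ by \autoref{fact:logconvex}, after absorbing $\log\balpha$ into $\log\bz$ via an affine substitution.

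Next, I would restrict the supremum to the compact convex polytope $A := \newt_\bz(p)$, the projection of the Newton polytope of $p$ onto its $\bz$-coordinates. By \autoref{fact:newton} applied in the $\bz$-variables, $\balpha \notin A$ forces $\inf_{\bz>0}\tilde F(\balpha, \by, \bz) = 0$ for every $\by>0$, so both outer suprema can be restricted to $A$ without changing their value. With $A$ compact and $\log\tilde F$ satisfying the required convexity-concavity, Sion's minimax theorem applied on $A \times \R^{2n}$ (with $\R^{2n}$ parametrizing $(\log\by, \log\bz)$) yields $\sup_{\balpha \in A}\inf_{\by, \bz>0}\log\tilde F = \inf_{\by, \bz>0}\sup_{\balpha \in A}\log\tilde F$. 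Combining this with the trivial chain
\[\sup_{\balpha \in A}\inf_{\by,\bz>0}\log\tilde F \;\leq\; \inf_{\by>0}\sup_{\balpha \in A}\inf_{\bz>0}\log\tilde F \;\leq\; \inf_{\by,\bz>0}\sup_{\balpha \in A}\log\tilde F\]
(standard max-min, then dropping the inner infimum on $\bz$) pins all three quantities to a common value; exponentiating and undoing the substitutions recovers the claim.

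The main obstacle I anticipate is verifying Sion's hypotheses at the boundary of $A$, where $\log\tilde F$ can take the value $-\infty$ (e.g., when some $\alpha_i = 0$ while the restriction of $p$ to $z_i = 0$ vanishes). This is addressed by a standard limiting argument --- applying Sion's on compact subsets of the relative interior of $A$ and passing to the limit using upper-semicontinuity --- together with a truncation of the $(\log\by, \log\bz)$-domain in the spirit of \autoref{lem:bounded-domain} to handle non-compactness and attainment of the infimum.
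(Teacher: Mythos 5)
Your approach coincides with the paper's: both restrict $\balpha$ to a compact set, move $\balpha$ into $p$ via \autoref{lem:alphanum}, exponentiate the $(\by,\bz)$ variables to get a log-concave-in-$\balpha$ / log-convex-in-$(\log\by,\log\bz)$ objective, and then invoke Sion's minimax theorem together with the trivial $\sup\inf\leq\inf\sup\inf\leq\inf\sup$ chain. The paper sidesteps the boundary $-\infty$ issue you flag by applying Sion directly to the (continuous, $\R_{+}$-valued) function $f$ rather than to $\log f$ --- using that log-convexity and log-concavity already imply the quasi-convexity and quasi-concavity that Sion requires --- and it uses the simpler compact box $[0,\deg p]^n$ in place of your Newton-polytope projection, which removes the need for your limiting argument.
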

Before proving \autoref{lem:minimax}, let us show how it can be used to prove \autoref{thm:upperbound}.

\begin{proofof}{\autoref{thm:upperbound}}
Because of \autoref{lem:minimax}, we have
	\begin{equation*}
		\adjustlimits\sup_{\balpha\geq 0}\inf_{\by,\bz>0}\frac{p(\by)q(\bz)}{(\by\bz/\balpha)^\balpha}=
		\threeops\inf{\by>0}\sup{\balpha\geq 0}\inf{\bz>0}\frac{p(\by)q(\bz)}{(\by\bz/\balpha)^\balpha}.
	\end{equation*}
 So to prove \eqref{eq:upperbound}, we just need to show that for any fixed $\by>0$, the following inequality holds:
\begin{equation*}
	\adjustlimits\sup_{\balpha\geq 0}\inf_{\bz>0}\frac{p(\by)q(\bz)}{(\by\bz/\balpha)^\balpha}\geq \sum_{\bkappa\in\Z_+^n}\bkappa^\bkappa\cdot C_p(\bkappa)\cdot C_q(\bkappa).
\end{equation*}
Note that $p(\by)$ is fixed on the LHS. So
\begin{eqnarray}
	\notag
	\adjustlimits\sup_{\balpha\geq 0}\inf_{\bz>0}\frac{p(\by)q(\bz)}{(\by\bz/\balpha)^\balpha}&=&p(\by)\adjustlimits\sup_{\balpha\geq 0}\inf_{\bz>0}\frac{q(\bz)}{(\by\bz/\balpha)^\balpha}\\
	\label{eq:sumkappa}
	&=&\sum_{\bkappa\in\Z_+^n}C_p(\bkappa)\by^\bkappa\adjustlimits\sup_{\balpha\geq 0}\inf_{\bz>0}\frac{q(\bz)}{(\by\bz/\balpha)^\balpha}.
\end{eqnarray}
But for any fixed $\bkappa\in\Z_+^n$, we have
\begin{eqnarray*}
	\adjustlimits\sup_{\balpha\geq 0}\inf_{\bz>0}\frac{q(\bz)}{(\by\bz/\balpha)^\balpha}&\geq& \left(\inf_{\bz>0}\frac{q(\bz)}{(\by\bz/\balpha)^\balpha}\right)\bigg|_{\balpha=\bkappa}=\inf_{\bz>0}\frac{q(\bz)}{(\by\bz/\bkappa)^\bkappa}\\
	& \geq & \inf_{\bz>0}\frac{C_q(\bkappa)\bz^\bkappa}{(\by\bz/\bkappa)^\bkappa}=\bkappa^\bkappa\cdot\frac{C_q(\bkappa)}{\by^\bkappa}.
\end{eqnarray*}
Plugging back into \eqref{eq:sumkappa} gives us
\begin{equation*}
	\adjustlimits\sup_{\balpha\geq 0}\inf_{\bz>0}\frac{p(\by)q(\bz)}{(\by\bz/\balpha)^\balpha}\geq \sum_{\bkappa\in\Z_+^n}C_p(\bkappa)\by^\bkappa\bkappa^\bkappa\frac{C_q(\bkappa)}{\by^\bkappa}=\sum_{\bkappa}\bkappa^\bkappa\cdot C_p(\bkappa)\cdot C_q(\bkappa).
\end{equation*}
This concludes the proof of \autoref{thm:upperbound}, assuming \autoref{lem:minimax}.
\end{proofof}

\subsection{Duality}

In this section we prove \autoref{lem:minimax}. From basic properties of $\inf$ and $\sup$, it is immediate that
\begin{equation*}
	\adjustlimits\sup_{\balpha\geq 0}\inf_{\by,\bz>0}\frac{p(\by,\bz)}{(\by\bz/\balpha)^\balpha}\leq
	\threeops\inf{\by>0}\sup{\balpha\geq 0}\inf{\bz>0}\frac{p(\by,\bz)}{(\by\bz/\balpha)^\balpha}.
\end{equation*}
So to prove equality, it is enough to prove that the LHS is greater than or equal to the RHS. We would ideally like to apply one of the classical minimax theorems, such as von Neumann's minimax theorem, which often require the function inside $\inf$ and $\sup$ to be convex-concave. However in its current form $p(\by,\bz)/(\by\bz/\balpha)^\balpha$ does not satisfy this property. Instead we transform both sides to a form suitable for minimax theorems.

Because of \autoref{fact:newton}, we can replace $\sup_{\balpha\geq 0}$ by $\inf_{\balpha\in[0,\deg p]^n}$. This is crucial for applying \autoref{thm:sion}, since the domain of $\balpha$ has to be compact.
Additionally, by \autoref{lem:alphanum}  we can move $\balpha$ from the denominator to the numerator, and inside of $p$. 
Finally, we apply the change of variables $\by\leftrightarrow\exp(\bty)$ and $\bz\leftrightarrow\exp(\btz)$. Therefore, to prove the above inequality, it is enough to show that
\begin{eqnarray}
	\label{eq:lhs1}
	\adjustlimits\sup_{\balpha\in [0,\deg p]^n}\inf_{\bty,\btz\in\R^n}\frac{p(\exp(\bty),\balpha\exp(\btz))}{\exp(\bty)^\balpha\exp(\btz)^\balpha} \geq \threeops\inf{\bty\in\R^n}\sup{\balpha\in[0,\deg p]^n}\inf{\btz\in\R^n}\frac{p(\exp(\bty),\balpha\exp(\btz))}{\exp(\bty)^\balpha\exp(\btz)^\balpha}.
\end{eqnarray}
	
We  use Sion's minimax theorem. We state a weaker form of this theorem below, which will be enough for our purposes; for the original stronger version, look at \cite{Sion58}.

\begin{theorem}[Immediate corollary of \cite{Sion58}]
	\label{thm:sion}
	Let $A$ be a compact convex subset of $\R^n$ and $B$ a (not necessarily compact) convex subset of $\R^m$. Suppose that $f:A\times B\to\R$ has the following properties:
	\begin{itemize}
		\item For each $a\in A$, the function $f(a,\cdot):B\to\R$ is continuous and quasi-convex.
		\item For each $b\in B$, the function $f(\cdot,b):A\to\R$ is continuous and quasi-concave.
	\end{itemize}
	Then we have
	\begin{equation*}
		\adjustlimits\sup_{a\in A}\inf_{b\in B}f(a,b)=\adjustlimits\inf_{b\in B}\sup_{a\in A}f(a,b).
	\end{equation*}
\end{theorem}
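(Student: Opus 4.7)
The inequality $\sup_{a\in A}\inf_{b\in B}f(a,b)\leq \inf_{b\in B}\sup_{a\in A}f(a,b)$ is automatic, since for any $a_0,b_0$ we have $\inf_b f(a_0,b)\leq f(a_0,b_0)\leq \sup_a f(a,b_0)$, and taking $\sup$ over $a_0$ on the left and $\inf$ over $b_0$ on the right preserves the inequality. So the entire content of the theorem is the reverse inequality. I propose to prove this by contradiction: assume there is a real number $c$ with
\[
\sup_{a\in A}\inf_{b\in B}f(a,b) < c < \inf_{b\in B}\sup_{a\in A}f(a,b),
\]
and aim to derive an impossibility from the convexity and continuity hypotheses.

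The plan has two main ingredients. First, a compactness reduction: for each $a\in A$ the condition on the left forces some $b\in B$ with $f(a,b)<c$, and by continuity of $f(\cdot,b)$ the sublevel set $\{a\in A: f(a,b)<c\}$ is open in $A$. As $b$ varies over $B$ these open sets cover $A$, so by compactness there are finitely many $b_1,\dots,b_k\in B$ such that for every $a\in A$, $\min_{1\leq i\leq k} f(a,b_i)<c$. Equivalently, the closed convex sets $K_i:=\{a\in A: f(a,b_i)\geq c\}$ (closed by continuity, convex by quasi-concavity of $f(\cdot,b_i)$) satisfy $K_1\cap\cdots\cap K_k=\emptyset$. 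Second, an inductive collapse: I will show, by induction on $k$, that one can replace $\{b_1,\dots,b_k\}$ by a single point $b^\star\in\mathrm{conv}(b_1,\dots,b_k)\subseteq B$ with $\sup_{a\in A} f(a,b^\star)\leq c'$ for any prescribed $c'>c$. Taking $c'<\inf_b\sup_a f(a,b)$ yields the required contradiction.

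The inductive step is where all the hypotheses are used, and this is what I expect to be the main obstacle. The case $k=1$ is immediate. For $k\geq 2$, look at the segment $b_\lambda:=(1-\lambda)b_1+\lambda b_2$, $\lambda\in[0,1]$, and consider the two closed convex sets $K_1,K_2\subseteq A$; by the quasi-convexity of $f(a,\cdot)$ in $b$, the set $\{a: f(a,b_\lambda)\geq c\}$ is contained in $K_1\cup K_2$ for every $\lambda$. The function $\lambda\mapsto \sup_{a\in A} f(a,b_\lambda)$ is quasi-convex (as a sup of quasi-convex functions of $\lambda$) and lower semicontinuous, so its infimum over $[0,1]$ is attained at some $\lambda^\star$. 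Using a connectedness argument on $[0,1]$ together with the convexity (from quasi-concavity in $a$) of the closed sets $K_1, K_2$, one shows that for a suitable $\lambda^\star$ one can replace the pair $\{b_1,b_2\}$ by the single point $b_{\lambda^\star}$ and still have that for every $a\in A$, either $f(a,b_{\lambda^\star})<c'$ or $\min_{i\geq 3} f(a,b_i)<c$; this reduces the cover from size $k$ to size $k-1$ (with a slightly relaxed threshold $c'$). Iterating this reduction, and absorbing the finitely many $c'$-relaxations into a single loss that can be made arbitrarily small, yields a single $b^\star\in\mathrm{conv}(b_1,\dots,b_k)\subseteq B$ with $\sup_a f(a,b^\star)<\inf_b\sup_a f(a,b)$, the desired contradiction. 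The delicate point is the connectedness/intermediate-value argument in the $k=2$ reduction: it is here that one must carefully exploit both quasi-concavity (ensuring each $K_i$ is convex, hence connected) and continuity (ensuring the membership of $a$ in $K_1$ versus $K_2$ depends continuously enough on $\lambda$) to produce the crossover parameter $\lambda^\star$.
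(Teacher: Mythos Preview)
The paper does not prove this theorem at all: it is stated as an ``immediate corollary of \cite{Sion58}'' and simply invoked as a black box in the proof of \autoref{lem:minimax}. So there is no paper proof to compare against; your proposal is an attempt to reprove Sion's theorem itself, and it follows the classical outline (compactness reduction to finitely many $b_i$, then an inductive collapse of two points on a segment via a connectedness argument), essentially the elementary proof due to Sion and later streamlined by Komiya.

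Your sketch is correct in spirit, but one point is muddled. You write that $\lambda\mapsto \sup_{a\in A} f(a,b_\lambda)$ is quasi-convex and lower semicontinuous, so its infimum on $[0,1]$ is attained at some $\lambda^\star$, and then seem to use this $\lambda^\star$ as the collapsed point. Minimizing that envelope is not what does the work: the minimizer of $\sup_a f(a,b_\lambda)$ need not have the property you want. The actual mechanism is the one you allude to afterwards: with $L:=K_3\cap\cdots\cap K_k$, the convex set $M_\lambda\cap L$ (where $M_\lambda:=\{a:f(a,b_\lambda)\geq c\}$) is connected and contained in the disjoint union $(K_1\cap L)\cup(K_2\cap L)$, so it lies entirely in one piece; the sets of $\lambda$ for which it lies in each piece are closed (this is where continuity of $f(a,\cdot)$ and the $c'>c$ slack are used), they cover $[0,1]$, and their intersection consists of $\lambda$ with $M_\lambda\cap L=\emptyset$. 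That $\lambda^\star$ is the one you take. If you tighten the write-up to make this the explicit argument (and drop the spurious ``minimize the envelope'' step), the proof goes through.
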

Recall that a real-valued function $f$ is quasi-convex if $f^{-1}(-\infty, t]$ is convex for any $t\in\R$ and $f$ is quasi-concave if $-f$ is quasi-convex.

We will apply \autoref{thm:sion} to the following function.
\begin{equation*}
	f(\balpha,(\bty,\btz)):=\frac{p(\exp(\bty),\balpha\exp(\btz))}{\exp(\bty)^\balpha\exp(\btz)^\balpha}.
\end{equation*}
For a fixed $\balpha\geq 0$, the function $f$ is log-convex, and therefore quasi-convex, in $\bty,\btz$ because of \autoref{fact:logconvex}. For a fixed $\bty,\btz$, the function $f$ is log-concave, and therefore quasi-concave, in $\balpha$ because of \autoref{cor:guler}. It is easy to see that $f$ is continuous in each variable as well. Therefore, by \autoref{thm:sion} we have
\begin{eqnarray*}
	\adjustlimits\sup_{\balpha\in [0,\deg p]^n}\inf_{\bty,\btz\in\R^n}\frac{p(\exp(\bty),\balpha\exp(\btz))}{\exp(\bty)^\balpha\exp(\btz)^\balpha}&=&\adjustlimits\inf_{\bty,\btz\in\R^n}\sup_{\balpha\in [0,\deg p]^n}\frac{p(\exp(\bty),\balpha\exp(\btz))}{\exp(\bty)^\balpha\exp(\btz)^\balpha}\\
	&\geq&\threeops\inf{\bty\in\R^n}\sup{\balpha\in[0,\deg p]^n}\inf{\btz\in\R^n}\frac{p(\exp(\bty),\balpha\exp(\btz))}{\exp(\bty)^\balpha\exp(\btz)^\balpha}.
\end{eqnarray*}
This proves \eqref{eq:lhs1} and finishes the proof of \autoref{lem:minimax}. 

\bibliographystyle{alpha}
\bibliography{ref2}
\end{document}